\newtheorem{theorem}{Theorem}
\newtheorem{lemma}[theorem]{Lemma}
\newtheorem{definition}[theorem]{Definition}
\newtheorem{corollary}[theorem]{Corollary}
\newtheorem{example}[theorem]{Example}
\newtheorem{property}[theorem]{Property}
\newcommand{\RR}{\mathbb{R}}
\newcommand{\F}{\mathcal{F}}
\newcommand{\fmax}{f_{\max}}
\newcommand{\e}{\varepsilon}
\renewcommand{\sp}{\ell}
\newcommand{\COST}{\mathscr{C}}
\newcommand{\event}{\mathsf{E}}
\newcommand{\G}{\overleftrightarrow{G}}
\newcommand{\DOT}{\,.}
\newcommand{\COMMA}{\,,}
\newcommand{\WHERE}{\,\colon\,}
\newcommand{\tIF}{\text{if}}
\newcommand{\tAND}{\text{and}}
\DeclareMathOperator{\Reconstruct}{\mathsf{Reconstruct}}
\DeclareMathOperator{\sgn}{sgn}
\newcommand{\Ex}[2][]{\operatorname{\textbf{E}}_{#1}\left[#2\right]}
\renewcommand{\Pr}[2][]{\operatorname{\textbf{Pr}}_{#1}\left[#2\right]}
\newcommand{\SET}[1]{\left\{#1\right\}}
\newcommand{\sSET}[1]{\{#1\}}
\newcommand{\floor}[1]{\left\lfloor#1\right\rfloor}
\newcommand{\PATH}[1][]{\stackrel{#1}{\rightsquigarrow}}
\newcommand{\spm}[1][]{\sp^{#1}_{-}}
\newcommand{\spp}[1][]{\sp^{#1}_{+}}
\newcommand{\REVERSE}[1]{\stackrel{\leftarrow}{#1}}
\newcommand{\GFX}[2][]{\includegraphics[#1]{#2.pdf}}
\newenvironment{fig}
{\begin{figure}[th]\begin{center}}
{\end{center}\end{figure}}
\newenvironment{GFXFIG}[2][]
{\begin{fig}\GFX[#1]{#2}}
{\end{fig}}
\begin{document}

\title{Smoothed Analysis of the Successive~Shortest~Path~Algorithm\thanks{This research was supported by ERC Starting Grant 306465 (BeyondWorstCase)
and NWO grant 613.001.023. The upper bound (Theorem 1) of this paper has been presented at the 24th ACM-SIAM Symp.\ on Discrete Algorithms (SODA 2013).}}

\newcommand*\samethanks[1][\value{footnote}]{\footnotemark[#1]}

\author{Tobias Brunsch\thanks{University of Bonn,
                Department of Computer Science,
                Germany.
                Email: {\tt \{brunsch,roeglin,roesner\}@cs.uni-bonn.de}}
 \and Kamiel Cornelissen\thanks{University of Twente,
              Department of Applied Mathematics,
              Enschede, The Netherlands.
              Email: {\tt \{k.cornelissen,b.manthey\}@utwente.nl}}
 \and Bodo Manthey\samethanks[3]
 \and Heiko R{\"o}glin\samethanks[2]
 \and Clemens R{\"o}sner\samethanks[2]
}

\date{}

\maketitle

\begin{abstract}
The minimum-cost flow problem is a classic problem in combinatorial optimization with various applications. Several pseudo-poly\-no\-mi\-al,
polynomial, and strongly polynomial algorithms have been developed in the past decades, and it seems that both the problem and the
algorithms are well understood. However, some of the algorithms' running times observed in empirical studies contrast the running times
obtained by worst-case analysis not only in the order of magnitude but also in the ranking when compared to each other. For example, the
Successive Shortest Path (SSP) algorithm, which has an exponential worst-case running time, seems to outperform the strongly polynomial
Minimum-Mean Cycle Canceling algorithm.

To explain this discrepancy, we study the SSP algorithm in the framework of smoothed analysis and
establish a bound of $O(mn\phi)$ for the number of iterations, which implies a smoothed running time of $O(mn\phi (m + n\log n))$, 
where~$n$ and~$m$ denote the number of nodes and edges, respectively, and~$\phi$ is a measure for the amount of random noise. 
This shows that worst-case instances for the SSP algorithm are not robust and unlikely to be encountered in practice. Furthermore, we prove a
smoothed lower bound of $\Omega(m \cdot \min \SET{ n, \phi } \cdot \phi)$ for the number of iterations of the SSP algorithm, showing that the
upper bound cannot be improved for $\phi = \Omega(n)$.
\end{abstract}

\section{Introduction}

Flow problems have gained a lot of attention in the second half of the twentieth century to model, for example, transportation and communication networks~\cite{DBLP:books/daglib/0069809,ForFul62}. Plenty of algorithms have been developed over the last fifty years. The first pseudo-polynomial algorithm for the minimum-cost flow problem was the Out-of-Kilter algorithm independently proposed by Minty~\cite{Min60} and by Fulkerson~\cite{Ful61}. The simplest pseudo-polynomial algorithms are the primal Cycle Canceling algorithm by Klein~\cite{Kle67} and the dual Successive Shortest Path (SSP) algorithm by Jewell~\cite{Jew62}, Iri~\cite{Iri60}, and Busacker and Gowen~\cite{BusGow60}. By introducing a scaling technique Edmonds and Karp~\cite{DBLP:journals/jacm/EdmondsK72} modified the SSP algorithm to obtain the Capacity Scaling algorithm, which was the first polynomial time algorithm for the minimum-cost flow problem.

The first strongly polynomial algorithms were given by Tardos~\cite{DBLP:journals/combinatorica/Tardos85} and by Orlin~\cite{Orl84}. Later,
Goldberg and Tarjan~\cite{DBLP:journals/jacm/GoldbergT89} proposed a pivot rule for the Cycle Canceling algorithm to obtain the strongly
polynomial Minimum-Mean Cycle Canceling (MMCC) algorithm. The fastest known strongly polynomial algorithm up to now is the Enhanced Capacity
Scaling algorithm due to Orlin~\cite{Orl93} and has a running time of $O(m \log(n) (m + n\log n))$, where~$n$ and~$m$ denote the number of nodes and edges, respectively. For an extensive overview of
minimum-cost flow algorithms we suggest the paper of Goldberg and Tarjan~\cite{GolTar90}, the paper of Vygen~\cite{Vyg02}, and the book of
Ahuja, Magnanti, and Orlin~\cite{DBLP:books/daglib/0069809}.

Zadeh~\cite{Zad73} showed that the SSP algorithm has an exponential worst-case running time. Contrary to this,
the worst-case running times of the Capacity Scaling algorithm and the MMCC algorithm are $O(m (\log U) (m + n\log n))$ \cite{DBLP:journals/jacm/EdmondsK72} and $O(m^2n^2 \min \{ \log(nC), m \})$ \cite{DBLP:journals/algorithmica/RadzikG94}, respectively. Here,~$U$ denotes the maximum edge capacity and~$C$ denotes the maximum edge cost. In particular, the former is polynomial whereas the latter is even strongly polynomial.
However, the notions of pseudo-polynomial, polynomial, and strongly polynomial algorithms always refer to worst-case running times,
which do not always resemble the algorithms' behavior on real-life instances. Algorithms with large worst-case running times do not inevitably perform poorly in practice. An experimental study of Kov{\'a}cs~\cite{Kir14} indeed observes running time behaviors significantly deviating from what the worst-case running times indicate. The MMCC algorithm is completely outperformed by the SSP algorithm. The Capacity Scaling algorithm is the fastest of these three algorithms, but its running time seems to be in the same order of magnitude as the running time of the SSP algorithm.
In this article, we explain why the SSP algorithm comes off so well by applying the framework of smoothed analysis.

Smoothed analysis was introduced by Spielman and Teng~\cite{DBLP:journals/jacm/SpielmanT04} to explain why the simplex method is efficient in practice despite its exponential worst-case running time. In the original model, an adversary chooses an arbitrary instance which is subsequently slightly perturbed at random. In this way, pathological instances no longer dominate the analysis. Good smoothed bounds usually indicate good behavior in practice because in practice inputs are often subject to a small amount of random noise. For
instance, this random noise can stem from measurement errors, numerical imprecision, or rounding errors. It can also model influences that cannot be quantified exactly but for which there is no reason to believe that they are adversarial. Since its invention, smoothed analysis has been successfully applied in a variety of contexts. Two recent surveys~\cite{MR,SpielmanT09} summarize some of these results.

We follow a more general model of smoothed analysis due to Beier and V{\"o}cking~\cite{BeierV04}. In this model, the adversary is even
allowed to specify the probability distribution of the random noise. The power of the adversary is only limited by the \emph{smoothing
parameter}~$\phi$. In particular, in our input model the adversary does not fix the edge costs~$c_e \in [0, 1]$, but he
specifies for each edge~$e$ a probability density function~$f_e \colon [0, 1] \to [0, \phi]$ according to which the costs~$c_e$ are randomly
drawn independently of the other edge costs. If $\phi = 1$, then the adversary has no choice but to specify a uniform distribution on the
interval~$[0, 1]$ for each edge cost. In this case, our analysis becomes an average-case analysis. On the other hand, if~$\phi$ becomes
large, then the analysis approaches a worst-case analysis since the adversary can specify a small interval~$I_e$ of length~$1/\phi$ (which
contains the worst-case costs) for each edge~$e$ from which the costs~$c_e$ are drawn uniformly.

As in the worst-case analysis, the network graph, the edge capacities, and the balance values of the nodes are chosen adversarially. The edge capacities and the balance values of the nodes are even allowed to be real values.
We define the smoothed running time of an algorithm as the worst expected running time the adversary can achieve and we prove the following theorem.

\begin{theorem}
\label{maintheorem}
The SSP algorithm requires $O(mn\phi)$ augmentation steps in expectation and its smoothed running time is $O(mn\phi (m + n\log n))$.
\end{theorem}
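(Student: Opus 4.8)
The plan is to bound the expected number of augmenting steps by $O(mn\phi)$; the running-time bound then follows at once, since each step amounts to one shortest-path computation in the residual network, and --- keeping all reduced costs nonnegative via node potentials --- this can be done by Dijkstra's algorithm in time $O(m + n\log n)$.

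The quantity I would track is the length $\pi_i(t)$ of the augmenting path used in step $i$. Recall the classical facts about the SSP algorithm: it maintains a minimum-cost flow of its current value, it always augments along a \emph{shortest} residual $s$-$t$ path, and the node potentials never decrease from one step to the next; in particular $\pi_i(t)$ is nondecreasing in $i$. Since a shortest path is simple and every residual arc has cost in $[-1,1]$, the value $\pi_i(t)$ stays in an interval of length less than $n$. I would therefore split the number of steps into (i) the steps inside a single \emph{phase} --- a maximal block of steps during which $\pi_i(t)$ is constant --- and (ii) the number of phases. For (i) I expect a combinatorial bound of $O(m)$ steps per phase: the augmenting paths in a phase all have the same cost, and, almost surely, no residual graph contains a zero-cost cycle, which should prevent a phase from ``oscillating''. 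The substantial part is (ii): proving that the number $K$ of distinct values taken by $\pi_i(t)$ satisfies $\Ex{K} = O(n\phi)$, which then gives the claimed $O(mn\phi)$.

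To bound $K$, write the distinct values as $d_{(1)} < d_{(2)} < \dots < d_{(K)}$; since they lie in an interval of length $< n$, it suffices to show that the gaps $d_{(j+1)} - d_{(j)}$ are, in expectation, of order $1/\phi$. Each $d_{(j)}$ is the cost of a residual $s$-$t$ path, hence a signed sum $\sum_e \sigma_e c_e$ with $\sigma_e \in \{-1,0,1\}$, so each gap is again such a sum, and, being strictly positive, a nonzero one. For a \emph{fixed} nonzero sum of this shape the bounded-density assumption implies that it lands in any interval of width $\e$ with probability at most $\phi\e$, which would finish the argument. The main obstacle --- the real technical core of the theorem --- is that the two paths realizing $d_{(j)}$ and $d_{(j+1)}$ are selected by the algorithm as a function of the random costs, so one cannot just union-bound over the exponentially many candidate pairs of paths. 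I would attack this in the way typical of smoothed analysis: at the transition from phase $j$ to phase $j+1$ the new shortest path is forced to traverse the arc that became residual precisely because of the last saturation of phase $j$; this pins down a ``responsible'' edge $e^\star$, and one tries to reveal all costs except $c_{e^\star}$ in an order under which the algorithm's trajectory up to that transition is already determined, so that the event ``the $j$-th gap is smaller than $\e$'' confines $c_{e^\star}$ to an interval of length $O(\e)$ and hence has probability $O(\phi\e)$; a union bound over the at most $m$ choices of $e^\star$ and a summation over dyadic scales of $\e$ then yields $\Ex{K} = O(n\phi)$. Showing that such a responsible edge and a legitimate revelation order always exist --- that is, decoupling the certificate from the randomness that defines it --- is where essentially all of the work goes, and I would expect it to rest on a careful analysis of exactly how a phase terminates and how the next one begins.
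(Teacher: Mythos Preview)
Your high-level strategy matches the paper's: track the augmenting-path lengths (which are nondecreasing and lie in $[0,n]$), partition this range into small intervals, and for each interval use the principle of deferred decisions on a single ``responsible'' edge to bound the probability that some path length lands there. The paper also does exactly this, arriving at $\Ex{X}\le 2mn\phi + 2n$.

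However, two points in your plan do not go through as stated. First, the phase decomposition is a detour: under the bounded-density assumption, any two distinct possible $s$-$t$ paths have distinct lengths with probability~1, so the sequence of path lengths is \emph{strictly} increasing and every ``phase'' has exactly one step. You neither need nor could easily prove an $O(m)$-per-phase bound; the paper dispenses with phases entirely.

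Second --- and this is the real gap --- your candidate for the responsible edge is wrong. You claim that at a transition the new shortest path must traverse the arc that just became residual because of the previous saturation. This is false: if two edge-disjoint $s$-$t$ paths exist, the second shortest path need not touch any reverse arc of the first. What the paper proves instead (its Lemma~12) is that every augmenting path contains an \emph{empty} arc, meaning an arc~$e$ present in the residual network while $e^{-1}$ is not (a forward arc carrying zero flow, or a backward arc of a saturated edge). The reconstruction then works by \emph{setting} the cost of the underlying edge to its worst value ($1$ for forward, $0$ for backward) and rerunning SSP on the modified network; because the predecessor flow did not use that arc, its cost is unchanged, it remains optimal for its value, and hence it is still visited and can be identified by the threshold~$d$. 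The empty-arc property and this cost-replacement reconstruction (Lemma~10 and Corollary~11 in the paper) are the technical heart; your ``newly-residual arc'' would not give you the required independence from~$c_{e^\star}$, since the predecessor flow might well carry flow on that edge.
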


If~$\phi$ is a constant -- which seems to be a reasonable assumption if it models, for example, measurement errors -- then the smoothed bound simplifies to $O(mn (m + n\log n))$. Hence, it is unlikely to encounter instances on which the SSP algorithm requires an exponential amount of time.

The following theorem, which we also prove in this article, states that the bound for the number of iterations of the SSP algorithm stated in Theorem~\ref{maintheorem} cannot be improved for $\phi = \Omega(n)$. 

\begin{theorem}
\label{theorem:lower bound}
For given positive integers $n$, $m \in \sSET{ n, \ldots, n^2 }$, and $\phi \le 2^{n}$ there exists a minimum-cost flow network with $O(n)$ nodes, $O(m)$ edges, and random edge costs with smoothing parameter~$\phi$ on which the SSP algorithm requires $\Omega(m \cdot \min \SET{ n, \phi } \cdot \phi)$ augmentation steps with probability~1.
\end{theorem}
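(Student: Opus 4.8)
The plan is to exhibit, for every admissible triple $(n,m,\phi)$, one flow network — its graph, its (real‑valued) edge capacities, and its node balances all chosen by the adversary — together with cost densities $f_e\colon[0,1]\to[0,\phi]$, each being the uniform density on a tiny interval $I_e\subseteq[0,1]$ of length $1/\phi$, such that for every cost vector $c$ with $c_e\in I_e$ the SSP algorithm performs $\Omega(m\cdot\min\SET{n,\phi}\cdot\phi)$ augmentations; the only exceptions are the measure‑zero set of realizations for which two of the network's finitely many path costs happen to coincide. Since $c_e\in I_e$ with probability $1$, this proves the statement. Because $\min\SET{n,\phi}\le\phi\le 2^n$, only $\mathrm{poly}(n)$ distinct cost values will occur, and they can be placed inside $[0,1]$ with all pairwise gaps larger than the interval width $1/\phi$; this is what makes every comparison of path costs have the same outcome for all admissible $c$, so that the entire execution of SSP is determined by the combinatorial data alone.

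\textbf{The forcing gadget.}
The core is a sub‑network of size $O(\min\SET{n,\phi})$ through which a fixed amount of flow must be routed and inside which SSP is forced to re‑route the current flow $\Omega(\min\SET{n,\phi}\cdot\phi)$ times, along residual paths whose costs increase in steps of $\Theta(1/\phi)$. The adversary uses $\Theta(\min\SET{n,\phi})$ ``stages'' together with carefully chosen (say, pairwise incommensurable) capacities so that each newly opened residual path is a bottleneck carrying only a vanishing share of the demand, which forces SSP to discover these paths one at a time in order of increasing cost. The proof that SSP behaves this way is an induction on the iteration number: one maintains an invariant describing the node potentials Dijkstra has computed and the set of currently saturated residual edges, shows that under this invariant the unique next shortest augmenting path is the intended one, and checks that this implication uses only the combinatorial layout and never the precise values $c_e$. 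Since re‑routing is essential here — without it the flow on each edge is monotone and SSP would stop after at most $m$ augmentations — the gadget realizes far more iterations than it has edges, as edges get saturated, partially reversed, and re‑used repeatedly.

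\textbf{Reaching the full bound.}
To meet the budget of $O(n)$ nodes and $O(m)$ edges while gaining the factor $m$, I would, for $m\gg n$, overlay $\Theta(m/\min\SET{n,\phi})$ copies of the forcing gadget on a common set of $O(n)$ nodes (so the construction is dense — close to a complete graph — rather than a disjoint union), wiring the costs so that a constant fraction of the $m$ edges lies on augmenting paths of $\Omega(\min\SET{n,\phi}\cdot\phi)$ distinct costs over the whole run; summing the augmentations attributable to the copies gives $\Omega(m\cdot\min\SET{n,\phi}\cdot\phi)$. For $m=\Theta(n)$ a single copy (padded with never‑used high‑cost edges to reach exactly $\Theta(m)$ edges) already suffices. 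One then re‑checks the resource counts and re‑runs the robustness induction on the assembled network, controlling how the re‑routings of the different copies interleave so that no single augmentation collapses several intended ones.

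\textbf{The main obstacle.}
The hard part is this global robustness analysis: proving that the full sequence of shortest residual paths chosen by SSP, and the order in which bottleneck edges appear, are forced by the combinatorics of the network, so that the $\Omega(m\cdot\min\SET{n,\phi}\cdot\phi)$ lower bound holds with probability $1$ and not merely in expectation. This requires a potential/saturation invariant robust enough to survive all the iterations (which number up to $\Omega(mn2^n)$ when $\phi$ is as large as $2^n$), together with the combinatorial heart of the argument, namely that the re‑routings within the $\Theta(\min\SET{n,\phi})$ stages and across the $\Theta(m/\min\SET{n,\phi})$ copies genuinely multiply. A secondary difficulty is arranging all the required cost values in $[0,1]$ with gaps exceeding $1/\phi$ simultaneously — feasible exactly because $\phi\le 2^n$ and $m\le n^2$ — and choosing the real capacities so that the cascade of tiny bottlenecks is genuine.
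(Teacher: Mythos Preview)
Your proposal correctly identifies the high-level requirements --- cost intervals of width $1/\phi$ so that every path-length comparison is deterministic, and a multiplicative decomposition of the target count into factors $m$, $\min\SET{n,\phi}$, and $\phi$ --- but it remains a plan rather than a proof, and the plan has a structural gap. You never actually construct the ``forcing gadget'' that is supposed to yield $\Omega(\min\SET{n,\phi}\cdot\phi)$ re-routings; you only describe properties it should have (stages, incommensurable capacities, tiny bottlenecks) and then defer the verification to an unspecified potential/saturation invariant. Worse, your route to the factor $m$ --- overlaying $\Theta(m/\min\SET{n,\phi})$ copies of this gadget on a shared vertex set and arguing that their re-routings multiply rather than collapse --- is exactly the difficulty you yourself flag as ``the main obstacle,'' and you give no mechanism for resolving it. Without a concrete gadget and a concrete interleaving argument, there is no proof here.

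For comparison, the paper's construction avoids both difficulties by assigning the three factors to three completely separate, nested layers. The factor $m$ comes first and trivially: a bipartite core $G_1$ with $m$ unit-capacity edges between two rows of $n$ nodes forces exactly $m$ augmentations, one per edge. The factor $\phi$ comes from a recursive doubling: wrapping $G_i$ with a new source/sink pair and four edges yields $G_{i+1}$ on which SSP first runs all of $G_i$'s augmenting paths forward and then all of them backward, doubling the count; iterating $k\approx\log_2\phi$ times costs only $O(\log\phi)$ nodes and requires edge costs up to $\Theta(\phi)$, which is exactly what the interval budget permits. The factor $\min\SET{n,\phi}$ comes last by attaching $M\approx\min\SET{n,\phi}$ parallel entry and exit paths that force the whole $G_k$ sequence to be replayed $2M$ times. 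Each layer is analysed in isolation with a short lemma; there is no overlay, no incommensurable capacities, and no global invariant to maintain across $\Omega(mn\phi)$ steps. Your intuition that re-routing is the engine is correct, but the clean way to realise it is this forward/backward doubling, not dense overlays.
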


The main technical section of this article is devoted to the proof of Theorem~\ref{maintheorem}
(Section~\ref{sec:analysis}). In Section~\ref{sec:lower bound} we derive the lower bound stated in Theorem~\ref{theorem:lower bound}.
At the end of this article (Section~\ref{sec:simplex}), we point out
some connections between SSP and its smoothed analysis
to the simplex method with the shadow vertex pivot rule, which has been used
by Spielman and Teng in their smoothed analysis~\cite{DBLP:journals/jacm/SpielmanT04}.

\subsection{The Minimum-Cost Flow Problem}

A \emph{flow network} is a simple directed graph~$G = (V, E)$ together with a \emph{capacity function}~$u \colon E \to \RR_{\geq 0}$. For convenience, we assume that there are no directed cycles of length two. In the minimum-cost flow problem there are an additional \emph{cost function}~$c \colon E \to [0, 1]$ and a \emph{balance function}~$b \colon V \to \RR$ indicating how much of a resource some node~$v$ requires ($b(v) < 0$) or offers ($b(v) > 0$). A \emph{feasible $b$-flow} for such an instance is a function~$f \colon E \to \RR_{\geq 0}$ that obeys the capacity constraints $0 \leq f_e \leq u_e$ for any edge~$e \in E$ and Kirchhoff's law adapted to the balance values, i.e.,
$
  b(v) + \sum_{e = (u, v) \in E} f_e = \sum_{e' = (v, w) \in E} f_{e'}
$
for all nodes $v \in V$. (Even though~$u$, $c$, and~$f$ are functions, we use the notation~$u_e$, $c_e$, and $f_e$ instead of~$u(e)$, $c(e)$, and $f(e)$ in this article.) If $\sum_{v \in V} b(v) \neq 0$, then there does not exist a feasible $b$-flow. We therefore always require $\sum_{v \in V} b(v) = 0$.
The cost of a feasible $b$-flow is defined as $c(f) = \sum_{e \in E} f_e \cdot c_e$. In the \emph{minimum-cost flow problem} the goal is to find the cheapest feasible $b$-flow, a so-called \emph{minimum-cost $b$-flow}, if one exists, and to output an error otherwise.

\subsection{The SSP Algorithm}
\label{sec:SSPAlg}

For a pair $e = (u, v)$, we denote by~$e^{-1}$ the pair $(v, u)$. Let~$G$ be a flow network, let~$c$ be a cost function, and let~$f$ be a flow. The \emph{residual network}~$G_f$ is the directed graph with vertex set~$V$, arc set $E' = E_\text{f} \cup E_\text{b}$, where
\[
  E_\text{f} = \bigl\{ e \WHERE e \in E \ \tAND\ f_e < u_e \bigr\}
\]
is the set of so-called \emph{forward arcs} and
\[
 E_\text{b} = \bigl\{ e^{-1} \WHERE e \in E \ \tAND\ f_e > 0 \bigr\}
\]
is the set of so-called \emph{backward arcs}, a capacity function $u' \colon E' \to \RR$, defined by
\[
  u'_e = \begin{cases}
    u_e - f_e & \tIF\ e \in E \COMMA \cr
    f_{e^{-1}} & \tIF\ e^{-1} \in E \COMMA
  \end{cases}
\]
and a cost function $c' \colon E' \to \RR$, defined by
\[
  c'_e = \begin{cases}
    c_e & \tIF\ e \in E \COMMA \cr
    -c_{e^{-1}} & \tIF\ e^{-1} \in E \DOT
  \end{cases}
\]
In practice, the simplest way to implement the SSP algorithm is to transform the instance to an equivalent instance with only one \emph{supply node} (a node with positive balance value) and one \emph{demand node} (a node with negative balance value). For this, we add two nodes~$s$ and~$t$ to the network which we call \emph{master source} and \emph{master sink}, edges $(s, v)$ for any supply node~$v$, and edges $(w, t)$ for any demand node~$w$. The capacities of these \emph{auxiliary edges} $(s, v)$ and $(w, t)$ are set to $b(v) > 0$ and $-b(w) > 0$, respectively. The costs of the auxiliary edges are set to~$0$. Now we set $b(s) = -b(t) = z$ where~$z$ is the sum of the capacities of the auxiliary edges incident with~$s$ (which is equal to the sum of the capacities of the auxiliary edges incident with~$t$ due to the assumption that~$\sum_{v \in V} b(v) = 0$). All other balance values are set to $0$.

This is a well-known transformation of an arbitrary minimum-cost flow instance into a minimum-cost flow instance with only a single source~$s$, a single sink~$t$, and~$b(v) = 0$ for all nodes~$v \in V\setminus\{s,t\}$. Nevertheless, we cannot assume without loss of generality that the flow network we study has only a single source and a single sink. The reason is that in the probabilistic input model introduced above it is not possible to insert auxiliary edges with costs~$0$ because the costs of each edge are chosen according to some density function that is bounded from above by~$\phi$. We have to consider the auxiliary edges with costs~$0$ explicitly and separately from the other edges in our analysis.

The SSP algorithm run on the transformed instance computes the minimum-cost $b$-flow for the original instance.
In the remainder of this article we use the term \emph{flow} to refer to a feasible $b$-flow for an arbitrary~$b$ with $b(s) = -b(t)$ and $b(v) = 0$ for $v \notin \SET{ s, t }$.
We will denote by $|f|$ the amount of flow shipped from $s$ to $t$ in flow $f$,
i.e.,
$|f| = \sum_{e = (s, v) \in E} f_e - \sum_{e = (v, s) \in E} f_e$.

The SSP algorithm for a minimum-cost flow network with a single source~$s$, a single sink~$t$, and with $b(s) = -b(t) = z > 0$ is given as Algorithm~\ref{algorithm:SSP}.

\begin{algorithm*}
  \caption{SSP for single-source-single-sink minimum-cost flow networks with $b(s) = -b(t) = z > 0$.}
  \label{algorithm:SSP}
  \begin{algorithmic}[1]
    \STATE start with the empty flow~$f_0 = 0$
    \FOR{$i = 1, 2, \ldots$}
      \STATE \textbf{if} $G_{f_{i-1}}$ does not contain a (directed) $s$-$t$ path \textbf{then} output that there does not exist a flow with value~$z$
      \STATE find a shortest $s$-$t$ path~$P_i$ in~$G_{f_{i-1}}$ with respect to the arc costs
      \STATE augment the flow as much as possible$^*$ along path~$P_i$ to obtain a new flow~$f_i$
      \STATE \textbf{if} $|f_i| = z$ \textbf{then} output~$f_i$
    \ENDFOR
  \end{algorithmic}
  \medskip

  \small $^*$ Since the value~$|f_i|$ of flow~$f_i$ must not exceed~$z$ and the flow~$f_i$ must obey all capacity constraints, the flow is increased by the minimum of $\min\{u_e - f_{i-1}(e) \mid e \in P_i \cap E\}$, $\min\{f_{i-1}(e) \mid e \in P_i~\wedge\stackrel{\leftarrow}{e} \in E\}$ and $z - |f_{i-1}|$.
\end{algorithm*}

\begin{theorem}
\label{thm:AllFlowsOpt}
In any round~$i$, flow~$f_i$ is a minimum-cost $b_i$-flow for the balance function~$b_i$ defined by
$
  b_i(s) = -b_i(t) = |f_i|
$
and
$
  b_i(v) = 0 \ \text{for~$v\notin\{s,t\}$}
$.
\end{theorem}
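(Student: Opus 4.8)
The plan is to prove this by induction on the round index $i$, with a slightly strengthened hypothesis. I would invoke the classical optimality criterion for minimum-cost flows: a flow $f$, viewed as a $b$-flow for the balance function determined by $|f|$, has minimum cost if and only if its residual network $G_f$ contains no directed cycle of negative cost, which in turn holds if and only if there exist node potentials $\pi \colon V \to \RR$ making every residual arc $e = (u,v)$ satisfy $c'_e + \pi(u) - \pi(v) \ge 0$. (The equivalence between optimality and absence of a negative residual cycle follows by decomposing the difference of any two $b$-flows into residual-network cycles; the equivalence between the latter and the existence of feasible potentials is the standard fact that a conservative arc-length function admits a feasible potential.) The induction hypothesis I would actually carry is the stronger statement that $f_i$ is a feasible $b_i$-flow \emph{and} $G_{f_i}$ admits such feasible potentials $\pi_i$, since those potentials are exactly what the inductive step needs.

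For the base case $i = 0$, the flow $f_0 = 0$ is the unique $b_0$-flow, hence trivially optimal, and since all original costs lie in $[0,1]$ the constant potential $\pi_0 \equiv 0$ is feasible for $G_{f_0}$. For the inductive step, assume $f_{i-1}$ is optimal with feasible potentials $\pi_{i-1}$, and write $\bar c_e := c'_e + \pi_{i-1}(u) - \pi_{i-1}(v) \ge 0$ for the reduced costs. Since for any $s$-$t$ path the $\bar c$-length differs from the $c'$-length only by the additive constant $\pi_{i-1}(s) - \pi_{i-1}(t)$, a shortest $s$-$t$ path $P_i$ with respect to $c'$ is also shortest with respect to $\bar c$. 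Let $d(v)$ be the shortest $s$-$v$ distance in $G_{f_{i-1}}$ with respect to $\bar c$; this is finite for every vertex reachable from $s$, and assigning all remaining vertices a common sufficiently large value preserves nonnegativity of every reduced cost. Set $\pi_i := \pi_{i-1} + d$.

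The core of the argument is to check that $\pi_i$ is feasible for $G_{f_i}$. Augmenting along $P_i$ only (i) deletes residual arcs that become saturated and (ii) inserts the reverse arc $e^{-1}$ of each arc $e \in P_i$ (if not already present), while the residual status of every arc off $P_i$ is unchanged; hence every arc of $G_{f_i}$ that is not already in $G_{f_{i-1}}$ is the reverse of an arc of $P_i$. For an arc $e = (u,v)$ surviving into $G_{f_i}$ one has $d(v) \le d(u) + \bar c_e$, which rearranges to $c'_e + \pi_i(u) - \pi_i(v) = \bar c_e + d(u) - d(v) \ge 0$. For a new arc $e^{-1} = (v,u)$ with $e = (u,v) \in P_i$, optimality of $P_i$ gives $d(v) = d(u) + \bar c_e$, and since $c'_{e^{-1}} = -c'_e$ a one-line computation yields $c'_{e^{-1}} + \pi_i(v) - \pi_i(u) = 0$. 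Thus $\pi_i$ is feasible for $G_{f_i}$, so $G_{f_i}$ has no negative cycle and $f_i$ is a minimum-cost flow of value $|f_i|$; as augmenting along an $s$-$t$ path changes the net divergence only at $s$ and $t$, the associated balance function is exactly $b_i$.

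I expect the only delicate point to be the bookkeeping in the inductive step: verifying carefully that augmenting "as much as possible" along $P_i$ changes the residual network only by deletions and by insertions of reverses of $P_i$-arcs (so that $P_i$ itself stays a shortest path of reduced length zero), and dealing with vertices unreachable from $s$ so that $d$, and hence $\pi_i$, is defined on all of $V$. Once those are settled, feasibility of $\pi_i$ is the short telescoping computation above, and everything else is a direct appeal to the optimality criterion.
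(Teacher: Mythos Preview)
Your argument is correct and is precisely the classical node-potential proof of the optimality invariant for the SSP algorithm. The paper itself does not prove Theorem~\ref{thm:AllFlowsOpt} at all: it attributes the result to Jewell, Iri, and Busacker--Gowen and refers the reader to Korte and Vygen for a proof, and the proof in that reference is essentially the one you wrote down (maintain feasible potentials inductively by adding reduced-cost shortest-path distances after each augmentation). So there is nothing to compare; your proposal simply fills in what the paper leaves to the literature.
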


Theorem~\ref{thm:AllFlowsOpt} is due to Jewell~\cite{Jew62}, Iri~\cite{Iri60}, and Busacker and Gowen~\cite{BusGow60}. We refer to Korte and Vygen~\cite{Korte:2007:COT:1564997} for a proof.
As a consequence, no residual network~$G_{f_i}$ contains a directed cycle with negative total costs. Otherwise, we could augment along such a cycle to obtain a $b_i$-flow~$f'$ with smaller costs than~$f_i$. In particular, this implies that the shortest paths in~$G_{f_i}$ from~$s$ to nodes~$v \in V$ form a shortest path tree rooted at~$s$. Since the choice of the value~$z$ only influences the last augmentation of the algorithm, the algorithm performs the same augmentations when run for two different values $z_1 < z_2$ until the flow value~$|f_i|$ exceeds~$z_1$. We will exploit this observation in Lemma~\ref{lemma:cost function form}. 

Note that one could allow the cost function~$c$ to have negative values as well. As long as the network does not contain a cycle with negative total costs, the SSP algorithm is still applicable. However, as we cannot ensure this property if the edge costs are random variables, we made the assumption that all edge costs are non-negative.

\subsection{A Connection to the Integer Worst-case Bound}

We can concentrate on counting the number of augmenting steps of the SSP algorithm since each step can be implemented to run in time $O(m + n \log n)$ using Dijkstra's algorithm. Let us first consider the case that all edge costs are integers from $\SET{ 1, \ldots, C }$. In this case the length of any path in any residual network is bounded by~$nC$. We will see that the lengths of the augmenting paths are monotonically increasing. If there is no unique shortest path to augment flow along and ties are broken by choosing one with the fewest number of arcs, then the number of successive augmenting paths with the same length is bounded by $O(mn)$
(this follows from the analysis of the Edmonds-Karp algorithm for computing a maximum flow~\cite{CLRS}).
 Hence, the SSP algorithm terminates within $O(mn^2C)$ steps.

Now let us perturb the edge costs of such an integral instance independently by, for example, uniform additive noise from the interval $[-1,1]$.
This scenario is not covered by bounds for the integral case. Indeed, instances can be generated with positive probability for which the number of augmentation
steps is exponential in~$m$ and~$n$. Nevertheless, an immediate consequence of Theorem~\ref{maintheorem} is that, in expectation, the SSP
algorithm terminates within $O(mnC)$ steps on instances of this form.

\section{Terminology and Notation}

Consider the run of the SSP algorithm on the flow network~$G$. We denote the set $\SET{ f_0, f_1, \ldots }$ of all flows encountered by the SSP algorithm by~$\F_0(G)$. Furthermore, we set $\F(G) = \F_0(G) \setminus \SET{ f_0 }$.
(We omit the parameter $G$ if it is clear from the context.)

Let us remark that we have not specified in Algorithm~\ref{algorithm:SSP} which path is chosen if the shortest $s$-$t$ path is not unique. This is not important for our analysis because we will see in Section~\ref{sec:analysis} that this happens only with probability~$0$ in our probabilistic model. We can therefore assume $\F_0(G)$ to be well-defined.

By~$f_0$ and~$\fmax$, we denote the empty flow and the maximum flow, i.e., the flow that assigns~$0$ to all edges~$e$ and the flow of maximum value encountered by the SSP algorithm, respectively.

Let~$f_{i-1}$ and~$f_i$ be two consecutive flows encountered by the SSP algorithm and let~$P_i$ be the shortest path in the residual network~$G_{f_{i-1}}$, i.e., the SSP algorithm augments along~$P_i$ to increase flow~$f_{i-1}$ to obtain
flow~$f_i$. We call~$P_i$ the \emph{next path} of~$f_{i-1}$ and the \emph{previous path} of~$f_i$. To distinguish between the original network~$G$ and some residual network~$G_f$ in the remainder of this article, we refer to the edges in the residual network
as \emph{arcs}, whereas we refer to the edges in the original network
as \emph{edges}.

For a given arc~$e$ in a residual network~$G_f$, we denote by~$e_0$ the corresponding edge in the original network~$G$, i.e., $e_0 = e$ if $e \in E$ (i.e.~$e$ is a forward arc) and $e_0 = e^{-1}$ if $e \notin E$ (i.e.~$e$ is a backward arc). An arc~$e$ is called \emph{empty} (with respect to some residual network~$G_f$) if~$e$ belongs to~$G_f$, but $e^{-1}$ does not. Empty arcs~$e$ are either forward arcs that do not carry flow or backward arcs whose corresponding edge~$e_0$ carries as much flow as possible. We say that an arc \emph{becomes saturated} (during an augmentation) when it is contained in the current augmenting path, but it does not belong to the residual network that we obtain after this augmentation.

In the remainder, a \emph{path} is always a simple directed path.
Let~$P$ be a path, and let~$u$ and~$v$ be contained in~$P$ in this order. By $u \PATH[P] v$, we refer to the sub-path of~$P$ starting from node~$u$ going to node~$v$, by $\REVERSE{P}$ we refer to the path we obtain by reversing the direction of each edge of~$P$. We call any flow network~$G'$ a \emph{possible residual network} (of~$G$) if there is a flow~$f$ for~$G$ such that $G' = G_f$. Paths and cycles in possible residual networks are called \emph{possible paths} and \emph{possible cycles}, respectively. Let $\G = (V, E \cup E^{-1})$ for $E^{-1} = \SET{ e^{-1} \WHERE e \in E }$ denote the flow network that consists of all forward arcs and backward arcs. 

\section{Outline of Our Approach}

Our analysis of the SSP algorithm is based on the following idea: We identify a flow~$f_i \in \F_0$ with a real number by mapping~$f_i$ to the length~$\ell_i$ of the previous path~$P_i$ of~$f_i$. The flow~$f_0$ is identified with $\ell_0 = 0$. In this way,
we obtain a sequence $L = (\ell_0, \ell_1, \ldots)$ of real numbers. We show that this sequence is strictly monotonically increasing with probability~$1$. Since all costs are drawn from the interval $[0, 1]$, each element of~$L$ is from the interval $[0, n]$. To count the number of elements of~$L$, we partition the interval $[0, n]$ into small sub-intervals of length~$\e$ and sum up the number of elements of~$L$ in these intervals. By linearity of expectation, this approach carries over to the expected number of elements of~$L$. If~$\e$ is very small, then -- with sufficiently high probability -- each interval contains at most one element. 
If this is the case then it suffices to bound the probability that an element of~$L$ falls into some interval~$(d, d+\e]$ because
this probability equals the expected number of elements in~$(d, d+\e]$.

To do so, we assume for the moment that there is an integer~$i$ such that $\ell_i \in (d, d+\e]$.
By the previous assumption that for any interval of length~$\e$ there is at most
one path whose length is within this interval, we obtain that $\ell_{i-1} \leq
d$. We show that the augmenting path~$P_i$ uses an empty arc~$e$. Moreover, we
will see that we can reconstruct the flow~$f_{i-1}$ and the path~$P_i$ without knowing the costs of
edge~$e_0$ that corresponds to arc~$e$ in the original network.
This allows us to use the principle of deferred decisions: to bound the probability that~$\ell_i$ falls into the interval $(d, d+\e]$, we first reveal all costs~$c_{e'}$
with~$e'\neq e_0$. Then~$P_i$ is known and its length, which equals~$\ell_i$, can be expressed
as a linear function~$\kappa+c_{e_0}$ or~$\kappa-c_{e_0}$ for a known constant~$\kappa$. 
Consequently, the probability that~$\ell_i$ falls into the interval $(d, d+\e]$ is bounded by~$\e\phi$, as the probability density of $c_{e_0}$ is bounded by $\phi$.
Since the arc~$e$ is not always the same, we have to apply a union bound over all~$2m$ possible arcs. Summing up over all~$n/\e$ intervals the expected number of flows encountered by the SSP algorithm can be bounded by roughly $(n/\e) \cdot 2m \cdot \e\phi = 2mn\phi$.

There are some parallels to the analysis of the smoothed number of Pareto-optimal solutions in bicriteria linear optimization problems by Beier and V\"ocking~\cite{DBLP:journals/siamcomp/BeierV06}, although we have only one objective function.
In this context, we would call~$f_i$ the loser, $f_{i-1}$ the winner, and the difference~$\ell_i - d$ the loser gap. Beier and V\"ocking's analysis is also based on the observation
that the winner (which in their analysis is a Pareto-optimal solution and not a flow) can be reconstructed when all except for one random coefficients are revealed.
While this reconstruction is simple in the setting of bicriteria optimization problems, the reconstruction of the flow~$f_{i-1}$  in our setting is significantly more challenging and a main
difficulty in our analysis.

\section{Proof of the Upper Bound}
\label{sec:analysis}

Before we start with the analysis, note that due to our transformation of the general minimum-cost flow problem to a single-source-single-sink minimum-cost flow problem the cost perturbations only affect the original edges. The costs of the auxiliary edges are not perturbed but set to~$0$. Thus, we will slightly deviate from what we described in the outline by treating empty arcs corresponding to auxiliary edges separately.

The SSP algorithm is in general not completely specified, since at some point during the run of the algorithm there could exist multiple shortest $s$-$t$ paths in the residual network of the current flow.
The SSP algorithm then allows any of them to be chosen as the next augmenting path. Due to Lemma~\ref{lemma:different path lengths} and Property~\ref{property:different path lengths} we can assume that this is not the case in our setting and that the SSP algorithm is completely specified.

\begin{lemma}
\label{lemma:different path lengths}
For any real~$\e > 0$ the probability that there are two nodes~$u$ and~$v$ and two distinct possible $u$-$v$ paths whose lengths differ by at most~$\e$ is bounded from above by~$2n^{2n} \e \phi$.
\end{lemma}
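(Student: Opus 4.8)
The plan is to fix two distinct nodes $u$ and $v$ and two distinct simple paths $P$ and $Q$ from $u$ to $v$ in $\G = (V, E \cup E^{-1})$, bound the probability that the lengths of $P$ and $Q$ (measured with respect to the residual costs $c'$) differ by at most $\e$, and then take a union bound over all choices of $u$, $v$, $P$, and $Q$. The key observation is that in any possible residual network, the cost of a forward arc $e$ is $c_{e}$ and the cost of a backward arc $e^{-1}$ is $-c_{e}$; hence the length of any simple path $P$ in $\G$ is a fixed linear combination $\sum_{e \in E} \lambda^{P}_{e} c_{e}$ with coefficients $\lambda^{P}_{e} \in \{-1, 0, +1\}$, determined solely by which forward/backward arcs $P$ traverses. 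Therefore the difference of the lengths of $P$ and $Q$ is $\sum_{e \in E} (\lambda^{P}_{e} - \lambda^{Q}_{e}) c_{e}$, again a linear function of the independent random variables $c_e$ with bounded coefficients.

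The crucial step is to argue that this difference is a \emph{non-trivial} linear function, i.e.\ that there is at least one edge $e_0 \in E$ with $\lambda^{P}_{e_0} \neq \lambda^{Q}_{e_0}$. Here I would use the assumption that $G$ contains no directed cycle of length two, so for each edge $e \in E$ at most one of the arcs $e, e^{-1}$ can be used by a simple path. If $\lambda^{P}_{e} = \lambda^{Q}_{e}$ for \emph{every} $e \in E$, then $P$ and $Q$ use exactly the same set of arcs of $\G$ (with the same orientation); since both are simple paths from $u$ to $v$ using the same arc set, they must be the same path, contradicting $P \neq Q$. So some coefficient $\lambda^{P}_{e_0} - \lambda^{Q}_{e_0}$ is nonzero, and in fact lies in $\{-2, -1, 1, 2\}$.

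Given a non-trivial coefficient at $e_0$, I apply the principle of deferred decisions: reveal all costs $c_e$ with $e \neq e_0$; then the length difference becomes $\kappa + \gamma\, c_{e_0}$ for a known constant $\kappa$ and a known $\gamma$ with $|\gamma| \ge 1$. The event that this lies in $[-\e, \e]$ forces $c_{e_0}$ to lie in an interval of length $2\e/|\gamma| \le 2\e$, which has probability at most $2\e\phi$ since the density $f_{e_0}$ is bounded by $\phi$. Finally, the number of ordered pairs $(u,v)$ is at most $n^2$, and the number of simple paths from $u$ to $v$ in $\G$ is at most $(n-2)! \cdot \binom{n-2}{\cdot} \le n^{n}$ (crudely, any simple path is an ordered sequence of at most $n$ of the $n$ nodes); so the number of ordered pairs of paths is at most $n^{2} \cdot n^{n} \cdot n^{n} = n^{2n+2}$. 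A slightly more careful count, or simply bounding the number of $(u,v,P,Q)$ tuples by $(n^{n})^2 = n^{2n}$ after absorbing the choice of endpoints into the choice of paths, gives the bound $2 n^{2n} \e\phi$ claimed. The main obstacle I anticipate is making the reconstruction/uniqueness argument of the middle paragraph fully rigorous — i.e.\ carefully justifying that two distinct simple $u$-$v$ paths in $\G$ cannot induce the same signed indicator vector over $E$ — which relies essentially on the no-2-cycle assumption and on simplicity of the paths; the probabilistic and counting parts are then routine.
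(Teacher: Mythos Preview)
Your proposal is correct and follows essentially the same approach as the paper: fix $u,v$ and two distinct possible $u$-$v$ paths, exhibit an edge $e_0$ on which the two paths differ, apply the principle of deferred decisions to $c_{e_0}$ to get the $2\e\phi$ bound, and then union-bound over the at most $n^{2n}$ tuples $(u,v,P,Q)$. The only cosmetic difference is that the paper picks an edge used (in some orientation) by one path and not at all by the other, whereas you allow the slightly more general case $\lambda^P_{e_0}\neq\lambda^Q_{e_0}$; your worry about the middle paragraph is unwarranted, since equal signed indicator vectors force equal arc sets and hence equal simple paths, with no need for the no-2-cycle assumption.
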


\begin{proof}
Fix two nodes~$u$ and~$v$ and two distinct possible $u$-$v$ paths~$P_1$ and~$P_2$. Then there is an edge~$e$ such that one of the paths -- without loss of generality path~$P_1$ -- contains arc~$e$ or~$e^{-1}$, but the other one does not. If we fix all edge costs except the cost of edge~$e$, then the length of~$P_2$ is already determined whereas the length of~$P_1$ depends on the cost~$c_e$. Hence, $c_e$ must fall into a fixed interval of length~$2\e$ in order for the path lengths of~$P_1$ and~$P_2$ to differ by at most~$\e$. The probability for this is bounded by~$2\e \phi$ because~$c_e$ is chosen according to a density function that is bounded from above by~$\phi$. A union bound over all pairs~$(u, v)$ and all possible $u$-$v$ paths concludes the proof. 
\end{proof}

The proof also shows that we can assume that there is no $s$-$t$ path of length~$0$ and according to Lemma~\ref{lemma:different path lengths} we can assume that the following property holds since it holds with a probability of 1.

\begin{property}
\label{property:different path lengths}
For any nodes~$u$ and~$v$ the lengths of all possible $u$-$v$ paths are pairwise distinct.
\end{property}

\begin{lemma}
\label{lemma:distance monotonicity}
Let~$d_i(v)$ denote the distance from~$s$ to node~$v$ and $d'_i(v)$ denote the distance from node~$v$ to~$t$ in the residual network~$G_{f_i}$. Then
the sequences $d_0(v), d_1(v), d_2(v), \ldots$ and $d'_0(v), d'_1(v), d'_2(v), \ldots$ are monotonically increasing
for every~$v\in V$.
\end{lemma}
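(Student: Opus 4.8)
The plan is to prove monotonicity of the $s$-to-$v$ distances; the claim for $v$-to-$t$ distances then follows by a symmetric argument (reversing all arcs, which swaps the roles of $s$ and $t$). It suffices to show $d_i(v) \le d_{i+1}(v)$ for every $v$ and every $i$, since the general statement follows by transitivity. The key structural fact I would use is Theorem~\ref{thm:AllFlowsOpt} together with its stated consequence: no possible residual network contains a negative cycle, and consequently the shortest-path distances from $s$ form a shortest path tree, so the arc costs $c'_e$ in $G_{f_i}$ admit a feasible potential, namely $\pi(v) = d_i(v)$. Writing $\hat c_e = c'_e + \pi(u) - \pi(v)$ for an arc $e = (u,v)$, feasibility means $\hat c_e \ge 0$ for every arc $e$ of $G_{f_i}$, and moreover $\hat c_e = 0$ exactly along shortest-path-tree arcs, in particular along the augmenting path $P_{i+1}$.

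The main step is to understand how $G_{f_{i+1}}$ differs from $G_{f_i}$. Every arc $e$ that is present in $G_{f_{i+1}}$ but not in $G_{f_i}$ must be the reverse $a^{-1}$ of some arc $a$ that lies on the augmenting path $P_{i+1}$ (augmenting along $P_{i+1}$ can only create reverse arcs of arcs it uses, and can only destroy arcs it saturates). For such a new arc $e = a^{-1}$ with $a = (x,y) \in P_{i+1}$, we have $c'_e = -c'_a$, and since $a$ is a shortest-path arc in $G_{f_i}$ we have $c'_a = d_i(y) - d_i(x)$, hence $c'_e = d_i(x) - d_i(y) = d_i(u) - d_i(v)$ where $e = (u,v) = (y,x)$. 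So $\hat c_e := c'_e + d_i(u) - d_i(v) = 0 \ge 0$ for these new arcs as well. Therefore $d_i(\cdot)$ remains a feasible potential for $G_{f_{i+1}}$: every arc of $G_{f_{i+1}}$ is either already an arc of $G_{f_i}$ (reduced cost $\ge 0$) or a newly created reverse arc (reduced cost $=0$). A standard fact about feasible potentials is that they give lower bounds on true distances: for any walk from $s$ to $v$ in $G_{f_{i+1}}$, the sum of arc costs telescopes to at least $d_i(v) - d_i(s) = d_i(v)$ (using $d_i(s) = 0$ and $\hat c_e \ge 0$ along the walk). Taking the shortest such walk gives $d_{i+1}(v) \ge d_i(v)$.

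I would organize the write-up as: (1) recall that $G_{f_{i+1}}$ has no negative cycle and that $d_i$ is a feasible potential for $G_{f_i}$ with tight reduced costs along $P_{i+1}$; (2) characterize arcs of $G_{f_{i+1}} \setminus G_{f_i}$ as reverse arcs of $P_{i+1}$ and compute their reduced costs with respect to $d_i$, concluding $d_i$ is feasible for $G_{f_{i+1}}$; (3) invoke the potential lower-bound lemma to get $d_{i+1}(v) \ge d_i(v)$; (4) remark that the $d'$ statement is the same argument applied to the reversed network. The main obstacle — and the only place needing care — is step (2): one must argue precisely that augmentation along $P_{i+1}$ creates no arcs other than reverses of arcs of $P_{i+1}$ (this uses that $P_{i+1}$ is a simple path and that augmentation changes flow only on edges underlying $P_{i+1}$), and that for each newly created arc the corresponding forward arc on $P_{i+1}$ was indeed a shortest-path arc so that its reduced cost is zero. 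Everything else is the textbook potential/reduced-cost bookkeeping. Note also that we may freely assume Property~\ref{property:different path lengths}, so shortest paths are unique and there is no ambiguity in the tree structure, though the argument does not really need uniqueness.
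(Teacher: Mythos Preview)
Your proof is correct. It follows the standard potential/reduced-cost argument: show that $d_i$ is a feasible potential for $G_{f_{i+1}}$ by verifying that the only new arcs (reverses of arcs on $P_{i+1}$) have reduced cost zero, then conclude $d_{i+1}(v) \ge d_i(v)$ via the telescoping lower bound. The paper instead argues by direct induction on the depth of $v$ in the shortest-path tree $T_{i+1}$ of $G_{f_{i+1}}$: for a tree arc $e=(u,v)$ one has $d_{i+1}(v) = d_{i+1}(u) + c_e$, and one checks $d_i(v) \le d_i(u) + c_e$ by the same two-case analysis (either $e$ was already present in $G_{f_i}$, or $e^{-1}$ lay on $P_{i+1}$), then applies the induction hypothesis to $u$. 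Both arguments rest on the identical structural observation --- new arcs are reverses of shortest-path arcs --- and are the two textbook presentations of this classical fact; yours packages it in the language of feasible potentials, while the paper unwraps it into an explicit tree induction. Neither gains in generality or brevity over the other.
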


\begin{proof}
We only show the proof for the sequence $d_0(v), d_1(v), d_2(v), \ldots$. The proof for the sequence $d'_0(v), d'_1(v), d'_2(v), \ldots$ can be shown analogously.
Let $i \geq 0$ be an arbitrary integer. We show $d_i(v) \leq d_{i+1}(v)$ by induction on the depth of node~$v$ in the shortest path tree~$T_{i+1}$ of the residual network~$G_{f_{i+1}}$ rooted at~$s$. For the root~$s$, the claim holds since $d_i(s) = d_{i+1}(s) = 0$. Now assume that the claim holds for all nodes up to a certain depth~$k$, consider a node~$v$ with depth~$k+1$, and let~$u$ denote its parent. Consequently, $d_{i+1}(v) = d_{i+1}(u) + c_e$ for $e = (u, v)$. If arc~$e$ has been available in~$G_{f_i}$, then $d_i(v) \leq d_i(u) + c_e$. If not, then the SSP algorithm must have augmented along~$e^{-1}$ in step~$i+1$ to obtain flow~$f_{i+1}$ and, hence, $d_i(u) = d_i(v) + c_{e^{-1}} = d_i(v) - c_e$. In both cases the inequality $d_i(v) \leq d_i(u) + c_e$ holds. Applying the induction hypothesis for node~$u$, we obtain
$
  d_i(v)
  \leq d_i(u) + c_e
  \leq d_{i+1}(u) + c_e
  = d_{i+1}(v)
$.
\end{proof}

\begin{definition}
For a flow~$f_i \in \F_0$, we denote by $\spm[G](f_i)$ and $\spp[G](f_i)$ the length of the previous path~$P_i$ and the next path~$P_{i+1}$ of~$f_i$, respectively. By convention, we set $\spm[G](f_0) = 0$ and $\spp[G](\fmax) = \infty$. If the network~$G$ is clear from the context, then we simply write~$\spm(f_i)$ and~$\spp(f_i)$. By~$\COST$ we denote the cost function that maps reals~$x$ from the interval $\big[ 0, |\fmax| \big]$ to the cost of the cheapest flow~$f$ with value~$x$, i.e., $\COST(x) = \min \SET{ c(f) \WHERE |f| = x }$.
\end{definition}

The lengths~$\spm(f_i)$ correspond to the lengths~$\ell_i$ mentioned in the outline. The apparent notational overhead is necessary for formal correctness. In Lemma~\ref{lemma:cost function form}, we will reveal a connection between the values~$\spm(f_i)$ and the function~$\COST$. Based on this, we can focus on analyzing the function~$\COST$.

Lemma~\ref{lemma:distance monotonicity} implies in particular that the distance from the source~$s$
to the sink~$t$ is monotonically increasing, which yields the following corollary. 
\begin{corollary}
\label{corollary:path monotonicity}
Let~$f_i, f_j \in \F_0$ be two flows with $i < j$. Then $\spm(f_i) \leq \spm(f_j)$.
\end{corollary}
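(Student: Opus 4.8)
The plan is to reduce everything to Lemma~\ref{lemma:distance monotonicity}. The first step would be to observe that the quantities $\spm(f_i)$ are, up to a shift of the index, exactly the $s$-$t$ distances appearing in that lemma: for every $i \ge 1$, the previous path $P_i$ of $f_i$ is by definition a shortest $s$-$t$ path in the residual network $G_{f_{i-1}}$, so $\spm(f_i)$ equals the distance $d_{i-1}(t)$ from $s$ to $t$ in $G_{f_{i-1}}$. Lemma~\ref{lemma:distance monotonicity}, applied with $v = t$, then says precisely that $d_0(t) \le d_1(t) \le d_2(t) \le \cdots$, i.e.\ that the sequence $\big(\spm(f_i)\big)_{i \ge 1}$ is monotonically increasing; this already settles the corollary for all pairs $1 \le i < j$.

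The second step would be to dispose of the remaining case $i = 0$, where $\spm(f_0) = 0$ by convention. Here I would simply note that $f_0 = 0$, so every arc of $G_{f_0}$ is a forward arc with cost $c_e \in [0, 1]$; hence every $s$-$t$ path in $G_{f_0}$ has nonnegative length and $d_0(t) \ge 0 = \spm(f_0)$. Chaining this with the monotonicity from the first step yields $\spm(f_0) \le \spm(f_j)$ for every $j \ge 1$, which completes the argument.

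I do not anticipate a real obstacle here: the only points that need attention are the off-by-one bookkeeping between a flow $f_i$ and the residual network $G_{f_{i-1}}$ in which its augmenting path lives, and the separate (trivial) treatment of $f_0$ forced by the convention $\spm(f_0) = 0$. All the substance is already contained in Lemma~\ref{lemma:distance monotonicity}.
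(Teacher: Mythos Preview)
Your proposal is correct and follows exactly the approach the paper takes: the paper simply remarks that Lemma~\ref{lemma:distance monotonicity} applied with $v=t$ gives the monotonicity of the $s$-$t$ distances, which is the corollary. Your treatment is slightly more detailed (making the index shift $\spm(f_i)=d_{i-1}(t)$ explicit and handling $i=0$ separately), but the substance is identical.
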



\begin{lemma}
\label{lemma:cost function form}
The function~$\COST$ is continuous, monotonically increasing, and piecewise linear, and the break points of the function are the values of the flows $f \in \F_0$ with $\spm(f) < \spp(f)$. For each flow $f \in \F_0$, the slopes of~$\COST$ to the left and to the right of~$|f|$ equal~$\spm(f)$ and~$\spp(f)$, respectively.
\end{lemma}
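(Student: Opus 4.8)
The plan is to pin down $\COST$ explicitly on each flow‑value interval and then read off all four assertions. Order the flows encountered by the SSP algorithm as $\F_0 = \SET{f_0,\ldots,f_k}$ with $f_k = \fmax$; since every augmentation pushes a strictly positive amount of flow, $0 = |f_0| < |f_1| < \cdots < |f_k| = |\fmax|$, and these values partition $\big[0,|\fmax|\big]$ into the intervals $I_i := \big[|f_{i-1}|,|f_i|\big]$ for $1 \le i \le k$. Let $\ell(P_i)$ denote the length of the augmenting path $P_i$ with respect to the residual costs $c'$ in $G_{f_{i-1}}$; by definition $\ell(P_i) = \spp(f_{i-1}) = \spm(f_i)$. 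The heart of the proof is the identity
\[
  \COST(x) = c(f_{i-1}) + \big(x - |f_{i-1}|\big)\cdot\ell(P_i)\qquad\text{for all } x\in I_i\COMMA
\]
so that $\COST$ is affine on each $I_i$ with slope $\ell(P_i)$.

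To establish this identity I would invoke the observation recorded just after Theorem~\ref{thm:AllFlowsOpt}: for target values $z_1 < z_2$ the SSP algorithm performs exactly the same augmentations until the flow value first reaches $z_1$. Fix $x\in\big(|f_{i-1}|,|f_i|\big]$ and run the algorithm with target $z = x$. Then it performs the augmentations $P_1,\ldots,P_{i-1}$ (producing $f_{i-1}$), and, as the shortest $s$–$t$ path in $G_{f_{i-1}}$ does not depend on the target, it next augments along $P_i$; the amount it pushes is $x-|f_{i-1}|$, because this is at most $|f_i|-|f_{i-1}|$, which is a lower bound on the residual capacity along $P_i$ (the run producing $f_0,\ldots,f_k$ pushed $|f_i|-|f_{i-1}|$ units along $P_i$). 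Thus the algorithm halts with a flow $f^{(x)}$ of value exactly $x$; since augmenting $\delta$ units along a residual path changes the cost by $\delta$ times its residual length, $c(f^{(x)}) = c(f_{i-1}) + (x-|f_{i-1}|)\,\ell(P_i)$, and by Theorem~\ref{thm:AllFlowsOpt} applied to this run $f^{(x)}$ is a minimum-cost flow of value $x$, so $\COST(x) = c(f^{(x)})$; the endpoint case $\COST(|f_{i-1}|) = c(f_{i-1})$ is again Theorem~\ref{thm:AllFlowsOpt}. I expect this to be the main obstacle: everything rests on turning the informal ``same augmentations'' observation into a precise statement and on verifying the size of the final, partial augmentation. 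After that, the remaining assertions are bookkeeping.

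From the identity, $\COST$ is affine on each $I_i$, and two consecutive pieces agree at the shared endpoint $|f_i|$ (both evaluate to $c(f_i)$, using that $f_i$ is obtained from $f_{i-1}$ by augmenting $|f_i|-|f_{i-1}|$ units along $P_i$), so $\COST$ is continuous and piecewise linear on $\big[0,|\fmax|\big]$. For monotonicity, note that $\COST$ is non-decreasing: given a minimum-cost flow of value $x_2 > x_1$, a flow decomposition expresses it as flow along $s$–$t$ paths totalling $x_2$ plus flow along cycles, and deleting $x_2-x_1$ units of the $s$–$t$ path flow yields a feasible flow of value $x_1$ whose cost is no larger, because all edge costs are non-negative; hence $\COST(x_1)\le\COST(x_2)$. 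Consequently every slope $\ell(P_i)$ is non-negative and $\COST$ is monotonically increasing.

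It remains to identify the slopes and break points. By the identity, the slope of $\COST$ immediately to the left of $|f_i|$ equals $\ell(P_i) = \spm(f_i)$ and the slope immediately to the right equals $\ell(P_{i+1}) = \spp(f_i)$, where the conventions $\spm(f_0) = 0$ and $\spp(\fmax) = \infty$ take care of $f_0$ and $\fmax$. Since $\COST$ is affine on the open interval between any two consecutive values $|f_{i-1}|, |f_i|$, every break point is one of the $|f_i|$, and $|f_i|$ is an actual break point precisely when its left and right slopes differ. Finally, Corollary~\ref{corollary:path monotonicity} gives $\spm(f_i)\le\spm(f_{i+1}) = \spp(f_i)$ (and the inequality also holds at the endpoints under the above conventions), so ``differ'' is equivalent to $\spm(f_i) < \spp(f_i)$, which is exactly the claimed description of the break points.
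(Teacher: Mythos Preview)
Your proposal is correct and follows essentially the same approach as the paper: the paper's proof is a one-sentence sketch (invoke Theorem~\ref{thm:AllFlowsOpt} and observe that augmenting gradually along the shortest residual path increases cost linearly with slope equal to the path length), and you have carefully spelled out exactly this argument, making precise the ``same augmentations for smaller target'' observation and verifying the endpoint matching. The only minor difference is that you prove monotonicity via flow decomposition, whereas it also follows immediately from your slope identity together with $\ell(P_1)\ge 0$ (all original edge costs are non-negative) and Corollary~\ref{corollary:path monotonicity}; either route is fine.
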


\begin{proof}
The proof follows from Theorem~\ref{thm:AllFlowsOpt} and the observation that the cost of the flow is linearly increasing when gradually increasing the flow along the shortest path in the residual network until at least one arc becomes saturated. The slope of the cost function is given by the length of that path. 
\end{proof}

\begin{example}
\label{example}
Consider the flow network depicted in Figure~\ref{fig:example network}. The cost~$c_e$ and the capacity~$u_e$ of an edge~$e$ are given by the notation~$c_e, u_e$. For each step of the SSP algorithm, Figure~\ref{exampletable} lists the relevant part of the augmenting path (excluding~$s$, $s'$, $t'$, and~$t$), its length, the amount of flow that is sent along that path, and the arcs that become saturated. As can be seen in the table, the values~$|f|$ of the encountered flows $f \in \F_0$ are $0$, $2$, $3$, $5$, $7$, $10$, and~$12$. These are the breakpoints of the cost function~$\COST$, and the lengths of the augmenting paths equal the slopes of~$\COST$ (see Figure~\ref{fig:example cost function}).

\begin{fig}
  \begin{minipage}{0.54\textwidth}
  	\GFX[width=\textwidth]{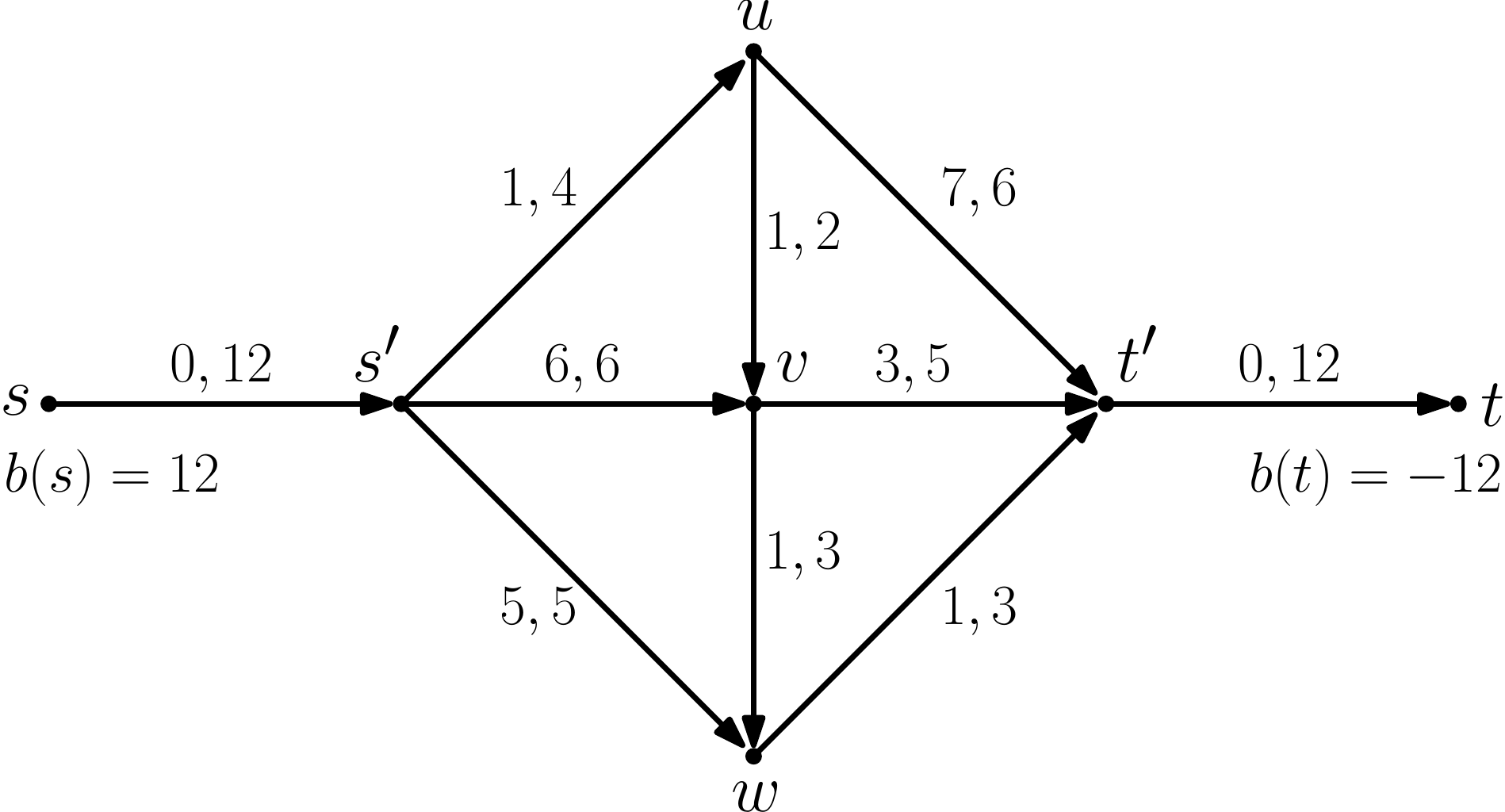}
  	\caption{Minimum-cost flow network with master source~$s$ and master sink~$t$.}
  	\label{fig:example network}
  \end{minipage}
  \qquad
  \begin{minipage}{0.4\textwidth}
	\GFX[width=\textwidth]{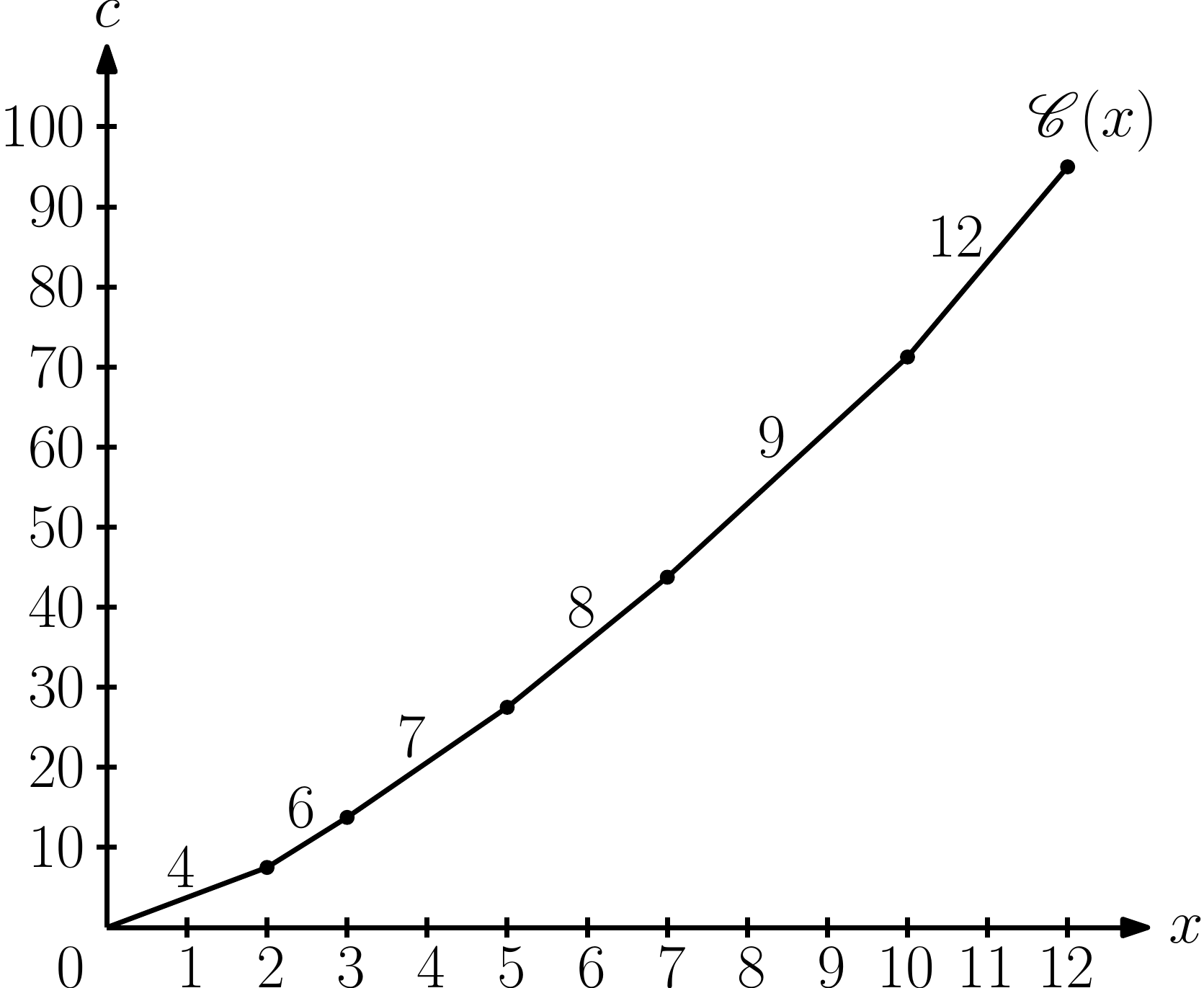}
  	\caption{Cost function~$\COST$.}
  	\label{fig:example cost function}
  \end{minipage}\\\vspace{0.5cm}
  \begin{minipage}{\textwidth}
\centering\newcommand{\CENTER}{@{\hspace{0.5ex}}c@{\hspace{0.5ex}}}
  \begin{tabular}{|@{\hspace{0.5ex}}l@{\hspace{0.5ex}}|\CENTER|\CENTER|\CENTER|\CENTER|\CENTER|\CENTER|}
    \hline
    \textbf{step}           & $1$       & $2$       & $3$      & $4$       & $5$       & $6$      \cr
    \hline
    \textbf{path}           & $u, v, w$ & $w$       & $w, v$   & $u$       & $v$       & $v, u$   \cr
    \hline
    \textbf{path length}    & $4$       & $6$       & $7$      & $8$       & $9$       & $12$     \cr
    \hline
    \textbf{amount of flow} & $2$       & $1$       & $2$      & $2$       & $3$       & $2$      \cr
    \hline
    \textbf{saturated arcs} & $(u, v)$  & $(w, t')$ & $(w, v)$ & $(s', u)$ & $(v, t')$ & $(v, u)$ \cr
    \hline
  \end{tabular}
\caption{The augmenting paths for Example~\ref{example}.}
\label{exampletable}    
  \end{minipage}
\end{fig}

\end{example}

With the following definition, we lay the foundation for distinguishing between original edges with perturbed costs and auxiliary edges whose costs are set to~$0$.

\begin{definition}
Let $f \in \F_0$ be an arbitrary flow. An empty arc~$e$ in the residual network~$G_f$ that does not correspond to an auxiliary edge is called a \emph{good arc}. We call~$f$ a \emph{good flow} if $f \neq f_0$ and if the previous path of~$f$ contains a good arc in the previous residual network. Otherwise, $f$ is called a \emph{bad flow}.
\end{definition}

Before we can derive a property of good arcs that are contained in the previous path of good flows, we need to show that for each flow value the minimum-cost flow is unique with probability~$1$.

\begin{lemma}
\label{lemma:non zero cycle lengths}
For any real~$\e > 0$ the probability that there exists a possible cycle whose costs lie in~$[0,\e]$ is bounded from above by~$2n^{2n} \e \phi$.
\end{lemma}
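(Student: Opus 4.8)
The plan is to follow the template of the proof of Lemma~\ref{lemma:different path lengths}, with a single cycle playing the role there played by a pair of paths. Observe first that the set of possible cycles depends only on~$G$, the capacities, and the balances, but \emph{not} on the random edge costs (whether $G'=G_f$ holds for some feasible $b$-flow~$f$ is a purely combinatorial condition, and the arc set of $G_f$ is determined by $f$ and~$u$ alone); only the cost $c'(C)=\sum_{a\in C}c'_a$ of a cycle is random. Hence we may take a union bound over this fixed collection of cycles. For a fixed possible cycle~$C$ I want to exhibit an arc $e\in C$ whose underlying edge $e_0$ is an original (hence perturbed) edge and such that $e^{-1}\notin C$. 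Given such an~$e$, after revealing all edge costs other than $c_{e_0}$ the cost $c'(C)$ becomes an affine function $\kappa+\sigma c_{e_0}$ with $\sigma\in\{+1,-1\}$ and $\kappa$ determined by the revealed costs, so $c'(C)\in[0,\e]$ forces $c_{e_0}$ into a fixed interval of length~$\e$, an event of probability at most $\e\phi$ since the density of $c_{e_0}$ is bounded by~$\phi$.

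To produce the arc~$e$ I would argue as follows. Since $C$ is a simple directed cycle, at each visited vertex it uses exactly one outgoing arc; consequently, as soon as $C$ has at least three arcs it cannot contain both $(u,v)$ and $(v,u)$, so for no original edge does $C$ contain both the corresponding forward and backward arc, and in particular $e^{-1}\notin C$ for every $e\in C$. It remains to rule out that every arc of~$C$ stems from an auxiliary edge: auxiliary arcs only link $s$ and $t$ to the supply and demand nodes, so the only cycles built solely from auxiliary arcs are the length-two cycles $s\to v\to s$ and $t\to w\to t$; hence any cycle with at least three arcs must contain an arc whose underlying edge is an original edge. (The degenerate length-two cycles $\{e,e^{-1}\}$, which have cost exactly~$0$, are the only possible cycles not covered by this argument; they never occur in the cycle decomposition of the difference of two feasible $b$-flows — the setting in which this lemma is used — so restricting attention to cycles with at least three arcs is harmless, and this also relies on the assumption that $G$ has no directed cycle of length two.)

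Finally I would take a union bound over all possible cycles. A simple directed cycle in $\G$ is determined by the cyclic order of the distinct vertices it visits, so there are at most $n^{n}\le 2n^{2n}$ of them, and multiplying the per-cycle bound $\e\phi$ by this count yields the claimed bound $2n^{2n}\e\phi$.

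I expect the structural first step — guaranteeing an arc of~$C$ whose cost is genuinely perturbed and which moreover appears with a single sign in $c'(C)$ — to be the only real obstacle; this is precisely where the no-length-two-cycle assumption on~$G$ and the special unperturbed status of the auxiliary edges enter. Once that is in place, the probabilistic estimate and the union bound are routine and entirely parallel to Lemma~\ref{lemma:different path lengths}.
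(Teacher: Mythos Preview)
Your proposal is correct, but the paper takes a much shorter route: it reduces directly to Lemma~\ref{lemma:different path lengths}. Given a possible cycle~$K$ with cost in~$[0,\e]$, pick any two nodes~$u,v$ on~$K$ and split~$K$ into the $u$-$v$ path~$P_1$ and the $v$-$u$ path~$P_2$. Then~$P_1$ and~$\stackrel{\leftarrow}{P_2}$ are two distinct $u$-$v$ paths, and since $c(P_1)-c(\stackrel{\leftarrow}{P_2})=c(P_1)+c(P_2)=c(K)\in[0,\e]$, Lemma~\ref{lemma:different path lengths} applies and immediately yields the bound $2n^{2n}\e\phi$. Your direct argument is essentially re-deriving the deferred-decisions core of that lemma's proof in the special case of a single cycle; it buys you an explicit treatment of the auxiliary (unperturbed) edges and of the degenerate length-two cycles $\{e,e^{-1}\}$ --- subtleties the paper leaves implicit --- at the price of considerably more work and case analysis. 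The paper's reduction, by contrast, is a two-line argument that simply recycles an already established lemma, and the constant $2n^{2n}$ is inherited verbatim rather than recomputed from a separate cycle count.
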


\begin{proof}
Assume that there exists a cycle~$K$ whose costs lie in~$[0,\e]$. Then~$K$ contains two nodes~$u$ and~$v$ and consists of a $u$-$v$ path~$P_1$ and a $v$-$u$ path~$P_2$.
Then $P_1$ and $\stackrel{\leftarrow}{P_2}$ are two distinct $u$-$v$ paths. Since~$K$ has costs in~$[0,\e]$, the costs of~$P_1$ and $\stackrel{\leftarrow}{P_2}$ differ by at most~$\e$.
Now Lemma~\ref{lemma:different path lengths} concludes the proof.
\end{proof}

According to Lemma~\ref{lemma:non zero cycle lengths} we can assume that the following property holds since it holds with a probability of 1.

\begin{property}
\label{property:non zero cycle lengths}
There exists no possible cycle with costs~$0$.
\end{property}

With Property~\ref{property:non zero cycle lengths} we can show that the minimum-cost flow is unique for each value.

\begin{lemma}
\label{lemma:unique minimum cost flow}
For each value $B \in \RR_{\geq 0}$ there either exists no flow~$f$ with $|f| = B$ or there exists a unique minimum-cost flow~$f$ with $|f| = B$.
\end{lemma}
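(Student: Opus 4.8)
The plan is to argue by contradiction using the flow-difference decomposition together with Property~\ref{property:non zero cycle lengths}. Suppose that for some value $B$ there are two \emph{distinct} minimum-cost flows $f$ and $g$ with $|f| = |g| = B$. Consider the difference $g - f$, supported on edges of $G$ (with the convention that a negative value on edge $e$ corresponds to positive flow on the reverse arc $e^{-1}$ in the residual network $G_f$). Since $|f| = |g|$, the net flow out of $s$ in $g - f$ is zero, so $g - f$ satisfies Kirchhoff's law at every node, including $s$ and $t$. Hence the support of $g - f$ decomposes into a collection of directed cycles in $\G$, and because $0 \le f_e, g_e \le u_e$ for all $e$, each of these cycles is in fact a \emph{possible} cycle in $G_f$ (every arc used has positive residual capacity under $f$). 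As $f \ne g$, at least one such cycle $K$ is nonempty.

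First I would record the cost bookkeeping: pushing a unit of flow around the cycle $K$ (in the direction of the decomposition) changes the total cost by exactly $c'(K)$, the sum of the residual-arc costs along $K$ in $G_f$, where a backward arc $e^{-1}$ contributes $-c_{e_0}$. Pushing flow around $K$ in the direction of $K$ takes $f$ toward $g$ without changing the flow value, and since $f$ is a minimum-cost flow of value $B$, this operation cannot decrease the cost; therefore $c'(K) \ge 0$. Symmetrically, pushing flow around the reverse cycle $\REVERSE{K}$ takes $g$ toward $f$ (this is a legal augmentation because $\REVERSE{K}$ is a possible cycle in $G_g$ — its arcs have positive residual capacity under $g$), and minimality of $g$ gives $c'(\REVERSE{K}) = -c'(K) \ge 0$. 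Combining, $c'(K) = 0$. But $c'(K)$ is precisely the cost of a possible cycle, and by Property~\ref{property:non zero cycle lengths} there is no possible cycle with cost $0$. This contradiction shows that the minimum-cost flow of value $B$, when it exists, is unique.

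The main thing to get right — and the step I expect to be the only genuinely delicate point — is the claim that the cycles appearing in the decomposition of $g - f$ are simultaneously possible cycles of $G_f$ \emph{and} (after reversal) possible cycles of $G_g$, so that the minimality arguments for both $f$ and $g$ can be invoked. This is where the assumption that $f$ and $g$ are both \emph{feasible} $b_B$-flows ($0 \le f_e,g_e \le u_e$) is used: on an edge $e$ where $g_e > f_e$ the forward arc $e$ is residual under $f$ (since $f_e < g_e \le u_e$) and also under... one must instead check the reverse arc is residual under $g$, namely $g_e > f_e \ge 0$, so $e^{-1} \in G_g$; the symmetric statement holds where $g_e < f_e$. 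Thus each arc of $K$ lies in $G_f$ and its reverse lies in $G_g$, as needed. A minor point worth a sentence is that one may reduce to a single cycle (rather than a general circulation) by flow decomposition, or alternatively argue directly with the whole circulation $g - f$ whose residual cost is $c(g) - c(f) = 0$ and which, being a nonzero circulation of cost $0$ built from possible arcs, still contradicts Property~\ref{property:non zero cycle lengths} once decomposed into simple cycles — at least one of which must then have cost $0$, since if all had strictly positive cost their sum could not vanish, and none can have negative cost by minimality of $f$.
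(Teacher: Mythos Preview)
Your argument is correct. Both your proof and the paper's reach the contradiction via Property~\ref{property:non zero cycle lengths}, but the mechanics differ. You decompose the circulation $g-f$ into simple cycles in $G_f$, then use minimality of~$f$ and of~$g$ separately to pin each cycle's residual cost to~$0$; this requires the case analysis you carry out in the last paragraph to verify that each arc of a decomposition cycle lies in $G_f$ and its reverse lies in $G_g$. The paper instead takes the midpoint $f'' = \tfrac{1}{2}(f+f')$, observes that $f''$ is again optimal, and notes that on every edge where $f$ and $f'$ differ one has $0 < f''_e < u_e$, so both orientations of the cycle are automatically available in $G_{f''}$. Since Property~\ref{property:non zero cycle lengths} forces the cycle cost to be nonzero, augmenting $f''$ in one direction strictly improves it, contradicting optimality. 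The midpoint trick buys you a cleaner residual-capacity check (no case split), while your decomposition argument is the more standard textbook route and makes the role of both optima more symmetric; either is perfectly acceptable here.
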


\begin{proof}
Assume that there exists a value $B \in \RR_{\geq 0}$ and two distinct minimum-cost flows $f$ and $f'$ with $|f| = |f'| = B$. Let $E_{\Delta} := \{e \in E \mid f_e \neq f'_e\}$ be the set of edges on which $f$ and $f'$ differ. We show in the following that the set $E_{\Delta}$ contains at least one undirected cycle~$K$.
Since $f$ and $f'$ are distinct flows, the set $E_{\Delta}$ cannot be empty. For $v \in V$, let us denote by $f_-(v) = \sum_{e=(u,v) \in E} f_e$ the flow entering $v$ and by $f_+(v) = \sum_{e=(v,w) \in E} f_e$ the flow going out of~$v$ ($f'_-(v)$ and $f'_+(v)$ are defined analogously). Flow conservation and $|f| = |f'|$ imply $f_-(v) - f'_-(v) = f_+(v) - f'_+(v)$ for all $v \in V$. 
Now let us assume $E_{\Delta}$ does not contain an undirected cycle. In this case there must exist a vertex~$v \in V$ with exactly one incident edge in~$E_{\Delta}$. We will show that this cannot happen.

Assume $f_-(v) - f'_-(v) \neq 0$ for some $v \in V$. Then the flows $f$ and $f'$ differ on at least one edge~$e = (u,v) \in E$. Since this case implies $f_+(v) - f'_+(v) \neq 0$, they also differ on at least one edge~$e' = (v,w) \in E$ and both these edges belong to~$E_{\Delta}$. It remains to consider nodes~$v \in V$ with $f_-(v) - f'_-(v) = f_+(v) - f'_+(v) = 0$ and at least one incident edge in~$E_{\Delta}$. For such a node~$v$ there exists an edge $e = (u,v) \in E$ (or $e = (v,w) \in E$) with $f_e \neq f'_e$. It follows $\sum_{e'=(u',v) \in E, e' \neq e} f_{e'} - f'_{e'} \neq 0$ (or $\sum_{e'=(v,w') \in E, e' \neq e} f_{e'} - f'_{e'} \neq 0$) which implies that there exists another edge $e' = (u',v) \neq e$ (or $e = (v,w') \neq e$) with $f_{e'} \neq f'_{e'}$.

For the flow~$f'' = \frac{1}{2} f + \frac{1}{2} f'$, which has the same costs as~$f$ and~$f'$ and is hence a minimum-cost flow with $|f''|=B$ as well, we have $f''(e) \in (0,u_e)$ for all $e \in E_{\Delta}$. The flow~$f''$ can therefore be augmented in both directions along $K$. Due to Property~\ref{property:non zero cycle lengths}, augmenting $f''$ in one of the two directions along~$K$ will result in a better flow. This is a contradiction.
\end{proof}

Now we derive a property of good arcs that are contained in the previous path of good flows.
This property allows us to bound the probability that one of the lengths $\spm(f_i)$ falls into a given interval of length~$\e$.

\begin{lemma}
\label{lemma:cost monotonicity}
Let~$f \in \F_0$ be a predecessor of a good flow for which $\spm[G](f) < \spp[G](f)$ holds 
Additionally, let~$e$ be a good arc in the next path of~$f$, and let~$e_0$ be the edge in~$G$ that corresponds to~$e$. Now change the cost of~$e_0$ to $c'_{e_0} = 1$ ($c'_{e_0} = 0$) if $e_0 = e$ ($e_0 = e^{-1}$), i.e., when $e$ is a forward (backward) arc. In any case, the cost of arc~$e$ increases. We denote the resulting flow network by~$G'$. Then $f \in \F_0(G')$. Moreover, the inequalities
$
  \spm[G'](f) \leq \spm[G](f) 
	< \spp[G](f) \leq \spp[G'](f)
$
hold.
\end{lemma}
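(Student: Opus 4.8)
The plan is to avoid simulating SSP on $G'$ step by step — its sequence of augmentations may genuinely differ from the one on $G$, since changing $c_{e_0}$ affects shortest‑path computations even in residual networks where $e_0$ carries no flow — and instead to argue entirely through the cost functions $\COST_G$ and $\COST_{G'}$. For any cost function the map $g\mapsto(|g|,c(g))$ sends the (bounded) flow polytope onto a polytope in $\RR^2$, so $\COST_G$ and $\COST_{G'}$ are convex and piecewise linear on the common domain $[0,F]$, where $F:=|\fmax|$ is the value of a maximum flow (this is the same for $G$ and $G'$, being cost‑independent) and $|f|=|f_{i-1}|<F$ because $f\neq\fmax$. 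Since $f\in\F_0(G)$ and, by hypothesis, $\spm[G](f)<\spp[G](f)$, Lemma~\ref{lemma:cost function form} tells us that $|f|$ is a break point of $\COST_G$ whose left and right slopes are $\spm[G](f)$ and $\spp[G](f)$. These are subgradients of the convex function $\COST_G$ at $|f|$, which yields
\begin{gather*}
  \COST_G(x) \ge \COST_G(|f|) + \spm[G](f)\,(x-|f|) \quad (x \le |f|), \\
  \COST_G(x) \ge \COST_G(|f|) + \spp[G](f)\,(x-|f|) \quad (x \ge |f|).
\end{gather*}

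Next I would establish two facts linking $\COST_{G'}$ to $\COST_G$. Using $c'(g)=c(g)+(c'_{e_0}-c_{e_0})\,g_{e_0}$ for every feasible flow $g$, set $K:=c_{e_0}u_{e_0}$ in the backward‑arc case and $K:=0$ in the forward‑arc case. \emph{(i) Global comparison:} $\COST_{G'}(x)\ge\COST_G(x)-K$ for all $x\in[0,F]$. Indeed, in the forward case $c'_{e_0}-c_{e_0}=1-c_{e_0}\ge0$ and $g_{e_0}\ge0$ give $c'(g)\ge c(g)$, while in the backward case $c'_{e_0}-c_{e_0}=-c_{e_0}\le0$ and $g_{e_0}\le u_{e_0}$ give $c'(g)\ge c(g)-c_{e_0}u_{e_0}$; minimizing over all $g$ with $|g|=x$ proves (i). \emph{(ii) Tightness at $|f|$, with $f$ the unique minimizer:} because $e$ is an empty arc of $G_f$, the edge $e_0$ satisfies $f_{e_0}=0$ in the forward case and $f_{e_0}=u_{e_0}$ in the backward case, so in both cases $c'(f)=c(f)-K$. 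For any other feasible flow $g\neq f$ with $|g|=|f|$, Lemma~\ref{lemma:unique minimum cost flow} applied to $G$ (together with $f\in\F_0(G)$ and Theorem~\ref{thm:AllFlowsOpt}, which make $f$ the unique minimum‑cost flow of value $|f|$) gives $c(g)>c(f)$; plugging this together with $g_{e_0}\ge0$ (forward) resp.\ $g_{e_0}\le u_{e_0}$ (backward) into the identity for $c'$ yields $c'(g)>c'(f)$. Hence $\COST_{G'}(|f|)=c'(f)=\COST_G(|f|)-K$, and $f$ is the unique feasible flow of value $|f|$ attaining $\COST_{G'}(|f|)$.

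Now combine. Chaining (i), the sub‑tangent bound for $\COST_G$, and the identity $\COST_{G'}(|f|)=\COST_G(|f|)-K$ of (ii) gives, for all $x\le|f|$,
\[
  \COST_{G'}(x) \ge \COST_G(x) - K \ge \COST_G(|f|) - K + \spm[G](f)\,(x-|f|) = \COST_{G'}(|f|) + \spm[G](f)\,(x-|f|),
\]
and symmetrically $\COST_{G'}(x)\ge\COST_{G'}(|f|)+\spp[G](f)\,(x-|f|)$ for all $x\ge|f|$. Evaluating these on the linear pieces of the convex function $\COST_{G'}$ immediately to the left and to the right of $|f|$ shows that its left slope at $|f|$ is at most $\spm[G](f)$ and its right slope is at least $\spp[G](f)$; as $\spm[G](f)<\spp[G](f)$, the point $|f|$ is a break point of $\COST_{G'}$. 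By Lemma~\ref{lemma:cost function form} applied to $G'$, this break point equals $|g|$ for some $g\in\F_0(G')$, which by Theorem~\ref{thm:AllFlowsOpt} is a minimum‑cost flow of value $|g|=|f|$ in $G'$; the uniqueness in (ii) forces $g=f$, so $f\in\F_0(G')$, and Lemma~\ref{lemma:cost function form} identifies $\spm[G'](f)$ and $\spp[G'](f)$ with precisely those left and right slopes. This gives $\spm[G'](f)\le\spm[G](f)<\spp[G](f)\le\spp[G'](f)$. The only missing case is $|f|=0$, i.e.\ $f=f_0$ (a forward‑arc case): then $f_0\in\F_0(G')$ trivially, $\spm[G'](f_0)=0=\spm[G](f_0)$, and $\spp[G'](f_0)\ge\spp[G](f_0)$ follows directly because raising $c_{e_0}$ cannot shorten any $s$–$t$ path in $G_{f_0}$.

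The main obstacle is conceptual: one must recognize that SSP on $G'$ need not follow SSP on $G$, so the proof has to bypass the algorithm and pin $\COST_{G'}$ down near $|f|$ using only the cheap global inequality $\COST_{G'}\ge\COST_G-K$ together with the exact value $\COST_{G'}(|f|)=\COST_G(|f|)-K$; the latter is exactly where the hypothesis that $e$ is a (good) \emph{empty} arc enters, as it forces the edge $e_0$ to be either unused or saturated by $f$. A secondary subtlety is the backward‑arc half of (i), where one must bound $-c_{e_0}g_{e_0}$ below by $-c_{e_0}u_{e_0}$ before taking the minimum over flows. Finally, there is a technical point in invoking Lemma~\ref{lemma:cost function form} (hence the well‑definedness of SSP) for $G'$: this is where one uses that $c'_{e_0}\in\{0,1\}$ sits at an endpoint of $[0,1]$ while all other costs remain generic, so that Property~\ref{property:different path lengths} continues to hold for $G'$ with probability $1$.
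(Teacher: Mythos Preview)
Your proof is correct and follows essentially the same route as the paper: compare the convex piecewise-linear cost functions $\COST_G$ and $\COST_{G'}$ via the global inequality $\COST_{G'}\ge\COST_G-K$ (with $K=0$ in the forward case and $K=c_{e_0}u_{e_0}$ in the backward case) that is tight at $|f|$ precisely because the good arc $e$ is empty, then read off the one-sided slope bounds at $|f|$ and use uniqueness of the minimum-cost flow of value $|f|$ to conclude $f\in\F_0(G')$. Your write-up is in fact a bit more careful than the paper's in two places---you derive the uniqueness $g=f$ in $G'$ directly from uniqueness in $G$ rather than invoking Lemma~\ref{lemma:unique minimum cost flow} for $G'$, and you flag the technical point that Property~\ref{property:different path lengths} must persist in $G'$ for Lemma~\ref{lemma:cost function form} to apply there---but the underlying argument is the same.
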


\begin{proof}
Let~$\COST$ and~$\COST'$ be the cost functions of the original network~$G$ and the modified network~$G'$, respectively.
Both functions are of the form described in Lemma~\ref{lemma:cost function form}. In particular, they are continuous and the breakpoints correspond to the values of the flows $\tilde{f} \in \F_0(G)$ and $\hat{f} \in \F_0(G')$ with $\spm[G](\tilde{f}) < \spp[G](\tilde{f})$ and $\spm[G'](\hat{f}) < \spp[G'](\hat{f})$, respectively.

We start with analyzing the case~$e_0 = e$. In this case, we set $\COST'' = \COST'$ and observe that increasing the cost of edge~$e_0$ to~$1$ cannot decrease the cost of any flow in $G$. Hence, $\COST'' \geq \COST$. Since flow~$f$ does not use arc~$e$, its costs remain unchanged, i.e., $\COST''(|f|) = \COST(|f|)$.

If~$e_0 = e^{-1}$, then we set $\COST'' = \COST' + \Delta_{e_0}$ for $\Delta_{e_0} = u_{e_0} \cdot c_{e_0}$. This function is also piecewise linear and has the same breakpoints and slopes as~$\COST'$. Since the flow on edge~$e_0$ cannot exceed the capacity~$u_{e_0}$ of edge~$e_0$ and since the cost on that edge has been reduced by~$c_{e_0}$ in~$G'$, the cost of each flow is reduced by at most~$\Delta_{e_0}$ in~$G'$. Furthermore, this gain is only achieved for flows that entirely use edge~$e_0$ like~$f$ does. Hence, $\COST'' \geq \COST$ and $\COST''(|f|) = \COST(|f|)$.

\begin{GFXFIG}[width=0.5\textwidth]{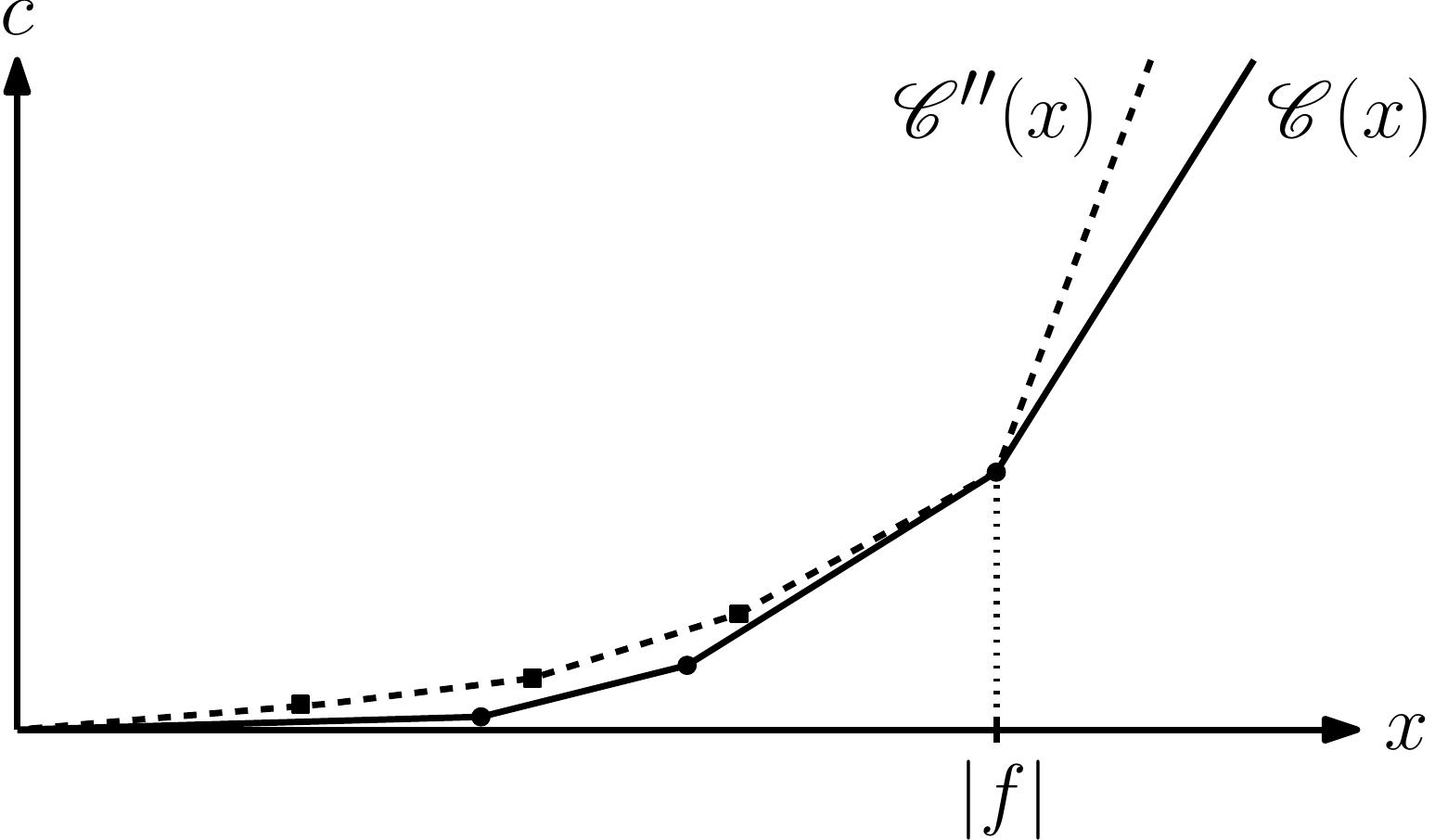}
\caption{Cost function~$\COST$ and function~$\COST''$.}
\label{fig:cost monotonicity}
\end{GFXFIG}

Due to $\COST'' \geq \COST$, $\COST''(|f|) = \COST(|f|)$, and the form of both
functions, the left-hand derivative of~$\COST''$ at~$|f|$ is at most the
left-hand derivative of~$\COST$ at~$|f|$ (see Figure~\ref{fig:cost
monotonicity}). Since~$|f|$ is a breakpoint of~$\COST$, this implies that~$|f|$
is also a breakpoint of~$\COST''$ and that the slope of~$\COST''$ to the left 
of~$|f|$ is at most the slope of~$\COST$ to the left of~$|f|$.
For the same reasons, the right-hand derivative of~$\COST''$ at~$|f|$ is at
least the right-hand derivative of~$\COST$ at~$|f|$ and the slope 
of~$\COST''$ to the right of~$|f|$ is at least the slope of~$\COST$ to the right of~$|f|$.
These properties carry over to~$\COST'$. Hence, $\F_0(G')$ contains a flow~$f'$ with $|f'| = |f|$. Since $f$ is a minimum-cost flow with respect to $c$, $f'$ is a minimum-cost flow with respect to $c'$, we have $c'(f) = c(f)$ and $c'(f^*) \geq c(f^*)$ for all possible flows~$f^*$, Lemma~\ref{lemma:unique minimum cost flow} yields $f = f'$ and therefore $f \in \F_0(G')$. Recalling the fact that the slopes correspond
to shortest $s$-$t$ path lengths, the stated chain of inequalities follows.
\end{proof}

Lemma~\ref{lemma:cost monotonicity} suggests Algorithm~\ref{reconstruct} ($\Reconstruct$) for reconstructing a flow~$f$ based on a good arc~$e$ that belongs to the shortest path in the residual network~$G_f$ and on a threshold $d \in \big[ \spm(f), \spp(f) \big)$. The crucial fact that we will later exploit is that for this reconstruction the cost~$c_{e_0}$ of edge~$e_0$ does not have to be known.
(Note that we only need $\Reconstruct$ for the analysis in order to show that the flow~$f$ can be reconstructed.)

\begin{algorithm}[t]
  \caption{$\Reconstruct(e, d)$.}
  \begin{algorithmic}[1]
    \STATE let~$e_0$ be the edge that corresponds to arc~$e$ in the original network~$G$

    \STATE change the cost of edge~$e_0$ to $c'_{e_0} = 1$ if~$e$ is a forward arc or to $c'_{e_0} = 0$ if~$e$ is a backward arc

    \STATE start running the SSP algorithm on the modified network~$G'$

    \STATE stop when the length of the shortest $s$-$t$ path in the residual network of the current flow~$f'$ exceeds~$d$

    \STATE output~$f'$
  \end{algorithmic}
  \label{reconstruct}
\end{algorithm}

\begin{corollary}
\label{corollary:flow reconstruction}
Let~$f \in \F_0$ be a predecessor of a good flow, let~$e$ be a good arc in the next path of~$f$, and let $d \in \big[ \spm(f), \spp(f) \big)$ be a real number. Then $\Reconstruct(e, d)$ outputs flow~$f$.
\end{corollary}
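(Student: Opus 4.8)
The plan is to derive Corollary~\ref{corollary:flow reconstruction} directly from Lemma~\ref{lemma:cost monotonicity}. First I would observe that $f$ satisfies the hypotheses of Lemma~\ref{lemma:cost monotonicity}: since $d \in \big[\spm[G](f), \spp[G](f)\big)$ and the lengths $\spm[G](f)$ and $\spp[G](f)$ are distinct by Property~\ref{property:different path lengths} (as $f$ is a predecessor of a good flow, so its next path exists and is a genuine $s$-$t$ path, and by Corollary~\ref{corollary:path monotonicity} the sequence $\spm$ is non-decreasing so $\spm[G](f) \le \spp[G](f)$, with equality ruled out by the fact that a strictly longer augmenting path must follow once an arc becomes saturated), we have $\spm[G](f) < \spp[G](f)$. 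Moreover $e$ is a good arc in the next path of $f$ by assumption. Hence Lemma~\ref{lemma:cost monotonicity} applies and gives us that in the modified network $G'$ (obtained by setting $c'_{e_0} = 1$ if $e$ is a forward arc, $c'_{e_0} = 0$ if $e$ is a backward arc — exactly the modification performed in line~2 of Algorithm~\ref{reconstruct}), the flow $f$ belongs to $\F_0(G')$ and satisfies
\[
  \spm[G'](f) \le \spm[G](f) < \spp[G](f) \le \spp[G'](f) \DOT
\]

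Next I would track what Algorithm~\ref{reconstruct} does. It runs the SSP algorithm on $G'$ and stops at the first flow $f'$ whose residual network has shortest $s$-$t$ path length exceeding $d$. Since $f \in \F_0(G')$, the flow $f$ is one of the flows $f'_0, f'_1, f'_2, \ldots$ encountered by the SSP algorithm on $G'$; say $f = f'_k$. By definition of $\spm[G']$ and $\spp[G']$, the previous path of $f = f'_k$ in $G'$ has length $\spm[G'](f) \le \spm[G](f) \le d$, and the next path of $f'_k$ in $G'$ has length $\spp[G'](f) \ge \spp[G](f) > d$. Thus at the moment the SSP algorithm on $G'$ has produced the flow $f'_k = f$, the shortest $s$-$t$ path in the current residual network has length $\spp[G'](f) > d$, so the stopping condition in line~4 triggers; and it cannot have triggered earlier, because for every flow $f'_j$ with $j < k$ the next-path length is $\spm[G'](f'_{j+1}) \le \spm[G'](f'_k) = \spm[G'](f) \le d$ by Corollary~\ref{corollary:path monotonicity} applied to $G'$ — that is, the length of the shortest $s$-$t$ path in the residual network of $f'_j$ equals $\spp[G'](f'_j) = \spm[G'](f'_{j+1}) \le d$, so the condition ``exceeds $d$'' is not yet met. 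Therefore $\Reconstruct(e,d)$ outputs exactly $f'_k = f$.

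I do not expect any single step to be a serious obstacle, since the corollary is essentially a bookkeeping consequence of the already-proved Lemma~\ref{lemma:cost monotonicity}. The one point that needs a little care is the precise phrasing of the stopping condition versus the indexing of flows: one must make sure that ``stop when the length exceeds $d$'' is interpreted as checking, after each augmentation, the shortest-path length in the residual network of the \emph{current} flow, and that the inequalities $\spm[G'](f) \le d < \spp[G'](f)$ pin down $f$ as the unique such stopping flow — here uniqueness again uses Property~\ref{property:different path lengths}, which guarantees all the path lengths $\spp[G'](f'_j)$ are distinct so the threshold $d$ is crossed exactly once. A secondary subtlety is making explicit that the modification in line~2 of Algorithm~\ref{reconstruct} coincides with the modification in Lemma~\ref{lemma:cost monotonicity} (it does, by inspection), so that the conclusion $f \in \F_0(G')$ transfers verbatim. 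With these observations assembled, the proof is a short paragraph.
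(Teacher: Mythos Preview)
Your proposal is correct and follows essentially the same route as the paper's proof: apply Lemma~\ref{lemma:cost monotonicity} to obtain $f\in\F_0(G')$ together with $\spm[G'](f)\le d<\spp[G'](f)$, and then use the monotonicity of shortest-path lengths (Corollary~\ref{corollary:path monotonicity}) in~$G'$ to argue that the stopping condition of $\Reconstruct$ fires precisely at~$f$. One small remark: your paragraph justifying $\spm[G](f)<\spp[G](f)$ via Property~\ref{property:different path lengths} and arc saturation is unnecessary --- the mere existence of a real $d\in[\spm(f),\spp(f))$ already forces the interval to be non-empty and hence the inequality to be strict, which is all Lemma~\ref{lemma:cost monotonicity} requires.
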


\begin{proof}
By applying Lemma~\ref{lemma:cost monotonicity}, we obtain $f \in \F_0(G')$ and $\spm[G'](f) \leq d < \spp[G'](f)$. Together with Corollary~\ref{corollary:path monotonicity}, this implies that $\Reconstruct(e, d)$ does not stop before encountering flow~$f$ and stops once it encounters~$f$. Hence, $\Reconstruct(e, d)$ outputs flow~$f$. 
\end{proof}

Corollary~\ref{corollary:flow reconstruction} is an essential component of the proof of Theorem~\ref{maintheorem} but it only describes how to reconstruct predecessor flows~$f$ of good flows with $\spm(f) < \spp(f)$. In the next part of this section we show that
most of the flows are
good flows and that, with a probability of~$1$, the inequality $\spm(f) < \spp(f)$ holds for any flow~$f \in \F_0$.

\begin{lemma}
\label{lemma:one empty arc}
In any step of the SSP algorithm, any $s$-$t$ path in the residual network contains at least one empty arc.
\end{lemma}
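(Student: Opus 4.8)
The plan is a proof by contradiction. Suppose that in some step~$i$ of the algorithm the residual network~$G_{f_{i-1}}$ contains an $s$-$t$ path~$P$ none of whose arcs is empty. I will show that this forces the previous path~$P_{i-1}$ of~$f_{i-1}$ and the next path~$P_i$ of~$f_{i-1}$ to be the same path, which is impossible: augmenting along~$P_{i-1}$ in order to produce~$f_{i-1}$ saturated at least one arc of~$P_{i-1}$, and that arc is then absent from~$G_{f_{i-1}}$, whereas $P_i$ lies entirely inside~$G_{f_{i-1}}$.

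First the trivial case: if $f_{i-1} = f_0$, then~$G_{f_0}$ consists only of forward arcs~$e$ with $f_0(e) = 0$, all of which are empty, and we are done. So assume $f := f_{i-1} \neq f_0$; since step~$i$ is carried out, $G_f$ contains an $s$-$t$ path, so $f \neq \fmax$ and hence $0 < |f| < |\fmax| \leq z$. Now take the hypothetical path~$P$ with no empty arc. The whole point of the no-empty-arc hypothesis is that it lets us perturb~$f$ in \emph{both} directions: for every arc of~$P$ both the arc and its reversal have positive residual capacity, so for a sufficiently small $\e > 0$ both $f + \e\,\delta_P$ and $f - \e\,\delta_P$ are feasible flows (where $\delta_P$ denotes the unit augmentation of~$f$ along~$P$), of values $|f| \pm \e$ and of costs $c(f) \pm \e\,\ell(P)$, where $\ell(P)$ is the length of~$P$ in~$G_f$. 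By Theorem~\ref{thm:AllFlowsOpt} we have $c(f) = \COST(|f|)$, and by Lemma~\ref{lemma:cost function form} together with Corollary~\ref{corollary:path monotonicity} the function~$\COST$ is convex (it is piecewise linear and the slopes of its pieces are non-decreasing). Hence
\[ \COST(|f|) \;\leq\; \tfrac12\COST(|f|+\e) + \tfrac12\COST(|f|-\e) \;\leq\; \tfrac12\bigl(c(f)+\e\,\ell(P)\bigr) + \tfrac12\bigl(c(f)-\e\,\ell(P)\bigr) \;=\; \COST(|f|), \]
so equality holds throughout. In particular $\COST(|f|\pm\e) = c(f) \pm \e\,\ell(P)$, which by convexity of~$\COST$ forces its left-hand and right-hand slopes at~$|f|$ both to equal~$\ell(P)$; by Lemma~\ref{lemma:cost function form} this means $\spm(f) = \spp(f)\ (= \ell(P))$.

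It remains to turn $\spm(f) = \spp(f)$ into the contradiction. Since $\spm(f)$ and $\spp(f)$ are the lengths of the previous path~$P_{i-1}$ and of the next path~$P_i$ of $f = f_{i-1}$, Property~\ref{property:different path lengths} (any two distinct possible $s$-$t$ paths have distinct lengths) forces $P_{i-1} = P_i$. However, the augmentation from~$f_{i-2}$ to~$f_{i-1}$ along~$P_{i-1}$ was limited by the residual capacity of some arc~$a$ of~$P_{i-1}$ --- it cannot have been limited by $z - |f_{i-2}|$, since that would give $|f_{i-1}| = z$, contradicting $|f_{i-1}| = |f| < z$ --- so~$a$ becomes saturated and does not belong to~$G_{f_{i-1}} = G_f$, contradicting $a \in P_{i-1} = P_i \subseteq G_f$. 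I expect the main obstacle to be the middle paragraph: one has to notice that a no-empty-arc $s$-$t$ path makes the \emph{downward} perturbation $f - \e\,\delta_P$ feasible, so that convexity of~$\COST$ collapses~$|f|$ into a non-break-point; after that, everything only uses that consecutive augmenting paths are distinct (an arc always gets saturated) and have pairwise distinct lengths. It is worth stressing that this argument does not presuppose the later fact that $\spm(f) < \spp(f)$ holds for all $f \in \F_0$; instead it establishes exactly the single instance of that fact that is needed here.
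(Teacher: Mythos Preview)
Your proof is correct and takes a genuinely different route from the paper's. The paper argues locally and combinatorially: it fixes an arc of the just-used augmenting path~$P_i$ that was saturated (and hence is absent from~$G_{f_i}$), locates a sub-path~$P'$ of the given path~$P$ that bypasses this gap and whose arcs are unaffected by the augmentation, and shows that if~$P'$ had no empty arc then its reversal together with the corresponding piece of~$P_i$ would form a negative-cost cycle in~$G_{f_{i-1}}$, contradicting optimality of~$f_{i-1}$. Your argument is global and analytic: a no-empty-arc $s$-$t$ path permits two-sided perturbation of~$f$, and convexity of~$\COST$ (already established via Lemma~\ref{lemma:cost function form} and Corollary~\ref{corollary:path monotonicity}) then forces the left and right slopes of~$\COST$ at~$|f|$ to coincide, i.e.\ $\spm(f)=\spp(f)$; Property~\ref{property:different path lengths} turns this into $P_{i-1}=P_i$, which contradicts saturation. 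Both proofs ultimately invoke Property~\ref{property:different path lengths}, but yours leverages the already-developed structure of~$\COST$ rather than building a fresh negative-cycle contradiction; in effect you recast the empty-arc lemma as the statement that every~$|f_i|$ is a genuine breakpoint of~$\COST$, which dovetails nicely with the paper's later interval-counting argument. The paper's approach, in contrast, is more self-contained and would survive without Lemma~\ref{lemma:cost function form}. One small remark: the cleanest way to justify $|f|<|\fmax|$ is simply that step~$i$ actually produces a flow~$f_i$ with $|f_i|>|f_{i-1}|$; deducing $f\neq\fmax$ from the mere existence of an $s$-$t$ path in~$G_f$ is shaky, since $G_{\fmax}$ may still contain $s$-$t$ paths when the algorithm halts because $|f|=z$.
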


\begin{proof}\renewcommand{\P}{P'}
The claim is true for the empty flow~$f_0$. Now consider a flow~$f_i \in \F$, its predecessor flow~$f_{i-1}$, the path~$P_i$, which is a shortest path in the residual network~$G_{f_{i-1}}$, and an arbitrary $s$-$t$ path~$P$ in the current residual network~$G_{f_i}$. We show that at least one arc in~$P$ is empty.

For this, fix one arc~$e = (x, y)$ from~$P_i$ that is not contained in the current residual network~$G_{f_i}$ since it became saturated by the augmentation along~$P_i$. Let~$v$ be the first node of~$P$ that occurs in the sub-path~$y \PATH[P_i] t$ of~$P_i$, and let~$u$ be the last node in the sub-path~$s \PATH[P] v$ of~$P$ that belongs to the sub-path~$s \PATH[P_i] x$ of~$P_i$ (see Figure~\ref{fig:one empty arc}). By the choice of~$u$ and~$v$, all nodes on the sub-path~$\P = u \PATH[P] v$ of~$P$ except~$u$ and~$v$ do not belong to~$P_i$. Hence, the arcs of~$\P$ are also available in the residual network~$G_{f_{i-1}}$ and have
the same capacity in both residual networks~$G_{f_{i-1}}$ and~$G_{f_i}$.

\begin{GFXFIG}[width=0.6\textwidth]{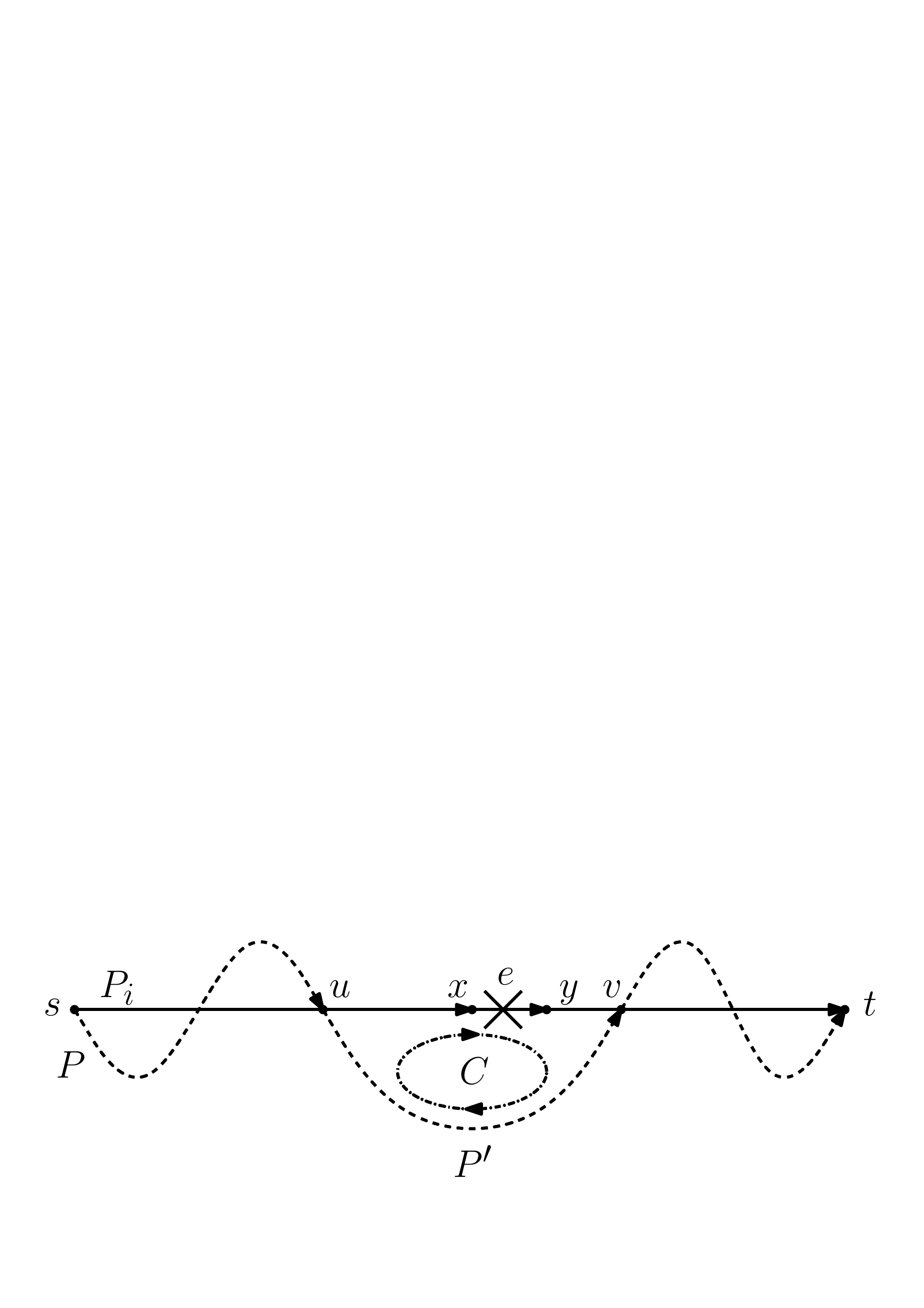}
\caption{Paths~$P$ and~$P_i$ in the residual network~$G_{f_i}$.}
\label{fig:one empty arc}
\end{GFXFIG}

In the remainder of this proof,
we show that at least one arc of~$\P$ is empty. Assume to the contrary
that none of the arcs is empty in~$G_{f_i}$ and, hence, in~$G_{f_{i-1}}$. This implies that, for each arc~$e \in \P$, the residual network~$G_{f_{i-1}}$ also contains the arc~$e^{-1}$. Since~$P_i$ is the shortest $s$-$t$ path in~$G_{f_{i-1}}$ and since the lengths of all possible $s$-$t$ paths are pairwise distinct, the path~$s \PATH[P_i] u \PATH[P] v \PATH[P_i] t$ is longer than~$P_i$. Consequently, the path~$\P = u \PATH[P] v$ is longer than the path~$u \PATH[P_i] v$. This contradicts the fact that flow~$f_{i-1}$ is optimal since the arcs of path~$u \PATH[P_i] v$ combined with the reverse arcs~$e^{-1}$ of all the arcs~$e$ of path~$\P$ form a directed cycle~$C$ in~$G_{f_{i-1}}$ of negative costs. 
\end{proof}

We want to partition the interval $[0, n]$ into small sub-intervals of length~$\e$ and treat the number of lengths $\spm(f_i)$ that fall into a given sub-interval as a binary random variable. This may be wrong if there are two possible $s$-$t$ paths whose lengths differ by at most~$\e$. In this case whose probability tends to $0$ (see Lemma~\ref{lemma:different path lengths}) we will simply bound the number of augmentation steps of the SSP algorithm by a worst-case bound according to the following lemma.

\begin{lemma}
\label{lemma:worst case number}
The number~$|\F_0|$ of flows encountered by the SSP algorithm is bounded by~$3^{m+n}$. 
\end{lemma}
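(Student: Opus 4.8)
The plan is to bound the total number of distinct flows $|\F_0|$ that the SSP algorithm can encounter by bounding the number of distinct residual networks, since each flow $f_i$ determines its residual network $G_{f_i}$ and (by Lemma~\ref{lemma:distance monotonicity} / Corollary~\ref{corollary:path monotonicity} together with Property~\ref{property:different path lengths}) the sequence of flows is strictly determined once we know the sequence of shortest paths. However, the cleanest route is probably more direct: I would argue that every flow $f \in \F_0$ is completely described by specifying, for each edge $e \in E$, which of three ``states'' it is in, and that $f$ is uniquely recoverable from this state vector — giving at most $3^{|E|} = 3^m$ possibilities, which is even stronger than the claimed $3^{m+n}$.

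First I would recall that every $f_i \in \F_0$ is a minimum-cost $b_i$-flow for $b_i(s) = -b_i(t) = |f_i|$ (Theorem~\ref{thm:AllFlowsOpt}), and that by Lemma~\ref{lemma:unique minimum cost flow} (which holds with probability~$1$ under Property~\ref{property:non zero cycle lengths}) this minimum-cost flow is unique for each value. So the map $f_i \mapsto |f_i|$ is injective on $\F_0$, and it suffices to bound the number of attainable values $|f_i|$. Next, along each augmenting path the flow changes until some arc becomes saturated; I would track for each edge $e \in E$ a ternary label recording whether $f_e = 0$ (empty forward arc only), $f_e = u_e$ (empty backward arc only), or $0 < f_e < u_e$ (both arcs present). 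The key claim is that two flows in $\F_0$ with the same label vector must be equal: given the labels, the set of ``tight'' constraints $f_e \in \{0, u_e\}$ fixes the flow on those edges, and on the remaining edges the flow is determined by flow conservation together with the fact (Property~\ref{property:non zero cycle lengths}) that the residual network of a minimum-cost flow has no zero-cost cycle, hence no cycle of free arcs at all, so the subgraph of strictly-fractional edges is a forest and the flow values on it are forced. This yields $|\F_0| \le 3^m \le 3^{m+n}$.

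If that forest argument turns out to be delicate, the fallback — and I suspect this is closer to what the authors do — is to count distinct residual networks directly: a residual network $G_f$ is determined by the set of arcs it contains, each of the $m$ edges contributes one of three possibilities (forward arc only, backward arc only, both), and distinct flows in $\F_0$ have distinct residual networks because, as observed after Theorem~\ref{thm:AllFlowsOpt}, no $G_{f_i}$ contains a negative cycle, so the shortest-path-tree distances $d_i(\cdot)$ are well defined, strictly increasing (Lemma~\ref{lemma:distance monotonicity}), and hence each residual network is visited at most once. Summing, $|\F_0| \le 3^m \le 3^{m+n}$.

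The main obstacle is justifying the injectivity of the state/label map — i.e., that the combinatorial ``type'' of a flow in $\F_0$ pins down the flow uniquely. This is exactly where Property~\ref{property:non zero cycle lengths} (no possible zero-cost cycle) and Lemma~\ref{lemma:unique minimum cost flow} must be invoked, to rule out two different minimum-cost flows sharing a type by differing around a free-arc cycle; everything else is bookkeeping. The loose slack ($3^m$ versus the stated $3^{m+n}$) gives comfortable room in case one prefers to count residual networks of the auxiliary-edge-augmented graph on $n+2$ vertices, or to be sloppy about the auxiliary edges.
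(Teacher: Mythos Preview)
Your fallback is exactly the paper's route: count arc-set equivalence classes of residual networks and argue that no class is visited twice. However, your justification for ``visited at most once'' is not right as written. Lemma~\ref{lemma:distance monotonicity} gives only \emph{monotone} (not strict) growth of the distances~$d_i(\cdot)$, so ``strictly increasing, hence each residual network is visited at most once'' does not follow. The paper's argument is the missing step: if $G_{f_i}$ and $G_{f_j}$ (with $i<j$) have the same arc set, then $\spp(f_i)=\spp(f_j)$, and by Corollary~\ref{corollary:path monotonicity} every $\spm(f_k)$ with $i<k\le j+1$ equals this common value; Property~\ref{property:different path lengths} then forces $P_{i+1}=P_{i+2}=\cdots=P_{j+1}$, and in particular $P_{i+1}=P_{i+2}$, which is impossible because the augmentation along $P_{i+1}$ saturates one of its arcs. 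That is the injectivity you need, and it is not just ``bookkeeping.'' Also, the count is genuinely $3^{m+n}$ rather than $3^m$: the arc set of the residual network depends on the auxiliary edges as well, and nothing lets you drop them.

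Your primary route --- the forest argument --- has a more substantive gap. You are right that the strictly-fractional edges of any $f\in\F_0$ form an (undirected) forest: a cycle among them would give two oppositely oriented directed cycles in $G_f$, both of nonnegative cost by optimality, hence both of cost~$0$, contradicting Property~\ref{property:non zero cycle lengths}. But ``the flow values on the forest are forced by flow conservation'' is only true once the imbalance at \emph{every} node is known. At $s$ and $t$ the imbalance is $\pm|f|$, and $|f|$ is \emph{not} encoded in the ternary label vector. If the forest happens to contain an $s$-$t$ path~$P$ (nothing prevents an auxiliary edge $(s,v)$ and an auxiliary edge $(w,t)$ from both being strictly fractional), then for any two flows $f,f'\in\F_0$ with the same labels, the difference $f-f'$ is a flow of value $|f|-|f'|$ supported on~$P$, and the forest argument alone does not force this difference to vanish. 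To close this case you are pushed back to exactly the residual-network contradiction above. So the forest approach is not an independent proof; it reduces to the paper's argument at the point where it matters.
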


\begin{proof}
We call two possible residual networks equivalent if they contain the same arcs. Equivalent possible residual networks have the same shortest $s$-$t$ path in common. The length of this path is also the same. Assume that for two distinct flows $f_i, f_j \in \F_0$ with $i < j$, the residual networks~$G_{f_i}$ and~$G_{f_j}$ are equivalent. We then have $\spm(f_{i+1}) = \spp(f_i)$ = $\spp(f_j) =\spm(f_{j+1})$ and due to Corollary~\ref{corollary:path monotonicity}, $\spm(f_{i+1}) = \spm(f_k) =\spm(f_{j+1})$ for all $i < k \leq j+1$. Property~\ref{property:different path lengths} then implies $P_{i+1} = P_k$ for all $i < k \leq j+1$ and especially $P_{i+1} = P_{i+2}$, which is a contradiction. Therefore the number of equivalence classes is bounded by~$3^{m+n}$ since there are~$m$ original edges and at most~$n$ auxiliary edges. This completes the proof. 
\end{proof}

\begin{lemma}
\label{lemma:upper bound bad flows}
There are at most~$n$ bad flows~$f \in \F$.
\end{lemma}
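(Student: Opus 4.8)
The plan is to construct an injective map from the set of bad flows in~$\F$ to the set of auxiliary edges, and then to use that there are at most~$n$ auxiliary edges. Let $f_i \in \F$ be a bad flow and let $P_i$ be its previous path, i.e.\ a shortest $s$-$t$ path in $G_{f_{i-1}}$. By Lemma~\ref{lemma:one empty arc}, $P_i$ contains at least one empty arc; and since $f_i$ is a bad flow, none of the empty arcs on $P_i$ is good, so every empty arc on $P_i$ corresponds to an auxiliary edge.

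Next I would pin down which arcs of $P_i$ can possibly correspond to auxiliary edges. The auxiliary edges are exactly the edges incident to~$s$ or to~$t$, so their arcs in any residual network are incident to~$s$ or~$t$ as well. Since $P_i$ is a simple directed path from~$s$ to~$t$, it contains exactly one arc incident to~$s$ (its first arc, which leaves~$s$) and exactly one arc incident to~$t$ (its last arc, which enters~$t$); every other arc of $P_i$ is incident to neither and hence not auxiliary. Moreover, because in the transformed network $s$ has no incoming edges and $t$ has no outgoing edges, the first arc of $P_i$ is the forward arc of an auxiliary edge $(s, v_i)$ and the last arc is the forward arc of an auxiliary edge $(w_i, t)$. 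Consequently, for a bad flow $f_i$ at least one of these two forward arcs is empty in $G_{f_{i-1}}$, and I define $\psi(f_i)$ to be the edge $(s, v_i)$ if $(s, v_i)$ is empty in $G_{f_{i-1}}$, and the edge $(w_i, t)$ otherwise. This yields a well-defined map from the bad flows of~$\F$ into the set of auxiliary edges.

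The remaining step is to prove that $\psi$ is injective. The key observation is that the amount of flow on an auxiliary edge is nondecreasing over the course of the SSP algorithm: the flow on $(s, v)$ could only be decreased by augmenting along the backward arc $(v, s)$, but $(v, s)$ enters~$s$ and therefore never lies on a simple $s$-$t$ augmenting path, and symmetrically the backward arc $(t, w)$ of $(w, t)$ leaves~$t$ and never lies on such a path. Now suppose $\psi(f_i) = \psi(f_j) = (s, v)$ for two bad flows with $i < j$. Then $(s, v)$ is empty in $G_{f_{i-1}}$, so it carries no flow there, yet it lies on $P_i$, so after the augmentation in step~$i$ it carries positive flow; by monotonicity it still carries positive flow in $f_{j-1}$, contradicting that $(s, v)$ is empty in $G_{f_{j-1}}$. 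The case $\psi(f_i) = \psi(f_j) = (w, t)$ is symmetric, and a source-incident auxiliary edge can never coincide with a sink-incident one. Hence $\psi$ is injective, and since there are at most~$n$ auxiliary edges (one per supply or demand node), there are at most~$n$ bad flows in~$\F$.

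I expect the main obstacle to be the bookkeeping in the second paragraph: carefully arguing that the only arcs of an augmenting path that can correspond to an auxiliary edge are the source-incident first arc and the sink-incident last arc, and that both of these are genuinely \emph{forward} arcs of auxiliary edges, which relies on the precise structure of the single-source-single-sink transformation ($s$ has only outgoing, $t$ only incoming edges).
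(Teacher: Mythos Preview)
Your proof is correct and follows essentially the same approach as the paper: both arguments observe that every bad augmenting path must contain an empty forward arc of an auxiliary edge, that augmenting along the path makes this arc non-empty, and that flow on auxiliary edges is monotone (since their backward arcs never lie on an $s$-$t$ path), so each auxiliary edge can be ``used up'' at most once. The paper's proof is terser and does not spell out the injection or the structural observation that only the first and last arcs of an augmenting path can be auxiliary, but the underlying idea is identical.
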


\begin{proof}
According to Lemma~\ref{lemma:one empty arc}, the augmenting path contains an empty arc~$e$
in each step. If~$e$ is an arc that corresponds to an auxiliary edge (this is the only case when~$e$ is not a good arc), then~$e$ is not empty after the augmentation. Since the SSP algorithm does not augment along arcs~$e^{-1}$ if~$e$ is an arc that corresponds to an auxiliary edge, non-empty arcs that correspond to auxiliary edges cannot be empty a second time. Thus, there can be at most~$n$ steps where the augmenting path does not contain a good arc. This implies that there are at most~$n$ bad flows~$f \in \F$. 
\end{proof}

We can now bound the probability that there is a flow~$f_i \in \F$ whose previous path's length $\spm(f_i)$ falls into a given sub-interval of length~$\e$. Though we count bad flows separately, they also play a role in bounding the probability that there is a \emph{good} flow~$f_i \in \F$ such that $\spm(f_i)$ falls into a given sub-interval of length~$\e$.

\begin{lemma}
\label{lemma:loser gap}
For a fixed real $d \geq 0$, let~$\event_{d, \e}$ be the event that there is a flow~$f \in \F$ for which $\spm(f) \in (d, d+\e]$, and let~$B_{d, \e}$ be the event that there is a bad flow~$f' \in \F$ for which $\spm(f') \in (d, d+\e]$. Then the probability of~$\event_{d, \e}$ can be bounded by
$
  \Pr{\event_{d, \e}} \leq 2m \e \phi + 2 \cdot \Pr{B_{d, \e}}
$.
\end{lemma}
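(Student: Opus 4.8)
The plan is to condition on which empty arc is ``responsible'' for a flow $f \in \F$ landing in the interval $(d, d+\e]$, and then to invoke the reconstruction machinery. First I would split $\event_{d,\e}$ into two cases: either (i) there is a \emph{good} flow $f \in \F$ with $\spm(f) \in (d,d+\e]$, or (ii) there is a bad flow with this property, the latter having probability exactly $\Pr{B_{d,\e}}$. So it suffices to bound $\Pr{\text{case (i)}}$ by $2m\e\phi + \Pr{B_{d,\e}}$; in fact I would aim for the cleaner bound $\Pr{\text{case (i)}} \le 2m\e\phi + \Pr{B_{d,\e}}$, where the extra $\Pr{B_{d,\e}}$ term absorbs a bad event that I explain below.

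Next, suppose a good flow $f_i \in \F$ has $\spm(f_i) \in (d, d+\e]$. By Corollary~\ref{corollary:path monotonicity} its predecessor $f_{i-1}$ satisfies $\spm(f_{i-1}) \le \spm(f_i)$, and by Property~\ref{property:different path lengths} path lengths are distinct, so (on the complement of the bad event that two possible $s$-$t$ path lengths lie within $\e$ of each other) we actually get $\spm(f_{i-1}) \le d$; combined with $\spp(f_{i-1}) = \spm(f_i) > d$ this gives $d \in \big[\spm(f_{i-1}), \spp(f_{i-1})\big)$. Since $f_i$ is good, its previous path $P_i$ contains a good arc $e$ in $G_{f_{i-1}}$. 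I would then fix the arc: for each of the $2m$ candidate good arcs $e$, let $A_e$ be the event that $e$ is a good arc in the next path of some $f = f_{i-1}$ with $d \in \big[\spm(f),\spp(f)\big)$ and $\spm(f_i) \in (d,d+\e]$. By Corollary~\ref{corollary:flow reconstruction}, on $A_e$ the flow $f$ equals $\Reconstruct(e,d)$, which does not depend on $c_{e_0}$. Using the principle of deferred decisions, I reveal all costs except $c_{e_0}$: then $f$ is determined, the next path $P_i$ of $f$ is determined, and $\spm(f_i) = \operatorname{length}(P_i) = \kappa \pm c_{e_0}$ for a known constant $\kappa$ (the sign depending on whether $e$ is a forward or backward arc). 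Hence $\Pr{A_e} \le \e\phi$ since $c_{e_0}$ has density at most $\phi$, and a union bound over the $2m$ good arcs gives $\Pr{\text{case (i)} \cap \overline{\text{bad event}}} \le 2m\e\phi$.

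Finally I would tie in the bad event. The reconstruction step implicitly needs that $P_i$ is well-defined, i.e. that the relevant shortest paths are unique; but the real subtlety is that to pass from $\spm(f_{i-1}) \le \spm(f_i)$ to the strict $\spm(f_{i-1}) \le d$ I need that no two distinct possible $s$-$t$ paths have lengths within $\e$ — and the complement of this is \emph{not} a probability-$1$ event for fixed $\e$; it is the event whose probability appears in Lemma~\ref{lemma:different path lengths}. Wait — but then the bound would involve $2n^{2n}\e\phi$, not $\Pr{B_{d,\e}}$. The right way around this is to argue that on the bad event ``two consecutive augmenting paths $P_i, P_{i+1}$ have lengths within $\e$ of each other and $f_i$ is good,'' one can charge this to the bad-flow count: more carefully, if $\spm(f_{i-1}) \in (d, d+\e]$ too, then we can instead look at the \emph{earliest} flow in this interval; if that earliest flow is bad we are in event $B_{d,\e}$, and if it is good its predecessor genuinely has $\spm \le d$ and the reconstruction argument applies. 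So the clean structure is: let $f_i$ be the first flow in $\F$ with $\spm(f_i) \in (d,d+\e]$; either $f_i$ is bad (event $B_{d,\e}$), or $f_i$ is good and its predecessor $f_{i-1}$ satisfies $\spm(f_{i-1}) \le d$ by minimality, and the deferred-decisions argument over $2m$ good arcs bounds this by $2m\e\phi$. The main obstacle, and the step I would be most careful about, is exactly this reduction to the \emph{first} flow in the interval so that $\spm(f_{i-1}) \le d$ holds unconditionally and the error term is genuinely $\Pr{B_{d,\e}}$ rather than the much weaker $2n^{2n}\e\phi$.
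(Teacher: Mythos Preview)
Your proposal is correct and follows essentially the same approach as the paper: reduce to the case of a good flow in the interval, pick out a good arc~$e$ on its augmenting path, invoke Corollary~\ref{corollary:flow reconstruction} to recover the predecessor flow without knowing~$c_{e_0}$, and then apply the principle of deferred decisions plus a union bound over the~$2m$ possible arcs.

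Two small remarks. First, where you look at the \emph{first flow overall} in~$(d,d+\e]$, the paper instead takes the first \emph{good} flow~$\hat f$ in the interval and then observes that if its predecessor~$f^*$ also lies in~$(d,d+\e]$ then~$f^*$ must be bad (by the choice of~$\hat f$), so~$B_{d,\e}$ occurs; both variants are valid and lead to the stated bound. Second, your sentence ``the next path~$P_i$ of~$f$ is determined'' is slightly imprecise: the shortest $s$-$t$ path in~$G_f$ can in principle depend on~$c_{e_0}$, so what is actually determined by the remaining costs is the shortest $s$-$t$ path in~$G_f$ \emph{that uses arc~$e$} (of the form $s\PATH u \to v \PATH t$ with the two sub-paths avoiding~$v$ and~$u$, respectively). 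The paper handles this by passing to the superset event~$R_{e,d,\e}$ defined exactly in those terms. This is only an expositional tightening, not a gap in your argument.
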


\begin{proof}
Let $A_{d, \e}$ be the event that there is a good flow~$f \in \F$ for which $\spm(f) \in (d, d+\e]$. Since $\event_{d, \e} = A_{d, \e} \cup B_{d, \e}$, it suffices to show that $\Pr{A_{d, \e}} \leq 2m \e \phi + \Pr{B_{d, \e}}$. Consider the event that there is a good flow whose previous path's length lies in the interval $(d, d+\e]$. Among all these good flows, let~$\hat{f}$ be the one with the smallest value $\spm(\hat{f})$, i.e., $\hat{f}$ is the first good flow~$f$ encountered by the SSP algorithm for which $\spm(f) \in (d, d+\e]$, and let~$f^*$ be its previous flow. Flow~$f^*$ always exists since~$\hat{f}$ cannot be the empty flow~$f_0$. Corollary~\ref{corollary:path monotonicity} and Property~\ref{property:different path lengths} yield $\spm(f^*) < \spm(\hat{f})$. Thus, there can only be two cases: If $\spm(f^*) \in (d, d+\e]$, then~$f^*$ is a bad flow by the choice of~$\hat{f}$ and, hence, event~$B_{d, \e}$ occurs. The interesting case, which we consider now, is when $\spm(f^*) \leq d$ holds. If this is true, then $d \in [\spm(f^*), \spp(f^*))$ due to $\spp(f^*) = \spm(\hat{f})$.

As~$\hat{f}$ is a good flow, the shortest path in the residual network~$G_{f^*}$ contains a good arc $e = (u, v)$. Applying Corollary~\ref{corollary:flow reconstruction} we obtain that we can reconstruct flow~$f^*$ by calling $\Reconstruct(e, d)$. The shortest $s$-$t$ path~$P$ in the residual network~$G_{f^*}$ is the previous path of~$\hat{f}$ and its length equals~$\spm(\hat{f})$. Furthermore, $P$ is of the form $s \PATH[P] u \to v \PATH[P] t$, where $s \PATH[P] u$ and $v \PATH[P] t$ are shortest paths in~$G_{f^*}$ from~$s$ to~$u$ and from~$v$ to~$t$, respectively. These observations yield
\[
  A_{d, \e} \subseteq \bigcup_{e \in E} R_{e, d, \e} \cup \bigcup_{e \in E} R_{e^{-1}, d, \e} \cup B_{d, \e} \COMMA
\]
where~$R_{e, d, \e}$ for some arc~$e = (u, v)$ denotes the following event: The event~$R_{e, d, \e}$ occurs if $\ell \in (d, d+\e]$, where~$\ell$ is the length of the shortest $s$-$t$ path that uses arc~$e$ in~$G_f$, the residual network of the flow~$f$ obtained by calling the procedure $\Reconstruct(e, d)$. Therefore, the probability of event~$A_{d, \e}$ is bounded by
\[
  \sum_{e \in E} \Pr{R_{e, d, \e}} + \sum_{e \in E} \Pr{R_{e^{-1}, d, \e}} + \Pr{B_{d, \e}} \DOT
\]
We conclude the proof by showing $\Pr{R_{e, d, \e}} \leq \e \phi$. For this, let~$e_0$ be the edge corresponding to arc~$e = (u, v)$ in the original network. If we fix all edge costs except cost~$c_{e_0}$ of edge~$e_0$, then the output~$f$ of $\Reconstruct(e, d)$ is already determined. The same holds for the shortest $s$-$t$ path in~$G_f$ that uses arc~$e$ since it is of the form $s \PATH u \to v \PATH t$ where $P_1 = s \PATH u$ is a shortest $s$-$u$ path in~$G_f$ that does not use~$v$ and where $P_2 = v \PATH t$ is a shortest $v$-$t$ path in~$G_f$ that does not use~$u$. The length~$\ell$ of this path, however, depends linearly on the cost~$c_{e_0}$. To be more precise, $\ell = \ell' + c_e = \ell' + \sgn(e) \cdot c_{e_0}$, where $\ell'$ is the length of~$P_1$ plus the length of~$P_2$ and where
\[ 
\sgn(e) = \begin{cases}
    +1 & \tIF\ e_0 = e \COMMA \cr
    -1 & \tIF\ e_0 = e^{-1} \DOT
  \end{cases}
\]
Hence, $\ell$ falls into the interval $(d, d+\e]$ if and only if~$c_{e_0}$ falls into some fixed interval of length~$\e$. The probability for this is bounded by $\e \phi$ as~$c_{e_0}$ is drawn according to a distribution whose density is bounded by~$\phi$. 
\end{proof}

\begin{corollary}
\label{corollary:expected number of SSP steps}
The expected number of augmentation steps the SSP algorithm performs is bounded by $2mn\phi + 2n$.
\end{corollary}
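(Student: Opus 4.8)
The plan is to upgrade the single-interval estimate of Lemma~\ref{lemma:loser gap} to a bound on $\Ex{|\F|}$ by linearity of expectation over a partition of the interval of possible path lengths into many tiny pieces, and to control the unlikely event that this partition is too coarse by means of the deterministic worst-case bound of Lemma~\ref{lemma:worst case number}. Note first that $|\F|$ is exactly the number of augmentation steps performed, since each step produces precisely one new flow.

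I would start by recording two facts. First, with probability $1$ every $f \in \F$ satisfies $\spm(f) \in (0,n]$: all edge costs lie in $[0,1]$ and a simple path has at most $n-1$ arcs, the sequence $\spm(f_1), \spm(f_2), \ldots$ is nondecreasing by Corollary~\ref{corollary:path monotonicity}, and $\spm(f_1) > 0$ since (by the remark following Lemma~\ref{lemma:different path lengths}) there is no $s$-$t$ path of length~$0$. Second, the augmenting paths $P_1, P_2, \ldots$ are pairwise distinct: this is essentially the argument in the proof of Lemma~\ref{lemma:worst case number} --- if $P_i = P_j$ with $i<j$, then by monotonicity and Property~\ref{property:different path lengths} all of $P_i, \ldots, P_j$ coincide, contradicting that $P_{i+1} \neq P_i$ because $P_i$ loses a saturated arc at step~$i$. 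Consequently distinct flows in $\F$ have distinct augmenting paths, and hence distinct lengths $\spm(\cdot)$.

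Now fix a large integer $k$, set $\e = n/k$ and $d_j = (j-1)\e$, and partition $(0,n]$ into the intervals $I_j = (d_j, d_j+\e]$, $j=1,\ldots,k$. Let $G$ be the event from Lemma~\ref{lemma:different path lengths} that no two distinct possible $u$-$v$ paths (over all $u,v$) have lengths differing by at most~$\e$, so that $\Pr{\bar G} \le 2n^{2n}\e\phi$. On $G$, each interval $I_j$ contains $\spm(f)$ for at most one $f \in \F$ --- two such flows would yield distinct possible $s$-$t$ paths whose lengths differ by at most~$\e$ --- so on $G$ we have the exact identity $|\F| = \sum_{j=1}^{k} \mathbf{1}[\event_{d_j,\e}]$. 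Splitting $\Ex{|\F|} = \Ex{|\F|\,\mathbf{1}[G]} + \Ex{|\F|\,\mathbf{1}[\bar G]}$, the second term is at most $3^{m+n}\Pr{\bar G} \le 2\cdot 3^{m+n} n^{2n}\e\phi$ by Lemma~\ref{lemma:worst case number}, while Lemma~\ref{lemma:loser gap} and $k\e = n$ give
\begin{align*}
  \Ex{|\F|\,\mathbf{1}[G]}
  &\le \sum_{j=1}^{k}\Pr{\event_{d_j,\e}}
  \le \sum_{j=1}^{k}\bigl(2m\e\phi + 2\Pr{B_{d_j,\e}}\bigr) \\
  &= 2mn\phi + 2\sum_{j=1}^{k}\Pr{B_{d_j,\e}} \DOT
\end{align*}
Finally, $\sum_{j=1}^{k}\Pr{B_{d_j,\e}} = \Ex{\sum_{j=1}^{k}\mathbf{1}[B_{d_j,\e}]} \le n$ \emph{unconditionally}, since by Lemma~\ref{lemma:upper bound bad flows} there are at most~$n$ bad flows and each of them lies in at most one of the disjoint intervals~$I_j$. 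Combining everything, $\Ex{|\F|} \le 2mn\phi + 2n + 2\cdot 3^{m+n} n^{2n}\e\phi$ for every~$k$; since the left-hand side does not depend on~$k$, letting $k \to \infty$ (so $\e \to 0$) yields $\Ex{|\F|} \le 2mn\phi + 2n$.

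The step I expect to be the main obstacle --- and the reason the seemingly wasteful Lemma~\ref{lemma:worst case number} is needed at all --- is the careful handling of the event $\bar G$: Lemma~\ref{lemma:loser gap} speaks about one interval at a time and is only useful when every interval behaves like a singleton, so one must make sure the contribution of $\bar G$ to the expectation is bounded by a quantity that vanishes as the partition is refined. The remaining ingredients --- the union bound over the $2m$ arcs and the separate accounting of the at most~$n$ bad flows --- are already packaged inside Lemmas~\ref{lemma:loser gap} and~\ref{lemma:upper bound bad flows}.
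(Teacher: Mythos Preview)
Your proof is correct and follows essentially the same route as the paper: partition $(0,n]$ into $k$ sub-intervals of width $\e=n/k$, use Lemma~\ref{lemma:loser gap} on each, control the bad-flow term via Lemma~\ref{lemma:upper bound bad flows}, absorb the off-event contribution with the deterministic bound of Lemma~\ref{lemma:worst case number}, and let $k\to\infty$. The only cosmetic differences are that the paper conditions on the slightly smaller event that two possible $s$-$t$ paths (rather than arbitrary $u$-$v$ paths) are $\e$-close, and it packages the split as $X=\sum X_{d_j,\e}$ with $X_{d_j,\e}=Z_{d_j,\e}$ off $F_\e$, whereas you split $\Ex{|\F|}$ directly; neither change affects the argument.
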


\begin{proof}
Let~$X = |\F|$ be the number of augmentation steps of the SSP algorithm. For reals $d, \e > 0$, let~$\event_{d, \e}$ and $B_{d, \e}$ be the events defined in Lemma~\ref{lemma:loser gap}, let~$X_{d,\e}$ be the number of flows $f \in \F$ for which $\spm(f) \in (d, d+\e]$, and let $Z_{d, \e} = \min \SET{ X_{d, \e}, 1 }$ be the indicator variable of event~$\event_{d, \e}$.

Since all costs are drawn from the interval $[0, 1]$, the length of any possible $s$-$t$ path is bounded by~$n$. Furthermore, according to Corollary~\ref{corollary:path monotonicity}, all lengths are non-negative (and positive with a probability of 1). Let~$F_\e$ denote the event that there are two possible $s$-$t$ paths whose lengths differ by at most~$\e$. Then, for any positive integer~$k$, we obtain
\[
  X
  = \sum_{i=0}^{k-1} X_{i \cdot \frac{n}{k}, \frac{n}{k}}
  \begin{cases}
    = \sum \limits_{i=0}^{k-1} Z_{i \cdot \frac{n}{k}, \frac{n}{k}} & \text{if $F_{\frac{n}{k}}$ does not occur} \COMMA \cr
    \leq 3^{m+n} & \text{if $F_{\frac{n}{k}}$ occurs} \DOT
  \end{cases}
\]
Consequently,
\begin{align*}
  \Ex{X}
  &\leq \sum_{i=0}^{k-1} \Ex{Z_{i \cdot \frac{n}{k}, \frac{n}{k}}} + 3^{m+n} \cdot \Pr{F_{\frac{n}{k}}} \cr
  &= \sum_{i=0}^{k-1} \Pr{\event_{i \cdot \frac{n}{k}, \frac{n}{k}}} + 3^{m+n} \cdot \Pr{F_{\frac{n}{k}}} \cr
  &\leq 2mn\phi + 2 \cdot \sum_{i=0}^{k-1} \Pr{B_{i \cdot \frac{n}{k}, \frac{n}{k}}} + 3^{m+n} \cdot \Pr{F_{\frac{n}{k}}} \cr
  &\leq 2mn\phi + 2n + 3^{m+n} \cdot \Pr{F_{\frac{n}{k}}} \DOT
\end{align*}
The second inequality is due to Lemma~\ref{lemma:loser gap} whereas the third inequality stems from Lemma~\ref{lemma:upper bound bad flows}. The claim follows since $\Pr{F_{\frac{n}{k}}} \to 0$ for $k \to \infty$ in accordance with Lemma~\ref{lemma:different path lengths}.
\end{proof}

Now we are almost done with the proof of our main theorem.

\begin{proof}
Since each step of the SSP algorithm runs in time $O(m + n \log n)$ using Dijkstra's algorithm (see, e.g.,
Korte~\cite{Korte:2007:COT:1564997} for details), applying Corollary~\ref{corollary:expected number of SSP steps} yields the desired result. 
\end{proof}

\section{Proof of the Lower Bound}
\label{sec:lower bound}



This section is devoted to the proof of Theorem~\ref{theorem:lower bound}. For
given positive integers~$n$, $m \in \sSET{ n, \ldots, n^2 }$, and $\phi \le 2^n$
let $k = \floor{\log_2 \phi} - 5 = O(n)$ and $M = \min \SET{ n,
2^{\floor{\log_2 \phi}}/4-2 } = \Theta(\min \sSET{ n, \phi })$. In the following we assume that $\phi \geq 64$, such that we have $k,M \geq 1$. If $\phi<64$, the lower bound on the number of augmentation steps from Theorem~\ref{theorem:lower bound} reduces to $\Omega(m)$ and a simple flow network like the network~$G_1$, as explained below, which we will use as initial network in case $\phi \geq 64$, with $O(n)$ nodes, $O(m)$ edges, and uniform edge costs proves the lower bound.  

We construct a flow
network with $2n+2k+2+4M = O(n)$ nodes, $m+2n+4k-4+8M = O(m)$ edges, 
and smoothing parameter~$\phi$ on
which the SSP algorithm requires $m \cdot 2^{k-1} \cdot 2M = \Theta(m \cdot
\phi \cdot \min \sSET{ n, \phi })$ augmentation steps in expectation. To be
exact, we show that for any realization of the edge costs 
for which there do not exist multiple paths with exactly the same costs (Property~\ref{property:different path lengths}) 
the SSP algorithm requires that many iterations. Since this happens with probability~1,
we will assume in the following that Property~\ref{property:different path lengths} 
holds without further mention. 

For the sake of simplicity we consider edge cost densities $f_e \colon [0, \phi]
\to [0, 1]$ instead of $f_e \colon [0, 1] \to [0, \phi]$. This is an equivalent
smoothed input model because both types of densities can be transformed into
each other by scaling by a factor of~$\phi$ and because the behavior of the SSP
algorithm is invariant under scaling of the edge costs. Furthermore, our
densities~$f_e$ will be uniform distributions on intervals~$I_e$ with lengths of
at least~$1$. In the remainder of this section we only construct these
intervals~$I_e$. Also, all minimum-cost flow networks constructed in this
section have a unique source node~$s$ and a unique sink node~$t$, which is
always clear from the context. The balance values of the nodes are defined as
$b(v) = 0$ for all nodes~$v \notin \SET{ s, t }$ and $-b(t) = b(s) = \sum_{e =
(s, v)} u_e = \sum_{e = (w, t)} u_e$, that is, each $b$-flow equals a maximum
$s$-$t$-flow.

The construction of the desired minimum-cost flow network~$G$ consists of three steps,
which we sketch below and describe in more detail thereafter. Given Property~\ref{property:different path lengths}, our choice of distributions for the edge costs ensures that the behavior of the SSP algorithm is the same for every realization of the edge costs.
\begin{enumerate}
\item In the first step we define a simple flow network~$G_1$ with a source~$s_1$ and a sink~$t_1$ on which the SSP algorithm requires~$m$ augmentation steps. 

\item In the second step we take a flow network~$G_i$, starting with $i = 1$, as the basis for constructing a larger flow network~$G_{i+1}$. We obtain the new flow network by adding a new source~$s_{i+1}$, a new sink~$t_{i+1}$, and four edges connecting the new source and sink with the old source and sink. Additionally, the latter two nodes are downgraded to ``normal'' nodes (nodes with a balance value of $0$) in~$G_{i+1}$ (see Figure~\ref{fig:step II}). By a careful choice of the new capacities and cost intervals we can ensure the following property: First, the SSP algorithm subsequently augments along all paths of the form
\[
s_{i+1} \to s_i \PATH[P] t_i \to t_{i+1} \COMMA
\]
where~$P$ is an $s_i$-$t_i$ path encountered by the SSP algorithm when run on the network~$G_i$. Then, it augments along all paths of the form
\[
  s_{i+1} \to t_i \PATH[\REVERSE{P}] s_i \to t_{i+1} \COMMA
\]
where~$P$ is again an $s_i$-$t_i$ path encountered by the SSP algorithm when run on the network~$G_i$. Hence, by adding two nodes and four edges we double the number of iterations the SSP algorithm requires. For this construction to work we have to double the maximum edge cost of our flow network. Hence, this construction can be repeated $k-1 \approx \log \phi$ times, yielding an additional factor of $2^{k-1} \approx \phi$ for the number of iterations required by the SSP algorithm.

\item In the third step we add a global source~$s$ and a global sink~$t$ to the flow network~$G_k$ constructed in the second step, and add four directed paths of length $M \approx \min \sSET{ n, \phi }$, where each contains $M$ new nodes and has exactly one node in common with $G_k$.
The first path will end in $s_k$, the second path will end in $t_k$, the third path will start in $s_k$, and the fourth path will start in $t_k$. We will also add an arc from~$s$ to every new node in the first two paths and an arc from every new node in the last two paths to~$t$ (see Figure~\ref{fig:step III}). We call the resulting flow network~$G$. By the right choice of the edge costs and capacities we will ensure that for each $s_k$-$t_k$ path~$P$ in~$G_k$ encountered by the SSP algorithm on~$G_k$ the SSP algorithm on~$G$ encounters~$M$ augmenting paths having~$P$ as a sub-path and~$M$ augmenting paths having~$\REVERSE{P}$ as a sub-path. In this way, we gain an additional factor of~$2M$ for the number of iterations of the SSP algorithm.
\end{enumerate}

In the following we say that the SSP algorithm \emph{encounters} a path~$P$ on a flow network~$G'$ if it augments along~$P$
when run on~$G'$.

\paragraph{Construction of~$G_1$.}
For the first step, consider two sets $U = \SET{ u_1, \ldots, u_n }$ and $W = \SET{ w_1, \ldots, w_n }$ of~$n$ nodes and an arbitrary set $E_{UW} \subseteq U \times W$ containing exactly $|E_{UW}| = m$ edges. The initial flow network~$G_1$ is defined as $G_1 = (V_1, E_1)$ for $V_1 = U \cup W \cup \SET{ s_1, t_1 }$ and
\[
  E_1 = (\SET{s_1} \times U) \cup E_{UW} \cup (W \times \SET{t_1}) \DOT
\]
The edges~$e$ from~$E_{UW}$ have capacity~$1$ and costs from the interval $I_e = [7, 9]$. 
The edges $(s_1,u_i), u_i \in U$ have a capacity equal to the out-degree of~$u_i$, 
the edges $(w_j,t_1), w_j \in W$ have a capacity equal to the in-degree of $w_j$ and both have costs from the interval $I_e = [0, 1$] (see Figure~\ref{fig:step I}).
(Remember that we use uniform distributions on the intervals $I_e$.)
\begin{fig}
  \GFX[width=0.5\textwidth]{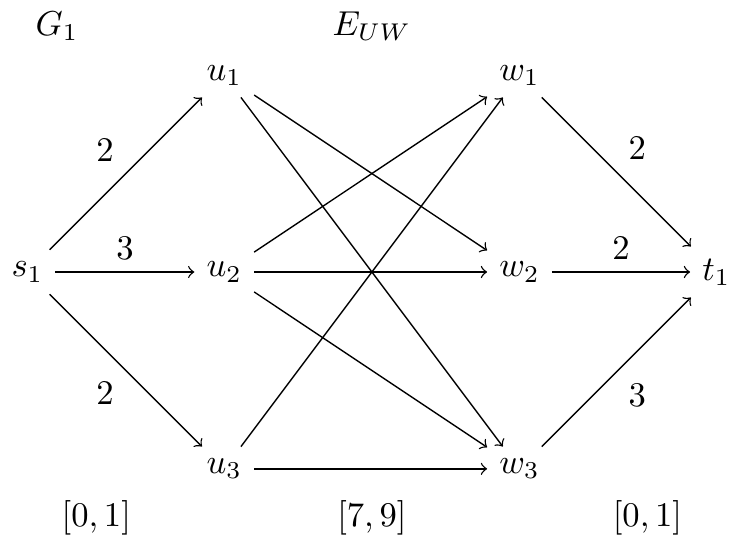}
 \caption[]
   {Example for $G_1$ with $n=3$ and $m = 7$ with capacities different from 1 shown next to the edges and the cost intervals shown below each edge set.} 
  \label{fig:step I}
\end{fig}

\begin{lemma}
\label{lemma:step I}
The SSP algorithm requires exactly~$m$ iterations on~$G_1$ to obtain a maximum $s_1$-$t_1$-flow. Furthermore all augmenting paths it
encounters have costs from the interval~$[7,11]$.
\end{lemma}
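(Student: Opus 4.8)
The plan is to prove, by induction on the iteration counter~$i$, the following invariant for the run of the SSP algorithm on~$G_1$: after~$i$ augmentations the flow~$f_i$ is integral and sends exactly one unit along each of~$i$ of the edges of~$E_{UW}$ and nothing along the remaining ones, and -- as long as $i < m$ -- the (unique, by Property~\ref{property:different path lengths}) shortest $s_1$-$t_1$ path in the residual network~$G_{f_i}$ is a \emph{direct} path of the form $s_1 \to u \to w \to t_1$ through an edge $(u,w) \in E_{UW}$ that~$f_i$ does not saturate. Granting the invariant, the lemma follows at once: a direct path has cost $c_{(s_1,u)} + c_{(u,w)} + c_{(w,t_1)} \in [0,1] + [7,9] + [0,1] = [7,11]$, so every augmenting path has cost in~$[7,11]$; moreover, along a direct path the residual capacity of the fresh middle edge equals~$1$, while the residual capacities of $(s_1,u)$ and $(w,t_1)$ are still at least~$1$ and $z - |f_i| = m - i \ge 1$, so each augmentation pushes exactly one unit of flow. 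Hence after exactly~$m$ steps $|f_m| = m = z = b(s_1)$ and the algorithm stops; since the~$m$ edges of~$E_{UW}$ form an $s_1$-$t_1$ cut of total capacity~$m$, this~$f_m$ is a maximum flow.

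The base case $i = 0$ is trivial, as $G_{f_0} = G_1$ contains only direct paths. For the inductive step I assume the invariant through step~$i$; in particular $P_1, \ldots, P_{i+1}$ are all direct, so~$f_{i+1}$ again has the canonical form described above, and I must show that if $i+1 < m$ the shortest path in~$G_{f_{i+1}}$ is direct. Using the canonical form one first checks that every $s_1$-$t_1$ path in~$G_{f_{i+1}}$ has the alternating shape $s_1, u_{i_1}, w_{j_1}, u_{i_2}, w_{j_2}, \ldots, u_{i_r}, w_{j_r}, t_1$: from~$s_1$ one can only leave along a forward arc $(s_1,u)$; from a $U$-node the only continuation in a simple path is a forward middle arc (the only other outgoing arc leads back to~$s_1$); and from a $W$-node the path either ends via $(w,t_1)$ or takes a backward middle arc to a $U$-node. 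Direct paths are exactly the case $r = 1$. Assume for contradiction that the shortest path~$P$ has $r \ge 2$, and compare it with the direct path $Q = s_1 \to u_{i_r} \to w_{j_r} \to t_1$; its three arcs all lie in~$G_{f_{i+1}}$, because $(u_{i_r}, w_{j_r})$ is a forward arc of~$P$ and hence unsaturated, which in turn forces $(s_1,u_{i_r})$ and $(w_{j_r},t_1)$ to be unsaturated.

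The heart of the argument -- and the step I expect to be the main obstacle -- is to show $c(Q) < c(P)$. Each backward middle arc $(w_{j_k}, u_{i_{k+1}})$ of~$P$ witnesses that the edge $(u_{i_{k+1}}, w_{j_k})$ is saturated in~$f_{i+1}$; by the induction hypothesis it became saturated in some earlier step~$\tau_k$ whose augmenting path, being direct and using that edge, equals $s_1 \to u_{i_{k+1}} \to w_{j_k} \to t_1$ and was the shortest $s_1$-$t_1$ path in~$G_{f_{\tau_k - 1}}$. Since each direct augmentation only saturates a fresh middle edge, the flow on every middle edge and on every arc $(s_1, u_{i_k})$, $(w_{j_k}, t_1)$ is non-decreasing along the run, so -- using also that $(u_{i_k}, w_{j_k})$ is unsaturated in~$f_{i+1}$ -- the rival direct path $s_1 \to u_{i_k} \to w_{j_k} \to t_1$ was already present in~$G_{f_{\tau_k - 1}}$ and was therefore strictly longer (Property~\ref{property:different path lengths}); this yields $c_{(s_1,u_{i_{k+1}})} + c_{(u_{i_{k+1}}, w_{j_k})} < c_{(s_1,u_{i_k})} + c_{(u_{i_k}, w_{j_k})}$ for every $k = 1, \ldots, r-1$. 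Summing these $r-1$ inequalities, the terms involving~$s_1$ telescope to $c_{(s_1,u_{i_r})} - c_{(s_1,u_{i_1})}$, and inserting this into $c(P) = c_{(s_1,u_{i_1})} + c_{(w_{j_r},t_1)} + \sum_{k=1}^{r} c_{(u_{i_k},w_{j_k})} - \sum_{k=1}^{r-1} c_{(u_{i_{k+1}}, w_{j_k})}$ gives $c(P) > c(Q)$, contradicting the choice of~$P$. Hence $r = 1$, which completes the induction and the proof.
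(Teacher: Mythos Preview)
Your proof is correct, but it takes a genuinely different route from the paper's. The paper's argument is a one-liner modulo Lemma~\ref{lemma:distance monotonicity} (monotonicity of distances): every augmenting path must start $s_1 \to u \to w$, and since the capacity of $(w,t_1)$ equals the in-degree of~$w$, the arc $(w,t_1)$ is available whenever any forward middle arc into~$w$ is; hence if the SSP algorithm declined to use $(w,t_1)$ at that point, the $w$-to-$t_1$ distance in the current residual network would be strictly smaller than its initial value~$c_{(w,t_1)}$, contradicting Lemma~\ref{lemma:distance monotonicity}. Your argument, by contrast, never invokes distance monotonicity; instead you maintain an explicit structural invariant by induction and defeat a hypothetical non-direct shortest path via a telescoping comparison with earlier direct augmentations. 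This is more laborious but fully self-contained, and the telescoping step is a nice touch. One small point you leave implicit: the implication ``$(u_{i_k},w_{j_k})$ unsaturated in~$f_{i+1}$ $\Rightarrow$ $(s_1,u_{i_k})$ unsaturated in~$f_{i+1}$'' relies on the canonical form of~$f_{i+1}$ together with the fact that the capacity of $(s_1,u)$ equals the out-degree of~$u$; it would help to say this explicitly. With that clarification the argument is complete.
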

\begin{proof}
First we observe that the SSP algorithm augments only along paths that are of the form~$s_1\to u_i\to w_j\to t_1$ for some $u_i\in U$ and $w_j \in W$:
Consider an arbitrary augmenting path~$P$ the SSP algorithm encounters and assume for contradiction that~$P$ is not of this form.
Due to the structure of~$G_1$, the first two edges of~$P$ are of the form $(s_1,u_i)$ and $(u_i,w_j)$ for some $u_i\in U$ 
and $w_j \in W$. The choice of the capacities ensures that the edge~$(w_j,t_1)$ cannot be fully saturated if the edge~$(u_i,w_j)$ is not.
Hence, when the SSP algorithm augments along~$P$, the edge~$(w_j,t_1)$ is available in the residual network.
Since this edge is not used by the SSP algorithm, the sub-path~$w_j \PATH[P] t_1$ has smaller costs than the edge~$(w_j,t_1)$.
This means that the distance of~$w_j$ to the sink~$t_1$ in the current residual network is smaller than in the initial residual network
for the zero flow. This contradicts Lemma~\ref{lemma:distance monotonicity}.

Since every path the SSP algorithm encounters on~$G_1$ is of the form~$s_1\to u_i\to w_j\to t_1$, every such path consists of two edges with
costs from the interval~$[0,1]$ and one edge with costs from the interval~$[7,9]$. This implies that the total costs of any such path 
lie in the interval~$[7,11]$.

The choice of capacities ensures that on every augmenting path of the form~$s_1\to u_i\to w_j\to t_1$ the edge~$(u_i,w_j)$ is a bottleneck
and becomes saturated by the augmentation. As flow is never removed from this edge again, there is a one-to-one correspondence between
the paths the SSP algorithm encounters on~$G_1$ and the edges from~$E_{UW}$. This implies that the SSP algorithm encounters
exactly~$m$ paths on~$G_1$. 
\end{proof}

\paragraph{\boldmath Construction of~$G_{i+1}$ from~$G_i$.}
Now we describe the second step of our construction more formally. Given a flow network $G_i = (V_i, E_i)$ with a source~$s_i$ and a sink~$t_i$, we define $G_{i+1} = (V_{i+1}, E_{i+1})$, where $V_{i+1} = V_i \cup \SET{ s_{i+1}, t_{i+1} }$ and
\[
  E_{i+1} = E_{i} \cup (\SET{s_{i+1}} \times \SET{ s_i, t_i }) \cup (\SET{ s_i, t_i } \times \SET{t_{i+1}}) \DOT
\]
Let~$N_i = 2^{i-1}\cdot m$, which is the value of the maximum $s_i$-$t_i$ flow in~$G_i$. The new edges $e \in \SET{ (s_{i+1}, s_i), (t_i, t_{i+1}) }$ have capacity $u_e = N_i$ and costs from the interval $I_e = [0, 1]$. The new edges $e \in \SET{ (s_{i+1}, t_i), (s_i, t_{i+1}) }$ also have capacity $u_e = N_i$, but costs from the interval $I_e = [2^{i+3}-1, 2^{i+3}+1]$ (see Figure~\ref{fig:step II}).

\begin{fig}
  \GFX[width=0.5\textwidth]{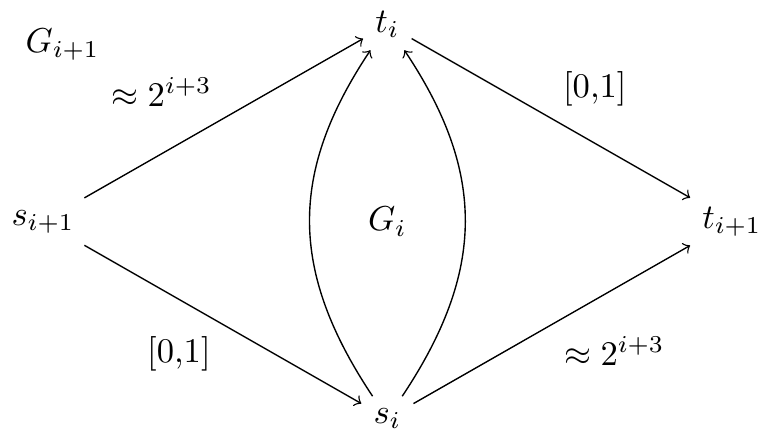}
 \caption[]
   {$G_{i+1}$ with $G_i$ as sub-graph with edge costs next to the edges.}
  \label{fig:step II}
\end{fig}

Next we analyze how many iterations the SSP algorithm requires to reach a maximum $s_{i+1}$-$t_{i+1}$ flow in~$G_{i+1}$ when run on the network~$G_{i+1}$.
Before we can start with this analysis, we prove the following property of the SSP algorithm.
\begin{lemma}
\label{lemma: shortest backward path} 
After augmenting flow via a cheapest $v$-$w$-path $P$ in a network without a cycle with negative total costs, $\stackrel{\leftarrow}{P}$ is a cheapest $w$-$v$-path. 
\end{lemma}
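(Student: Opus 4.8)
The plan is to argue by contradiction using the optimality of the flow after augmentation and the absence of negative cycles. Let $f$ be the flow before augmentation and let $f'$ be the flow obtained after augmenting along the cheapest $v$-$w$-path $P$ in the residual network $G_f$. First I would observe that, since $P$ is a shortest $v$-$w$-path in $G_f$ and $G_f$ contains no negative-cost cycle, $G_{f'}$ also contains no negative-cost cycle; this is exactly the standard fact behind Theorem~\ref{thm:AllFlowsOpt} (augmenting along a shortest path preserves the "no negative cycle" invariant, equivalently preserves optimality of the flow for its value). In particular $G_{f'}$ admits a well-defined shortest-path structure.

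Next I would consider the reverse path $\REVERSE{P}$ in $G_{f'}$: every arc $e$ of $P$ becomes, after augmentation, an arc $e^{-1}$ present in $G_{f'}$ (augmenting pushes flow along $e$, so the reverse arc is available), hence $\REVERSE{P}$ is indeed a $w$-$v$-path in $G_{f'}$, and its cost is $-c(P)$ since each arc cost negates. Now suppose for contradiction that $\REVERSE{P}$ is not a cheapest $w$-$v$-path in $G_{f'}$, i.e. there is a $w$-$v$-path $Q$ in $G_{f'}$ with $c(Q) < c(\REVERSE{P}) = -c(P)$, that is $c(P) + c(Q) < 0$. The concatenation of $P$ (in $G_f$) with $Q$ is not literally a path in a single residual network, so the key step is to convert this into a genuine negative-cost cycle in $G_f$.

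To do that I would use the exchange/decomposition argument: the arcs of $Q$ in $G_{f'}$ that are not reverses of arcs of $P$ are already present in $G_f$ (since on arcs untouched by the augmentation the two residual networks agree), while each arc of $Q$ that is a reverse of a $P$-arc corresponds to an arc of $P$ traversed in the opposite direction. Viewing $P$ and $Q$ together as an Eulerian structure on the underlying edge multiset and cancelling opposite traversals, the symmetric difference decomposes into a collection of cycles in $G_f$ whose total cost equals $c(P) + c(Q) < 0$; hence at least one of these is a negative-cost cycle available in $G_f$. This contradicts the hypothesis that $G_f$ (the network before augmentation) has no negative-cost cycle. Therefore no such $Q$ exists and $\REVERSE{P}$ is a cheapest $w$-$v$-path, as claimed. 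The main obstacle is making the decomposition step clean: one must be careful that after cancelling a $P$-arc against its reverse in $Q$ the remaining arcs really do lie in $G_f$ and really do form closed walks, so that the cost bookkeeping $c(P)+c(Q)$ is preserved; this is where the "no cycle of length two" convention on $G$ and the bound from Property~\ref{property:different path lengths} (distinct path costs, so the inequality is strict) are convenient.
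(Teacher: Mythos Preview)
Your proof is correct and follows essentially the same route as the paper's: assume a cheaper $w$-$v$-path~$Q$ exists, superimpose $P$ and $Q$, cancel pairs $e,e^{-1}$, observe that every remaining arc was already present in the pre-augmentation residual network $G_f$, and decompose the balanced multiset into cycles of negative total cost to reach a contradiction. The only differences are cosmetic: your opening remark that $G_{f'}$ has no negative cycle is true but never used in your argument, and the appeals to the ``no $2$-cycle'' convention and to Property~\ref{property:different path lengths} at the end are unnecessary here, since the strict inequality $c(Q)<-c(P)$ is part of the contradiction hypothesis and the cancellation/decomposition step goes through without them.
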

\begin{proof}
Since we augmented along $P$, all edges of $\stackrel{\leftarrow}{P}$ will be part of the residual network. $\stackrel{\leftarrow}{P}$ will therefore be a feasible $w$-$v$-path.
Assume that after augmenting along~$P$ there exists a $w$-$v$-path~$P'$
that is cheaper than~$\stackrel{\leftarrow}{P}$.
Let us take a look at the multi-set $X=P\cup P'$, which contains every arc~$e\in P\cap P'$ twice.
The total costs of this multi-set are negative because
\[
  c(P) + c(P') = -c(\stackrel{\leftarrow}{P}) + c(P') < 0
\] 
by the assumption that~$P'$ is cheaper than~$\stackrel{\leftarrow}{P}$.
Furthermore, for each node the number of incoming and outgoing arcs from~$X$ is the same.
This property is preserved if we delete all pairs of a forward arc~$e$ and the corresponding 
backward arc~$e^{-1}$ from~$X$, resulting in a multi-set~$X'\subseteq X$.
The total costs of the arcs in~$X'$ are negative because they equal the total costs of the arcs in~$X$.  

For every arc~$e\in X$ that did not have positive residual capacity before augmenting along~$P$, the arc~$e^{-1}$ must be part of $P$ and therefore be part of~$X$ as well.
This is due to the fact that only for arcs~$e$ with~$e^{-1} \in P$ the residual capacity increases when augmenting along $P$.
Since all such pairs of arcs are deleted, the set~$X'$ will only contain arcs that had a positive residual capacity
before augmenting along~$P$. Since each node has the same number of outgoing and
incoming arcs from~$X'$, we can partition~$X'$ into subsets, where the arcs in each subset
form a cycle. Since the total costs of all arcs are negative at
least one of these cycles has to have negative costs, which is a contradiction.
\end{proof}

Since during the execution of the SSP algorithm all residual networks have conservative costs on the arcs, Lemma~\ref{lemma: shortest backward path} always applies.
\begin{lemma}\label{lemma:CostsGi}
Let~$i\ge 1$.
All $s_i$-$t_i$-paths the SSP algorithm encounters when run on the network $G_i$ have costs from the interval~$[7, 2^{i+3}-5]$.
Furthermore the SSP algorithm encounters on  the network $G_{i+1}$ twice as many paths as on the network $G_{i}$.
\end{lemma}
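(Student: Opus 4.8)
The plan is to prove both statements simultaneously by induction on~$i$. For the base case $i = 1$, the cost interval claim is exactly the statement of Lemma~\ref{lemma:step I}: all augmenting paths on~$G_1$ have costs in $[7, 11] = [7, 2^{1+3}-5]$, and there are $m$ of them. For the inductive step, I would assume the claim for~$G_i$ and analyze the run of the SSP algorithm on~$G_{i+1}$. The key structural observation is that every $s_{i+1}$-$t_{i+1}$ path in~$G_{i+1}$ (or in any residual network reachable by the SSP algorithm) must leave~$s_{i+1}$ via one of the two arcs $(s_{i+1}, s_i)$ or $(s_{i+1}, t_i)$ and enter~$t_{i+1}$ via one of $(t_i, t_{i+1})$ or $(s_i, t_{i+1})$; in between it traverses either an $s_i$-$t_i$ path or a $t_i$-$s_i$ path inside the old network~$G_i$ (possibly using residual arcs of old edges).

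The heart of the argument is to show that the SSP algorithm on~$G_{i+1}$ behaves in two phases. In the \emph{first phase}, the cheap entry/exit arcs $(s_{i+1}, s_i)$ and $(t_i, t_{i+1})$ (costs near~$0$) are available, so the shortest $s_{i+1}$-$t_{i+1}$ paths are of the form $s_{i+1} \to s_i \PATH[P] t_i \to t_{i+1}$, where $P$ is the shortest available $s_i$-$t_i$ path in the residual network restricted to~$G_i$. Because the extra cost contributed by the two new cheap arcs lies in a tiny range compared to the separation guaranteed by Property~\ref{property:different path lengths}, the algorithm reproduces exactly the sequence of augmentations it would perform on~$G_i$ alone (using the observation quoted after Theorem~\ref{thm:AllFlowsOpt} that the choice of target flow value only affects the final augmentation, and that $u_e = N_i$ is large enough not to interfere). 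This phase lasts for exactly as many iterations as the run on~$G_i$ and saturates the arc $(s_{i+1}, s_i)$ (equivalently $(t_i, t_{i+1})$). The costs of these paths lie in $[7+0, (2^{i+3}-5)+2] = [7, 2^{i+3}-3]$, hence well below the cost $2^{i+3}-1$ of the expensive new arcs. In the \emph{second phase}, the only way to push more flow is to use an expensive arc, say $(s_{i+1}, t_i)$, traverse~$G_i$ backwards along $\REVERSE{P}$, and exit via $(s_i, t_{i+1})$. By Lemma~\ref{lemma: shortest backward path}, after the first phase each reversed path $\REVERSE{P}$ is a cheapest $t_i$-$s_i$ path in the corresponding residual network, so the second phase mirrors the first in reverse order and contributes another batch of augmentations equal in number to the first phase. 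The cost of a second-phase path is (expensive arc near $2^{i+3}$) $+$ (negated cost of some first-phase $G_i$-path, which lies in $[-(2^{i+3}-5), -7]$) $+$ (expensive arc near $2^{i+3}$), i.e. roughly $2\cdot 2^{i+3} - c(P) \in [2^{i+3}+5, 2\cdot 2^{i+3}-5] \subseteq [7, 2^{(i+1)+3}-5]$; combined with the first-phase bound $[7, 2^{i+3}-3] \subseteq [7, 2^{(i+1)+3}-5]$ this gives the claimed interval for~$G_{i+1}$. The doubling of the number of paths is then immediate: phase one has $|\F(G_i)|$ augmentations, phase two has another $|\F(G_i)|$.

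The main obstacle I anticipate is making the two-phase picture rigorous: one must verify that the cheap new arcs are never exhausted prematurely (the capacities $N_i = 2^{i-1} m$ are chosen exactly to equal the max flow of~$G_i$, so they run out precisely when phase one ends), that no "mixed" path — entering cheaply but leaving expensively, or zig-zagging between $s_i$ and $t_i$ multiple times — is ever shorter than the intended path, and that Lemma~\ref{lemma:distance monotonicity} together with Property~\ref{property:different path lengths} forces the algorithm to exactly replay the~$G_i$ run (and then its reverse) rather than interleaving the two phases. Handling the residual arcs of the old edges during the second phase, and checking that augmenting backwards through~$G_i$ genuinely "undoes" the old flow in the reverse order dictated by Lemma~\ref{lemma: shortest backward path}, is where the bookkeeping is heaviest; I would isolate this as a separate claim about the residual network of~$G_i$ after a full run. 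A careful choice of the numerical gaps — the cost intervals $[7,9]$, $[0,1]$, $[2^{i+3}-1, 2^{i+3}+1]$ were picked so that the "expensive" threshold strictly exceeds every first-phase path cost and the reversed-path costs stay positive — is what ultimately closes the argument, and I would double-check these inequalities explicitly at the end.
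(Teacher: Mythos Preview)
Your proposal is correct and follows essentially the same approach as the paper: induction on~$i$ with base case Lemma~\ref{lemma:step I}, a classification of $s_{i+1}$-$t_{i+1}$ paths into the ``cheap--forward--cheap'' paths (your phase one, the paper's type-1), the ``mixed'' two-edge paths (the paper's type-2, which you correctly flag must be ruled out), and the ``expensive--backward--expensive'' paths (your phase two, the paper's type-3), with Lemma~\ref{lemma:distance monotonicity} forcing the phase separation and Lemma~\ref{lemma: shortest backward path} driving the reverse replay. One small remark: your appeal to Property~\ref{property:different path lengths} for why phase one exactly reproduces the $G_i$ run is slightly off-target---the real reason is simply that every phase-one path uses the \emph{same} two new arcs $(s_{i+1},s_i)$ and $(t_i,t_{i+1})$, so the added cost is a common constant and the relative ordering of the interior $s_i$-$t_i$ paths is unchanged; and your lower endpoint for the phase-two cost interval should be $2^{i+3}+3$ rather than $2^{i+3}+5$, but this does not affect the conclusion.
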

\begin{proof}
We prove the first half of the lemma by induction over~$i$.
In accordance with Lemma~\ref{lemma:step I}, all paths the SSP algorithm encounters on~$G_1$ have costs from the interval~$[7,11] = [7,2^4-5]$.

Now assume that all paths the SSP algorithm encounters in~$G_{i}$, for some~$i\ge 1$, have costs from the interval~$[7, 2^{i+3}-5]$.
We distinguish between three different kinds of $s_{i+1}$-$t_{i+1}$-paths in $G_{i+1}$.
\begin{definition}
We classify the possible $s_{i+1}$-$t_{i+1}$-paths~$P$ in~$G_{i+1}$ as follows.
\begin{enumerate}\setlength{\itemsep}{0em}
  \item If $P = s_{i+1} \to s_i \PATH t_i \to t_{i+1}$, then~$P$ is called a \emph{type-1-path}.
  \item If $P = s_{i+1} \to s_i \to t_{i+1}$ or $P = s_{i+1} \to t_i \to t_{i+1}$, then~$P$ is called a \emph{type-2-path}.
  \item If $P = s_{i+1} \to t_i \PATH s_i \to t_{i+1}$, then~$P$ is called a \emph{type-3-path}.
\end{enumerate}
\end{definition}
For any type-2-path $P$ we have 
\[
  c(P) \in [0 + (2^{i+3} - 1), 1+ (2^{i+3} + 1)] = [2^{i+3} - 1, 2^{i+3} + 2]
  \subseteq [7, 2^{i+4}-5]\DOT
\]

Since due to Lemma~\ref{lemma:distance monotonicity} the distance from $t_{i}$
to $t_{i+1}$ does not decrease during the run of the SSP algorithm, the SSP
algorithm will only augment along a type-3-path~$P$ once the
edge~$(t_{i},t_{i+1})$ is saturated. Otherwise the
$t_{i}$-$t_{i+1}$-sub-path of $P$ could be replaced by the edge $(t_{i},t_{i+1})$
to create a cheaper path. Once the edge $(t_{i},t_{i+1})$ has been saturated,
the SSP algorithm cannot augment along type-1-paths anymore.
Therefore, the SSP algorithm will augment along all type-1-paths it encounters
before it augments along all type-3-paths it encounters.
 
Since during the time the SSP algorithm augments along type-1-paths no other
augmentations alter the part of the residual network corresponding to~$G_{i}$,
the corresponding sub-paths~$P'$ are paths in~$G_{i}$ that the
SSP algorithm encounters when run on the network~$G_i$.
Using the induction hypothesis, this yields that all type-1-paths the SSP algorithm encounters have costs
from the interval
\[
   [0 + 7 + 0,1 + \big(2^{i+3}-5\big) + 1] = [7, 2^{i+3}-3] \subseteq [7, 2^{i+4}-5] \DOT
\]
Since all of these type-1-paths have less costs than the two type-2-paths, the
SSP algorithm will augment along them as long as there still exists an
augmenting $s_{i}$-$t_{i}$-sub-path $P'$. Due to the choice of capacities this is
the case until both edges $(s_{i+1},s_{i})$ and $(t_{i},t_{i+1})$ are saturated.
Therefore, the SSP algorithm will not augment along any type-2-path.

When analyzing the costs of type-3-paths, we have to look at the $t_i$-$s_i$-sub-paths.
Let $\ell$ be the number of $s_i$-$t_i$-paths the SSP algorithm encounters when
run on the network $G_i$ and let $P_1,P_2,\ldots,P_{\ell}$ be the corresponding paths
in the same order, in which they were encountered.
Then Lemma~\ref{lemma: shortest backward path} yields that for any $j \in
\{1,\ldots,{\ell}\}$ after augmenting along the paths $P_1,P_2,\ldots,P_j$ the cheapest
$t_i$-$s_i$-path in the residual network is $\stackrel{\leftarrow}{P_j}$. Property~\ref{property:different path lengths} yields that it is the only cheapest path. Also
the residual network we obtain, if we then augment via
$\stackrel{\leftarrow}{P_j}$ is equal to the residual network obtained, when
only augmenting along the paths $P_1,P_2,\ldots,P_{j-1}$. Starting with $j={\ell}$ this
yields that the $t_i$-$s_i$-sub-paths corresponding to the type-3-paths the SSP
algorithm encounters are equal to
$\stackrel{\leftarrow}{P_{\ell}},\ldots,\stackrel{\leftarrow}{P_1}$.
By induction the cost of each such path $P_j$ lies in $[7, 2^{i+3}-5]$. This
yields that every type-3-path the SSP algorithm encounters has costs from the interval
\begin{align*}
& [(2^{i+3} - 1) - (2^{i+3}-5)  + (2^{i+3} - 1), (2^{i+3} + 1)- 7 + (2^{i+3} + 1)] \\
 = \,\, & [2^{i+3} + 3, 2^{i+4} - 5] \subseteq [7, 2^{i+4}-5] \DOT
\end{align*}

The previous argument also shows that the SSP algorithm encounters on~$G_{i+1}$ twice as many paths as on~$G_i$
because it encounters~$\ell$ type-1-paths, no type-2-path, and~$\ell$ type-3-paths, where~$\ell$ denotes the
number of paths the SSP algorithm encounters on~$G_i$. 
\end{proof}

Since the SSP algorithm augments along $m$ paths when run on the network $G_1$, it will augment along $2^{i-1}\cdot m$ paths when run on the network $G_i$.
Note, that at the end of the SSP algorithm, when run on $G_i$ for $i > 1$, only the $4$ arcs incident to $s_i$ and $t_i$ carry flow.

\paragraph{Construction of~$G$ from~$G_k$.}
Let $N_k = 2^{k-1}\cdot m$, which is the value of a maximum $s_k$-$t_k$ flow in $G_k$. We will now use $G_k$ to define $G=(V,E)$ as follows
(see also Figure~\ref{fig:step III}).
\begin{itemize}
\setlength{\itemsep}{0cm}
\item $V := V_k \cup A \cup B \cup C \cup D \cup \{s,t\}$, with $A := \{a_1,a_2,\dots,a_M\}$, $B := \{b_1,b_2,\dots,b_M\}$, $C := \{c_1,c_2,\dots,c_M\}$, and $D := \{d_1,d_2,\dots,d_M\}$. $E:= E_k \cup E_a \cup E_b \cup E_c \cup E_d$.\newline
\item $E_a$ contains the edges $(a_i,a_{i-1})$, $i \in \{2,\dots,M\}$, with cost interval $[2^{k+5}-1,2^{k+5}]$ and infinite capacity,
$(s,a_i)$, $i \in \{1,\dots,M\}$, with cost interval $[0,1]$ and capacity $N_k$, and $(a_1,s_k)$ with cost interval $[2^{k+4}-1,2^{k+4}]$ and infinite capacity.\newline
\item $E_b$ contains the edges $(b_i,b_{i-1})$, $i \in \{2,\dots,M\}$, with cost interval $[2^{k+5}-1,2^{k+5}]$ and infinite capacity,
$(s,b_i)$, $i \in \{1,\dots,M\}$, with cost interval $[0,1]$ and capacity $N_k$, and $(b_1,t_k)$ with cost interval $[2^{k+5}-1,2^{k+5}]$ and infinite capacity.\newline
\item $E_c$ contains the edges $(c_{i-1},c_i)$, $i \in \{2,\dots,M\}$, with cost interval $[2^{k+5}-1,2^{k+5}]$ and infinite capacity,
$(c_i,t)$, $i \in \{1,\dots,M\}$, with cost interval $[0,1]$ and capacity $N_k$, and $(s_k,c_1)$ with cost interval $[2^{k+5}-1,2^{k+5}]$ and infinite capacity.\newline
\item $E_d$ contains the edges $(d_{i-1},d_i)$, $i \in \{2,\dots,M\}$, with cost interval $[2^{k+5}-1,2^{k+5}]$ and infinite capacity,
$(d_i,t)$, $i \in \{1,\dots,m\}$, with cost interval $[0,1]$ and capacity $N_k$, and $(t_k,d_1)$ with cost interval $[2^{k+4}-1,2^{k+4}]$ and infinite capacity.
\end{itemize}

\begin{fig}
  \GFX[width=0.95\textwidth]{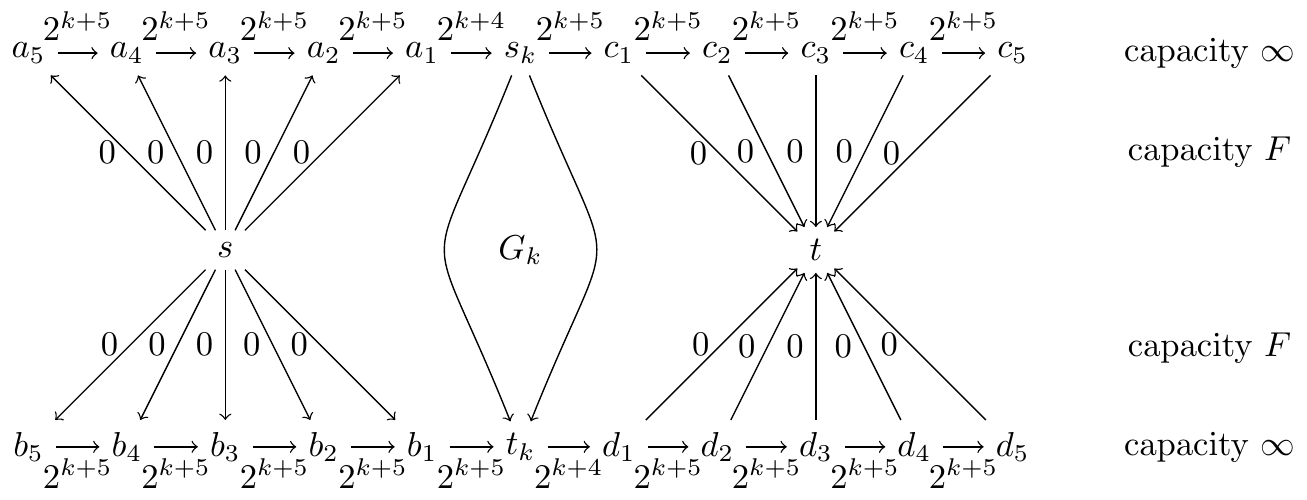}
  \caption{$G$ with $G_k$ as sub-graph with approximate edge costs on the edges. A value $c$ below an edge $e$ means the the cost of $e$ is drawn uniformly at random from the interval $[c-1,c]$.}
  \label{fig:step III}
\end{fig}

\begin{theorem}
The SSP algorithm encounters $m \cdot 2^{k-1} \cdot 2M$ paths on the network~$G$.
\end{theorem}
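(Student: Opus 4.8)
The plan is to show that on $G$ the SSP algorithm runs in $2M$ \emph{sweeps}. Write $\ell=m\cdot 2^{k-1}$ for the number of augmenting paths the SSP algorithm encounters on~$G_k$, and let $Q_1,\dots,Q_\ell$ be the corresponding $s_k$-$t_k$ paths in the order in which they are encountered (Lemma~\ref{lemma:CostsGi} and the remark after it). I would prove that, for $j=1,\dots,M$, the $(2j-1)$-st sweep (``forward sweep~$j$'') consists of the augmentations along
\[
  s\to a_j\to a_{j-1}\to\cdots\to a_1\to s_k\PATH[Q_i] t_k\to d_1\to d_2\to\cdots\to d_j\to t\COMMA
\]
in the order $i=1,\dots,\ell$, and that the $2j$-th sweep (``backward sweep~$j$'') consists of the augmentations along
\[
  s\to b_j\to\cdots\to b_1\to t_k\PATH[\REVERSE{Q_i}] s_k\to c_1\to\cdots\to c_j\to t\COMMA
\]
in the order $i=\ell,\ell-1,\dots,1$; and that the sweeps occur in the order forward~$1$, backward~$1$, forward~$2$, \dots, forward~$M$, backward~$M$. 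Since every sweep contributes $\ell$ augmenting paths, this yields $2M\ell=m\cdot 2^{k-1}\cdot 2M$ augmenting paths in total.

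The first ingredient is cost bookkeeping. Since every chain edge and every transition edge $(a_1,s_k)$, $(b_1,t_k)$, $(s_k,c_1)$, $(t_k,d_1)$ has a cost interval of width~$1$, and since the $s_k$-$t_k$ subpath of a forward-sweep path is an $s_k$-$t_k$ path encountered on~$G_k$ and hence has cost in $[7,2^{k+3}-5]$ (Lemma~\ref{lemma:CostsGi}), a forward-sweep-$j$ path has cost within $O(j)$ of $L_j+c(Q_i)$, where $L_j=(j-1)2^{k+6}+2^{k+5}$; similarly a backward-sweep-$j$ path (whose reversed subpath has cost in $[-(2^{k+3}-5),-7]$) has cost within $O(j)$ of $R_j-c(Q_i)$, where $R_j=j\cdot 2^{k+6}$. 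Using $M\le 2^{\floor{\log_2\phi}}/4-2=2^{k+3}-2$, a short calculation then shows that the cost ranges of the $2M$ sweeps are pairwise disjoint and appear in the order $L_1<R_1<L_2<R_2<\cdots<L_M<R_M$. One also has to rule out every other shape of $s$-$t$ path in a residual network of~$G$: since $G_k$ is attached to the rest of~$G$ only through $s_k$ and~$t_k$, a simple $s$-$t$ path either crosses~$G_k$ from $s_k$ to~$t_k$ (forward shape), or from $t_k$ to~$s_k$ (backward shape), or ``short-circuits'' at $s_k$ or~$t_k$ along an edge of the $c$- or $d$-chain without entering~$G_k$. In the last case the path wastes a transition edge of cost $\ge 2^{k+4}$ without collecting the small positive (forward) or negative (backward) cost of a $G_k$-subpath; I would check that this makes it more expensive than the forward or backward sweep path that the algorithm is about to use, except for a few borderline cases, in which the offending short-circuit is instead unavailable because one of the capacity-$N_k$ edges $(s,a_1)$, $(d_1,t)$, $(s,b_1)$, $(c_1,t)$ on it has already become saturated.

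The second ingredient is an induction on the sweep number together with a reduction to~$G_k$. During forward sweep~$j$, the edges $(s,a_1),\dots,(s,a_{j-1})$ and $(d_1,t),\dots,(d_{j-1},t)$ are already saturated (by the earlier forward sweeps) while $(s,a_j)$ and $(d_j,t)$ are not, so the part $s\to a_j\to\cdots\to s_k$ and $t_k\to\cdots\to d_j\to t$ of every forward-shape augmenting path is fixed, and the cost of such a path equals a constant plus the cost of its $s_k$-$t_k$ subpath in the current residual network of~$G_k$. Hence the shortest $s$-$t$ path of forward shape corresponds exactly to the shortest $s_k$-$t_k$ path in that residual network, and, as long as the forward-shape paths remain globally cheapest, the residual network of~$G_k$ evolves exactly as in a fresh run of the SSP algorithm on~$G_k$ alone. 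By Lemma~\ref{lemma:distance monotonicity} the distance from~$t_k$ to~$t$ never decreases, so this sweep does not reach for the deeper chain nodes before $(s,a_j)$ and $(d_j,t)$ are saturated, which happens exactly when $N_k$ units of flow have been pushed through~$G_k$, i.e.\ when the emulated run of the SSP algorithm on~$G_k$ terminates with its maximum flow; at that point, by the remark after Lemma~\ref{lemma:CostsGi}, only the four arcs incident to $s_k$ and~$t_k$ carry flow inside~$G_k$. The backward sweeps are handled symmetrically using Lemma~\ref{lemma: shortest backward path}: starting from that maximum flow, the cheapest $t_k$-$s_k$ path in the residual network of~$G_k$ is $\REVERSE{Q_\ell}$; augmenting along it returns~$G_k$ to the flow reached after $Q_1,\dots,Q_{\ell-1}$, and iterating shows that backward sweep~$j$ augments along $\REVERSE{Q_\ell},\dots,\REVERSE{Q_1}$ and empties~$G_k$ again, so the next forward sweep starts from the zero flow on~$G_k$. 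Combining everything gives $2M$ sweeps of $\ell$ augmenting paths each.

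The step I expect to be the main obstacle is the exhaustive argument of the second paragraph: ruling out \emph{all} spurious $s$-$t$ paths in the residual networks of~$G$ — including ones that use backward arcs inside the chains after several sweeps have been performed — and verifying that the ordering $L_1<R_1<\cdots<L_M<R_M$ survives the additive slack of order $n+M$ caused by the width-$1$ cost intervals; this is precisely where the numerical choices $k=\floor{\log_2\phi}-5$ and $M\le 2^{\floor{\log_2\phi}}/4-2$ enter. The other delicate point is formulating a clean induction invariant describing the flow and the residual network of~$G$ after each complete sweep, so that the reduction to~$G_k$ above is genuinely justified rather than circular.
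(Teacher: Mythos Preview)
Your proposal is correct and follows essentially the same approach as the paper: the paper also partitions the augmenting paths into what you call forward sweeps ($\{a_i,d_i\}$-paths) and backward sweeps ($\{b_i,c_i\}$-paths), compares their cost ranges using the same numerical bookkeeping (the paper writes $\alpha_i=2^{k+5}i-2^{k+4}-i$ and $\beta_i=2^{k+5}i-i$, matching your $L_j$ and $R_j$ up to $O(j)$), and then reduces each sweep to a run of the SSP algorithm on~$G_k$, invoking Lemma~\ref{lemma: shortest backward path} for the backward sweeps exactly as you do. The paper's own proof is in fact terser than your plan on precisely the points you flag as obstacles---ruling out spurious paths and the exact numerical separation---so your identification of these as the places requiring care is apt, but the argument goes through along the lines you outline.
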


\begin{proof}
We categorize the different $s$-$t$-paths the SSP algorithm encounters on $G$ by the node after $s$ and the node before~$t$.
Each such $s$-$t$-path can be described as an $\{a_i,c_j\}$-, $\{a_i,d_j\}$-, $\{b_i,c_j\}$-, or $\{b_i,d_j\}$-path for some $i,j \in \{1,\dots,M\}$.

All $s_k$-$t_k$-paths encountered by the SSP algorithm, when run on $G_k$, have costs from the interval~$[7, 2^{k+3}-5]$ in accordance with Lemma~\ref{lemma:CostsGi}.
For any~$i\in\{1,\ldots,m\}$, the costs of the $s$-$a_i$-$s_k$-path and the $t_k$-$d_i$-$t$-path lie
in $[\alpha_i,\alpha_i+(i+1)]$ with $\alpha_i = 2^{k+5}i - 2^{k+4} - i$ and the costs of the $s$-$b_i$-$t_k$-path and the
$s_k$-$c_i$-$t$-path lie in $[\beta_i,\beta_i+(i+1)]$ with $\beta_i = 2^{k+5}i-i$.
Furthermore $i<M+1< 2^{k+3}$.

Therefore, the SSP algorithm will
only augment along $\{a_i,c_j\}$-paths if no $\{a_i,d_j\}$-paths are available.
Also, any $\{a_i,d_i\}$-path is shorter than any $\{b_i,c_i\}$-path and any $\{b_i,c_i\}$-path is shorter than any
$\{a_{i+1},d_{i+1}\}$-path. Finally, any $\{b_i,c_j\}$-path is shorter than any $\{a_{i+1},c_j\}$-path or $\{b_i,d_{j+1}\}$-path.
Therefore, the SSP algorithm will start with augmenting along $\{a_1,d_1\}$-paths. After augmenting along $\{a_i,d_i\}$-paths it will augment
along $\{b_i,c_i\}$-paths and after augmenting along $\{b_i,c_i\}$-paths it will augment along $\{a_{i+1},d_{i+1}\}$-paths.
Due to the choice of the capacities we can see that once the SSP algorithm starts augmenting along an $\{a_i,d_i\}$-path it keeps augmenting
along $\{a_i,d_i\}$-paths until there is no $s_k$-$t_k$-path in the residual network that lies completely in the sub-network corresponding
to $G_k$. Also, once the SSP algorithm starts augmenting along an $\{b_i,c_i\}$-path it keeps augmenting along $\{b_i,c_i\}$-paths until
there is no $t_k$-$s_k$-path in the residual network that lies completely in the sub-network corresponding  to $G_k$.
After the SSP algorithm augmented along the last $\{a_i,d_i\}$-path the residual network in the sub-network corresponding to $G_k$ is equal
to the residual network of a maximum flow in $G_k$. After the SSP algorithm augmented along the last $\{b_i,c_i\}$-path the residual network
in the sub-network corresponding to $G_k$ is equal to $G_k$. We can see that the SSP algorithm augments along an $\{a_i,d_i\}$-path for every path $P$ it encounters on $G_k$ and along an
$\{b_i,c_i\}$-path for the backwards path $\stackrel{\leftarrow}{P}$ of every path $P$ it encounters on $G_k$.
Therefore, the SSP-algorithm will augment $M$ times along paths corresponding to the paths it encounters on $G_k$ and $M$ times along
paths corresponding to the backward paths of these paths and therefore augment along $2M$ times as many paths in
$G$ as in $G_k$.
\end{proof}

To show that $G$ contains $2n+2k+2+4M$ nodes and $m+2n+4k-4+8M$ edges, we
observe that $G_1$ has $2n+2$ nodes and $m+2n$ edges, the $k-1$ iterations to
create $G_k$ add a total of $2k-2$ nodes and $4k-4$ edges and the construction
of $G$ from $G_k$ adds $4M+2$ nodes and $8M$ edges. This gives a total of
$2n+2+2k-2+4M+2 = 2n+2k+2+4M$ nodes and $m+2n+4k-4+8M$ edges. Since $k, M =
O(n)$ and $m\geq n$, $G$ has $O(n)$ nodes and $O(m)$ edges and forces the SSP
algorithm to encounter $m \cdot 2^{k-1} \cdot 2M = \Omega(m \phi M)=\Omega(\phi
\cdot m \cdot \min(\phi, n))$ paths on $G$. For $\phi = \Omega(n)$ this lower
bound shows that the upper bound of $O(mn\phi)$ augmentation steps in
Theorem~\ref{maintheorem} is tight.

\section{Smoothed Analysis of the Simplex Algorithm}
\label{sec:simplex}

In this section we describe a surprising connection between our result about the SSP algorithm and the smoothed analysis of the
simplex algorithm. Spielman and Teng's original smoothed analysis~\cite{DBLP:journals/jacm/SpielmanT04} as well as 
Vershynin's~\cite{DBLP:journals/siamcomp/Vershynin09} improved analysis are based on the shadow vertex method. To describe this
pivot rule, let us consider a linear program with an objective function~$z^Tx$ and a set of constraints~$Ax\le b$. Let us
assume that a non-optimal initial
vertex~$x_0$ of the polytope~$P$ of feasible solutions is given. The shadow vertex method computes an objective function~$u^Tx$ that is
optimized by~$x_0$. Then it projects the polytope~$P$ onto the 2-dimensional plane that is spanned by the vectors~$z$ and~$u$. If we assume
for the sake of simplicity that~$P$ is bounded, then the resulting projection is a polygon~$Q$.

The crucial properties of the polygon~$Q$ are as follows: both the projection of~$x_0$ and the projection of the optimal solution~$x^*$ are vertices of~$Q$, and every edge of~$Q$ corresponds to an edge of~$P$. The shadow vertex method follows the edges of~$Q$ from the projection
of~$x_0$ to the projection of~$x^*$. The aforementioned properties guarantee that this corresponds to a feasible walk on the polytope~$P$. 

To relate the shadow vertex method and the SSP algorithm, we consider the canonical linear program for the maximum-flow problem with one
source and one sink. In this linear program, there is a variable for each edge corresponding to the flow on that edge.
The objective function, which is to be
maximized, adds the flow on all outgoing edges of the source and subtracts the flow on all incoming edges of the source. There are constraints for each edge 
ensuring that the flow is non-negative and not larger than the capacity, and there is a constraint for each node except the source and the sink ensuring
Kirchhoff's law.

The empty flow~$x_0$ is a vertex of the polytope of feasible solutions. In particular, it is a feasible solution with minimum costs. Hence, letting~$u$ be
the vector of edge costs is a valid choice in the shadow vertex method. For this choice every feasible flow~$f$ is projected to the pair~$(|f|, c(f))$. 
Theorem~\ref{thm:AllFlowsOpt} guarantees that the cost function depicted in Figure~\ref{fig:example cost function} forms the lower envelope of the  
polygon that results from projecting the set of feasible flows. There are two possibilities for the shadow vertex method for the first step: it can choose
to follow either the upper or the lower envelope of this polygon. If it decides for the lower envelope, then it will encounter exactly the same sequence of flows
as the SSP algorithm.

This means that Theorem~\ref{maintheorem} can also be interpreted as a statement about the shadow vertex method applied to the maximum-flow linear
program. It says that for this particular class of linear programs, the shadow vertex method has expected polynomial running time even if the linear program is chosen
by an adversary. It suffices to perturb the costs, which determine the projection used in the shadow vertex method. Hence, if the projection
is chosen at random, the shadow vertex method is a randomized simplex algorithm with polynomial expected running time
for any flow linear program. 

In general, we believe that it is an interesting question to study whether the strong assumption in Spielman and Teng's~\cite{DBLP:journals/jacm/SpielmanT04} and 
Vershynin's~\cite{DBLP:journals/siamcomp/Vershynin09} smoothed analysis that all coefficients in the constraints are perturbed is necessary. In particular, we find
it an interesting open question to characterize for which class of linear programs it suffices to perturb only the coefficients in the objective
function or just the projection in the shadow vertex method to obtain polynomial smoothed running time.

Two of us have studied a related question~\cite{BrunschR13}. 
We have proved that the shadow vertex method can be used
to find short paths between given vertices of a polyhedron. Here, short
means that the path length is~$O(\frac{mn^2}{\delta^2})$, where~$n$ denotes the number of
variables, $m$ denotes the number of constraints, and~$\delta$ is a parameter 
that measures the flatness of the vertices of the polyhedron.
This result is proven by a significant extension of the analysis presented in this article.

\end{document}